\documentclass[12pt]{article}%
\usepackage{amssymb}
\usepackage{amsfonts}
\usepackage[nohead]{geometry}
\usepackage[singlespacing]{setspace}
\usepackage[bottom]{footmisc}
\usepackage{indentfirst}
\usepackage{array}
\usepackage{graphicx}
\usepackage{rotating}
\usepackage{amsmath, amsthm, amssymb}
\usepackage{mathrsfs}
\usepackage{booktabs}
\usepackage[flushleft]{threeparttable}
\usepackage{float}
\usepackage{verbatim}
\usepackage{epstopdf}
\usepackage{pdflscape}

\allowdisplaybreaks[4]

\newtheorem{Theorem}{Theorem}[section]

\newtheorem{lemma}{Lemma A.\!\!}
\newtheorem{proof of lemma}{Proof of Lemma A.\!\!}
\newtheorem{Corollary}[Theorem]{Corollary}

\usepackage{multirow}
\usepackage{algorithm}
\usepackage{algorithmic}
\usepackage{bm}
\usepackage{subcaption}
\usepackage{enumitem}

\newcommand{\ind}[1]{I(#1)}

\usepackage[usenames,dvipsnames]{color}

\usepackage[round]{natbib}
\usepackage[hyperfootnotes=false]{hyperref}
\hypersetup{
	colorlinks,
	citecolor=Blue,
	linkcolor=Blue,
	urlcolor=Blue}

\title{A multiplier bootstrap score test for Tobit specification}

\author{Xuewen Li \\ School of Business, North Minzu University, Yinchuan, 750021, China
	\and Xiaojun Song \\ Guanghua School of Management, Peking University, Beijing, 100871, China
	\and Xiaogang Wang\thanks{Correspondence author. Email: \texttt{wangxg9102@163.com}.} \\ School of Mathematics and Information Science, North Minzu University, \\ Yinchuan, 750021, China
	\and Feipeng Zhang \\ School of Economics and Finance, Xi'an Jiaotong University, Xi'an, 710049, China}
	
\begin{document}

\begin{spacing}{1.2}

\maketitle

\begin{abstract}
In this paper, we propose a simple score-based specification test for the Tobit regression model, implemented through a computationally efficient multiplier bootstrap. We  test correct specification through the conditional moment restriction that the generalized Tobit residual has zero conditional mean, and construct a feasible score-based empirical process using the integrated conditional moments approach, conditioning on the estimated index so that no smoothing or tuning parameter is required. We derive a first-order uniform representation of the feasible process, showing how its limiting behavior is affected by maximum-likelihood estimation of the structural parameter, and obtain the asymptotic properties of the corresponding Kolmogorov--Smirnov statistic under the null, fixed alternatives, and a sequence of local alternatives. Since the limiting null distribution is non-pivotal, $p$-values are obtained from a multiplier bootstrap that mimics the first-order representation and therefore captures the parameter estimation uncertainty. 
Monte Carlo experiments, including an unknown-scale design, an index-visible/index-invisible power study, and comparisons with existing specification tests, confirm accurate size and competitive power across a range of designs and covariate dimensions. Two empirical applications further demonstrate the practical utility of our approach.
\vspace{2mm} \newline
\textbf{Keywords}: \textit{Multiplier bootstrap method; Parameter estimation uncertainty; Specification test; Tobit model}\\
\end{abstract}

\thispagestyle{empty}


\normalsize
\section{Introduction}
\label{s1.1}

Censored dependent variable models, i.e., regression models in which the range of the dependent variable is constrained to a subset, have broad applications in cross-sectional studies of economic behavior. For example, both women's working hours (43.2\% of observations are zero) and the number of extramarital affairs (75.0\% of observations are zero) exhibit significant left-censoring characteristics. These zero values are not the result of missing at random; rather, they reflect substantive behavioral boundaries. 
For censored data, ordinary least squares regression yields systematically biased estimates, whereas logistic regression reduces outcomes to binary form and discards all intensity information. Meanwhile, assuming a unified underlying mechanism, the two-stage regression model is both theoretically less efficient and less parsimonious than the Tobit model.

Therefore, when the data contain many boundary values, the Tobit regression model can consistently estimate the overall impact of explanatory variables on the underlying continuous behavior, balancing estimation consistency with predictive reasonableness. The unique advantage of the Tobit regression model lies in its specification of a continuous latent variable (such as ``willingness to work'' or ``extramarital affairs inclination''), which simultaneously determines both the participation decision (whether the value is greater than zero) and the intensity of the behavior (specific duration), using the same set of parameters to uniformly explain both the participation effect and the intensity effect. Owing to its accurate characterization of censored data, the Tobit model remains highly regarded in economics and statistics.

It is well known that violations of the assumptions undermine the robustness of Tobit regression estimators, leading to inconsistent parameter estimates and systematically biased predictions \cite{Nelson1981,Powell1984,Powell1986}.  It is therefore important to have diagnostic tools that can detect misspecification of the fitted Tobit model and to apply them before drawing inferential conclusions. Recent research on specification testing for Tobit regression models has primarily focused on two lines of development: local and global testing methods.

Local methods typically rely on local smoothing to identify misspecification of the regression function. 
Details of the local method can be found in \cite{Hart2013SSBM}. \cite{Song2012} proposed a specification test for the functional form of Tobit regressions using local smoothing methods, establishing the asymptotic properties of the test under both null and alternative hypotheses. \cite{Koul2014JMA} enhanced the specification test for Tobit regression models using local smoothing methods, establishing that the test statistic follows an asymptotic normal distribution under the null hypothesis, and demonstrated the test's consistency against fixed alternatives and derived its asymptotic power against local nonparametric alternatives. The effectiveness of the local method is severely compromised by high-dimensional covariates, leading to the curse of dimensionality.

Global methods test the null hypothesis by transforming conditional expectations into moment conditions and constructing test statistics as weighted averages of residuals. The residual-marked empirical process approach dates back to \citet{Stute1997}, whose construction of integrated conditional moments we follow here. \cite{Song2011JSPI} proposed a computationally efficient distribution-free test for the Tobit mean regression model, based on the Khmaladze-type transformation of a partial sum process of calibrated residuals; however, the transformation is designed for a scalar covariate, so the test is not directly applicable to high-dimensional covariates. \cite{Wang2007SPL} proposed a simple nonparametric test for checking the nonlinearity in the Tobit regression model using least absolute deviation estimation. \cite{Sun2021JKSS} constructed a Cram\'{e}r--von Mises-type specification test statistic to check the adequacy of Tobit regression models, deriving the asymptotic properties of the test under both null and alternative hypotheses. \cite{Acerenza2025arXiv} investigated the identification assumptions in Tobit regression models and established sharp testable implications capable of detecting all possible observable violations of the identifying conditions. \cite{Dominguez2020} established a general framework for model specification testing through conditional moment restrictions, deriving the asymptotic properties of the proposed tests. The global method, which uses moment conditions based on residuals, 
is free of user-chosen tuning parameters and achieves a faster convergence rate. However, the asymptotic behavior and the implementation of the test statistic depend on the consistent estimator $\hat\theta_n$ of the true nuisance parameter $\theta_0$, giving rise to the so-called ``parameter estimation uncertainty'' (PEU, hereafter). 

Although the multiplier bootstrap method, which mimics the first-order uniform representation of the empirical process to address the PEU issue, has been applied to specification tests, its application to score-based empirical processes remains relatively limited. \cite{Sun2018CSDA} developed an integrated testing procedure for right-censored parametric models by incorporating a projection-based empirical process. \cite{Song2019JOE} introduced nonparametric diagnostic tools to evaluate the asymptotic validity of treatment effect estimators that rely on correct propensity score specification, and developed corresponding specification tests. \cite{Sun2021Sinica} developed a projection-based testing framework for partially linear models, established the test's consistency, demonstrated its capability to detect Pitman-local alternatives, and derived the asymptotic distributions of the test statistic under the null hypothesis, local alternatives, and global alternative hypotheses. \cite{escanciano2026specification} developed a nonparametric testing framework to assess model specification correctness under conditional moment restrictions, constructing test statistics that exhibit asymptotic invariance to the estimation methods used for nuisance parameters. Recently, empirical process-type specification tests implemented with multiplier bootstrap procedures have attracted increasing attention because they alleviate the curse of dimensionality, capture the uncertainty associated with preliminary parameter estimates, and offer strong power. To the best of our knowledge, research on specification testing for Tobit regression using score-based empirical process and the multiplier bootstrap method to mimic its behavior remains quite limited.

Our work is also related to the classical literature on conditional moment and score tests \citep{Newey1985,Tauchen1985,Bierens1990,Donald2003}, to the residual-marked empirical process tests \citep{Stute1997,Stute1998,Delgado2001,khmaladze1981}, and to the multiplier (wild) bootstrap literature \citep{Chernozhukov2013,Chernozhukov2017}. Relative to these papers, the contribution here is threefold: we construct a score-based (generalized-residual) process for the Tobit model, reduce the conditioning to a one-dimensional index so that no smoothing or bandwidth selection is required, and develop a multiplier bootstrap that reproduces the full first-order representation, including the parameter-estimation-uncertainty correction, with a complete asymptotic theory under the null, fixed and local alternatives. Compared with local-smoothing tests, the proposed procedure involves no bandwidth selection and is not subject to the curse of dimensionality; compared with existing integrated-moment tests, it conditions on the estimated index, so that the statistic is a one-dimensional empirical process whose multiplier bootstrap provably reproduces the parameter-estimation-uncertainty correction. We stress that the restriction being tested is a necessary but not sufficient condition for correct specification, so that non-rejection is evidence of no index-visible misspecification rather than proof that the Tobit model is correctly specified. 

Specifically, we develop a score-based empirical process from a conditional moment restriction on the generalized residual and establish its asymptotic properties, as well as those of the corresponding test statistic, under the null, the alternative, and a sequence of local alternatives. We construct a multiplier bootstrap procedure that mimics the first-order uniform representation of the feasible process to compute $p$-values for the test statistic, thereby successfully addressing the presence of PEU. Numerical simulations and empirical applications demonstrate that the proposed score-based specification test captures the PEU, maintains accurate size, and delivers competitive power, including in designs with moderately many covariates.

The rest of this paper is organized as follows. Section 2 introduces the testing framework and the score-based empirical process, and Section 3 derives the asymptotic properties of the corresponding test statistic under the null, fixed and local alternatives, together with the multiplier bootstrap and its unknown-scale extension. Section 4 reports an extensive simulation study, Section 5 illustrates the empirical relevance of the tests through two real-data case studies, and Section 6 concludes. All technical proofs are collected in Appendix A, and the replication code and data are provided as supplementary material.

\section{The Score-based Specification Test}

\subsection{Testing Framework}

To analyze the relationship between the limited dependent variables $Y_i = \max(0,Y_i^*)$ and the covariates $\bm{X}_i$, it is popular for applied researchers to use the linear Tobit regression model:
\begin{align}\label{tobit_regression}
Y_i^* &= \bm{X}_i ^\top \bm\beta + e_i, 
\end{align}
where $Y_i^*$ is a latent scalar response, $\bm{X}_i$ is a $p$-dimensional covariate, and $e_i$ is a random error satisfying $e_i \sim N(0, \sigma^2)$, the normal distribution with mean 0 and variance $\sigma^2$. The latent response $Y_i^*$ can only be observed when it is larger than 0. In \eqref{tobit_regression}, $\bm\beta$ and $\sigma^2$ are model parameters. Henceforth, we concentrate on estimating the structural parameter $\bm\beta$. For theoretical development, we fix the scale by unit-variance normalization $\sigma=1$, which sets the scale of the index $\bm X_i^\top\bm\beta$; this normalization is without loss of generality only when the scale is known. The two-parameter version (with both $\bm\beta$ and $\sigma$ unknown) is developed in Section \ref{subsec:two-parameter}, where we also explain precisely what the $\sigma=1$ test does and does not test, and is used in the empirical applications of Section \ref{s:EmpApp}.

However, in empirical analysis, the Tobit regression model is generally adopted for convenience rather than on theoretical grounds. Statistical inference based on the assumed but potentially misspecified model can lead to misleading conclusions. 
It is therefore important to have diagnostic tools that can detect violations of the model's distributional and functional-form assumptions. As such, this paper aims to test whether the assumed linear functional form $\bm{X}_i^\top \bm{\beta}$ in the Tobit regression model \eqref{tobit_regression} is correctly specified. 

To proceed, let $W_i =(Y_i,\bm{X}_i^\top)^\top$ and $d_i = I(Y_i>0)$. Define the inverse Mills ratio $\lambda(u) =  \phi(u)/\Phi(u)$, where $\phi(\cdot)$ and $\Phi(\cdot)$ denote the probability density function and the cumulative distribution function of the standard normal random variable, respectively. Under the null hypothesis of correct specification with parameter $\bm\beta_0$ (and $\sigma=1$), $d_i = I(Y_i^*>0) = I(e_i > -\bm{X}_i^\top \bm\beta_0)$, and since $e_i$ is standard normal and independent of $\bm X_i$,
\begin{align}\label{eqn:d-mean}
	\mathbb E[d_i|\bm X_i] = \mathbb P(e_i > -\bm X_i^\top\bm\beta_0|\bm X_i) = 1-\Phi(-\bm X_i^\top\bm\beta_0) = \Phi(\bm X_i^\top\bm\beta_0),
\end{align}
so that, equivalently, $1-\mathbb E[d_i|\bm X_i] = \Phi(-\bm X_i^\top\bm\beta_0)$. This property is used repeatedly below; in particular, it is the key to the definition of the quantities $G(t,\bm\beta_0)$ and $H(\bm\beta_0)$ defined in Section 2.2 and of their sample analogs. 

Let 
\[u(\bm{W}_i,\bm\beta_0) = d_i (Y_i-\bm{X}_i^\top \bm\beta_0) - (1-d_i)\lambda(-\bm{X}_i^\top \bm\beta_0)\]
denote the score function. The motivation for this quantity is twofold. First, it is the scalar component of the score of the Tobit log-likelihood: the individual log-likelihood contribution is
\[\ell_i(\bm\beta) = d_i\left[-\frac12\left(Y_i-\bm X_i^\top\bm\beta\right)^2\right] + (1-d_i)\ln\Phi(-\bm X_i^\top\bm\beta),\]
and differentiating with respect to the linear predictor gives $\partial\ell_i(\bm\beta)/\partial(\bm X_i^\top\bm\beta) = u(\bm W_i,\bm\beta)$, so that the score vector is $\nabla_{\bm\beta}\ell_i(\bm\beta) = \bm X_i\,u(\bm W_i,\bm\beta)$; hence the name score function. Second, and more intuitively, $u(\bm W_i,\bm\beta_0)$ is the natural generalized residual of the Tobit model: for an uncensored observation ($d_i=1$) it reduces to the usual regression residual $Y_i-\bm X_i^\top\bm\beta_0$, whereas for a censored observation ($d_i=0$), for which only $Y_i^*\le0$ is observed, it equals $-\lambda(-\bm X_i^\top\bm\beta_0)$: this is precisely the conditional mean of the latent error given censoring,
\[\mathbb E\left[e_i\,\big|\,e_i\le -\bm X_i^\top\bm\beta_0,\ \bm X_i\right] = -\frac{\phi(-\bm X_i^\top\bm\beta_0)}{\Phi(-\bm X_i^\top\bm\beta_0)} = -\lambda(-\bm X_i^\top\bm\beta_0),\]
so that each censored observation is assigned the expected value of its unobserved error, the role that the residual plays in uncensored regressions. 

Under the correct specification of the linear functional form $\bm{X}_i^\top \bm{\beta}_0$ for some $\bm{\beta}_0\in \mathcal{B}$, where $\mathcal{B}$ is the parameter space, assumed henceforth to be a compact subset of $\mathbb{R}^p$, the generalized residual $u(\bm{W}_i,\bm\beta_0)$ 
satisfies the following conditional moment restriction almost surely (a.s.): 
\begin{align}\label{eqn:SpecTestP}
H_0: \mathbb E [u(\bm{W}_i,\bm\beta_0)|\bm{X}_i] = 0 \quad \text{ a.s. for some}\  \bm\beta_0 \in \mathcal{B},
\end{align}
which is the null hypothesis of interest in this paper. 
Indeed, correct specification of $\bm{X}_i^\top \bm{\beta}_0$ implies the conditional moment restriction \eqref{eqn:SpecTestP}, because
\begin{align*}
   & \mathbb E [u(\bm{W}_i,\bm\beta_0)|\bm X_i]\notag\\
    =&\mathbb E  \left[d_i (Y_i - \bm{X}_i^\top \beta_0)  - (1-d_i)\lambda(-\bm{X}_i^\top \bm\beta_0) |\bm X_i\right] \notag\\
    =& \mathbb E[I(e_i > - \bm{X}_i^\top \bm\beta_0) e_i|\bm X_i] - [1-\mathbb{E} (d_i|\bm X_i)]
     \frac{\phi(- \bm{X}_i^\top \bm\beta_0)}{\Phi(- \bm{X}_i^\top \bm\beta_0)} \notag\\
    =& \int_{- \bm X_i^\top \bm\beta_0}^{\infty} s \phi(s) \text{d}s -  (1-\Phi(\bm{X}_i^\top \bm\beta_0)) \frac{\phi(- \bm X_i^\top \bm\beta_0)}{1-\Phi(\bm{X}_i^\top \bm\beta_0)}\notag\\ 
    =& \phi(- \bm{X}_i^\top \bm\beta_0) -  
     \phi(- \bm{X}_i^\top \bm\beta_0) = 0 \quad \text{a.s.}
\end{align*}
The alternative hypothesis $H_1$ is the negation of the null hypothesis $H_0$ in \eqref{eqn:SpecTestP}, namely,
\begin{align}\label{alter}
	H_1 : \mathbb P\{\mathbb E[u(\bm{W}_i,\bm\beta)|\bm{X}_i]\neq 0\}>0 \text{ for all}\  \bm\beta \in \mathcal{B}.
\end{align}

Since conditioning on a coarser variable preserves the restriction, in the same spirit of \cite{stute2002model}, the conditional moment restriction in \eqref{eqn:SpecTestP} implies 
\[\mathbb E [u(\bm{W}_i,\bm\beta_0)|\bm{X}_i^\top \bm\beta_0] = 0 \quad \text{ a.s. for some}\  \bm\beta_0 \in \mathcal{B},\]
reducing the $p$-dimensional $\bm{X}_i$ conditioning variable to the one-dimensional $\bm{X}_i^\top \bm\beta_0$ conditioning variable. Our score-based specification test aims to assess the above dimension-reducing conditional moment restriction. The asymptotic distributional results of Section 3 are derived under correct specification with parameter $\bm\beta_0$, which implies $H_0$; the hypotheses $H_0$ and $H_1$ describe the restriction that the test assesses and the class of alternatives it can detect, as summarized in Remark~1. 

As mentioned in Section \ref{s1.1}, the local test method requires smoothing the data, which often compromises the accuracy of the fit and is sensitive to the selection of the smoothing parameter. Inspired by the integrated conditional moment approach introduced in \cite{BIERENS1982105} and developed by many others, including \cite{Escanciano2006ET}, \cite{Dominguez2020}, and \cite{escanciano2026specification}, we avoid the local smoothing techniques by transforming the dimension-reducing conditional moment restriction 
into an infinite number of unconditional moment restrictions and checking all of them 
simultaneously, i.e.,
\begin{align*}
	 \mathbb E[u(\bm{W}_i,\bm\beta_0)|\bm{X}_i^\top \bm\beta_0] = 0 \text{ a.s.} \quad \Leftrightarrow \quad
		 \mathbb E[u(\bm{W}_i,\bm\beta_0)I(\bm{X}_i^\top \bm\beta_0 \leq t)] = 0\text{ for all } t\in\mathbb R,
\end{align*}
where $I(\cdot)$ is the indicator function. Our specification test is then constructed using a feasible sample version of the score-type dimension-reducing unconditional moment restrictions, given in the next subsection.

{\bf Remark 1:} Because our test conditions on the one-dimensional index $\bm{X}_i^\top \bm\beta_0$ rather than the full variable $\bm{X}_i$, it is consistent against the alternatives satisfying $\mathbb E[u(\bm{W}_i,\bm\beta) I(\bm{X}_i^\top \bm\beta \leq t)]\neq 0  \text{ for all}\  \bm\beta \in \mathcal{B} \text{ and for some } t\in \mathbb{R}$, but it has no power against misspecified models whose conditional-mean deviation $g_{\bm\beta}(\bm X)=\mathbb E[u(\bm W,\bm\beta)|\bm X]$ is orthogonal to the index, that is, $\mathbb E[g_{\bm\beta}(\bm X)|\bm X^\top\bm\beta]=0$ a.s.\ for all $\bm\beta\in\mathcal B$. For instance, if $\bm\beta_0=(1,0)^\top$ and $X_1$ is independent of $X_2$, a deviation depending only on $X_2$ and with zero mean is undetectable at any rate. Section \ref{sec:indexinvisible} illustrates this limitation numerically. 




\subsection{The Score-based Specification Test}

If the true parameter $\bm\beta_0$ is known, it is natural to test the null hypothesis $H_0$ in \eqref{eqn:SpecTestP} based on the (infeasible) score-based empirical process
\begin{align}\label{eqn:infeasible-process}
	U_n(t,\bm\beta_0) = \frac{1}{\sqrt{n}} \sum_{i=1}^n u(\bm{W}_i,\bm\beta_0) I(\bm{X}_i^\top \bm\beta_0 \leq t),
\end{align}
whose asymptotic null behavior is readily derived. However, $\bm\beta_0$ is usually unknown in practice and is typically estimated from the available sample via the maximum likelihood method. 
Let $\hat{\bm\beta}$ denote the maximum likelihood estimator for $\bm\beta_0$. When $\bm\beta_0$ is replaced by the $\sqrt n$-consistent estimator $\hat{\bm\beta}$, the validity of \eqref{eqn:SpecTestP} should be measured by the distance of the feasible score-based empirical process
\begin{align}\label{eqn_feasible_stats}
	U_n(t,\hat{\bm\beta}) = \frac{1}{\sqrt{n}} \sum_{i=1}^n u(\bm{W}_i,\hat{\bm\beta}) I(\bm{X}_i^\top \hat{\bm\beta} \leq t)
\end{align}
from zero, whose asymptotic null behavior is, however, generally different from that of $U_n(t,\bm\beta_0)$ due to using the estimator $\hat{\bm\beta}$.

With the help of $U_n(t,\hat{\bm\beta})$, we then reject the null hypothesis $H_0$ for ``large'' values of suitable test statistics, which gauge the distance of $U_n(t,\hat{\bm\beta})$ in \eqref{eqn_feasible_stats} from zero. For example, we can consider the following Kolmogorov--Smirnov (KS)-type test statistic:
\begin{align}\label{eqn:KSn-stat}
	KS_n = \max_{1\leq j \leq n} |U_n(\bm{X}_j^\top \hat{\bm\beta},\hat{\bm\beta})|.
\end{align}
The feasible score-based empirical process $U_n(t,\hat{\bm\beta})$ in \eqref{eqn_feasible_stats} and the corresponding KS statistic \eqref{eqn:KSn-stat} are the main objects of our interest. The only inference problem is the estimation of the asymptotic null distribution of $KS_n$, which, as explained above, must account for the fact that $\hat{\bm\beta}$ is random and introduces an additional layer of estimation uncertainty. Note that the Cram\'{e}r--von Mises (CvM)-type test statistic 
$$CvM_n =\frac{1}{n}\sum_{j=1}^n [U_n(\bm{X}_j^\top \hat{\bm\beta},\hat{\bm\beta})]^2$$ can also be considered, whose asymptotic null distribution can be analyzed similarly; we focus on the $KS_n$ statistic for simplicity.


Given the Assumptions in Appendix A and under correct specification with parameter $\bm\beta_0$, where the two population forms of $G(t,\bm\beta_0)$ coincide, we have the following uniform decomposition:
\begin{align}\label{eqn_Unjieshi}
	U_n(t,\hat{\bm\beta}) = U_n(t,\bm\beta_0) 
		- G^\top(t,\bm\beta_0)\sqrt{n} (\hat{\bm\beta}-\bm\beta_0) +o_p(1),
\end{align}
uniformly in $t\in \mathbb{R}$, where 
\begin{align*}
G(t,\bm\beta_0) = \mathbb E\left[(d-(1-d)\dot{\lambda}(-\bm{X}^\top \bm\beta_0)) \bm{X}  I(\bm{X}^\top \bm\beta_0 \leq t)\right],
\end{align*}
with $\dot{\lambda}(x) =  -x\lambda(x) -\lambda^2(x)$, and, since $\hat{\bm\beta}$ is the score-based maximum likelihood estimator, the influence expansion
\begin{align*}
\sqrt{n} (\hat{\bm\beta}-\bm\beta_0)=H^{-1}(\bm{\beta}_0)\frac{1}{\sqrt{n}} \sum_{i=1}^n u(\bm{W}_i,\bm{\beta}_0)\bm{X}_i +o_p(1)
\end{align*}
holds with 
$$H(\bm\beta_0)=\mathbb E\left[(d-(1-d)\dot{\lambda}(-\bm{X}^\top \bm\beta_0))\bm{X}\bm{X}^\top\right].$$
The definitions of $G(t,\bm\beta_0)$ and $H(\bm\beta_0)$ are motivated by the key property \eqref{eqn:d-mean}:  under correct specification, the conditional mean of the random weight $d-(1-d)\dot\lambda(-\bm X^\top\bm\beta_0)$ is
\begin{align*}
	\ell(\bm X)=\mathbb E\left[d-(1-d)\dot\lambda(-\bm X^\top\bm\beta_0)\big|\bm X\right]=\Phi(\bm X^\top\bm\beta_0) - \Phi(-\bm X^\top\bm\beta_0)\dot\lambda(-\bm X^\top\bm\beta_0),
\end{align*}
so that $G(t,\bm\beta_0)=\mathbb E[\ell(\bm X)\bm X I(\bm X^\top\bm\beta_0\le t)]$ and $H(\bm\beta_0)=\mathbb E[\ell(\bm X)\bm X\bm X^\top]$. Thus $H(\bm\beta_0)$ is the Fisher information matrix of the Tobit model, and the two equivalent expressions of these quantities (with the random $d$ or with its conditional mean $\Phi$) can be used interchangeably in the population versions under  correct specification. In addition, a direct calculation (Lemma A.\ref{lem:varid} in Appendix A) gives the conditional variance identity
\begin{align}\label{eqn:var-identity}
\mathbb E\left[u^2(\bm W,\bm\beta_0)\,\big|\,\bm X\right]=\ell(\bm X)\quad\text{a.s.},
\end{align}
so that $\ell(\bm X)$ is simultaneously the information weight and the conditional variance of the generalized residual; identity \eqref{eqn:var-identity} is the key to the covariance comparison between the feasible and the infeasible processes (Corollary \ref{coro.kernel}).
Substituting the influence expansion into \eqref{eqn_Unjieshi} yields the following first-order uniform representation 
\begin{align}\label{eqn:first-order-rep}
U_n(t,\hat{\bm\beta})= \frac{1}{\sqrt{n}}\sum_{i=1}^n u(\bm W_i,\bm\beta_0)\Big[I(\bm X_i^\top \bm\beta_0 \leq t) - G^\top(t,\bm\beta_0) H^{-1}(\bm\beta_0) \bm X_i\Big] + o_p(1),
\end{align}
uniformly in $t$. Hence the asymptotic null distribution of the $KS_n$ test statistic in \eqref{eqn:KSn-stat} is not the same as that of the infeasible process $U_n(t,\bm\beta_0)$ with known parameters: the parameter estimation contributes the correction term
\[G^\top(t,\bm\beta_0)\,H^{-1}(\bm\beta_0)\,\frac{1}{\sqrt{n}}\sum_{i=1}^n u(\bm W_i,\bm\beta_0)\bm X_i,\]
which is of order $O_p(1)$ and does not vanish asymptotically. Note that this term is commonly known as PEU in the literature on empirical process-based specification tests; ignoring this leads to invalid testing procedures.


To deal with PEU in a computationally efficient way, we can use the multiplier bootstrap to mimic the first-order uniform representation in \eqref{eqn:first-order-rep} directly: its bootstrap analog replaces the unobservable quantities $\bm\beta_0$, $G(t,\bm\beta_0)$, and $H(\bm\beta_0)$ by their sample counterparts $\hat{\bm\beta}$, $G_n(t,\hat{\bm\beta})$, and $H_n(\hat{\bm\beta})$, where
\begin{align*}
	& G_n(t,\hat{\bm\beta}) = \frac1n \sum_{i=1}^n \left[d_i-(1-d_i)\dot{\lambda}(-\bm{X}_i^\top \hat{\bm\beta})\right] \bm{X}_i  I(\bm{X}_i^\top \hat{\bm\beta} \leq t), \\
	& H_n(\hat{\bm\beta}) = \frac1n \sum_{i=1}^n \left[d_i-(1-d_i)\dot{\lambda}(-\bm{X}_i^\top \hat{\bm\beta})\right]\bm{X}_i \bm{X}_i^\top.
\end{align*}
The sample versions above use the random weights $d_i$ and are consistent for  the random-weight limits $G^{w}(t,\bm\beta_0)$ and $H^{w}(\bm\beta_0)$, which coincide with $G(t,\bm\beta_0)$ and $H(\bm\beta_0)$ under correct specification (Lemma A.1). Alternatively, since $\mathbb E[d_i|\bm X_i]=\Phi(\bm X_i^\top\bm\beta_0)$ under  correct specification, one may replace the random weights by their conditional means and use
\begin{align}\label{eqn:Gn-expected}
	& G_n(t,\hat{\bm\beta}) = \frac1n \sum_{i=1}^n \left[\Phi(\bm{X}_i^\top \hat{\bm\beta})-\Phi(-\bm{X}_i^\top \hat{\bm\beta}) \dot{\lambda}(-\bm{X}_i^\top \hat{\bm\beta})\right] \bm{X}_i  I(\bm{X}_i^\top \hat{\bm\beta} \leq t), \\
	& H_n(\hat{\bm\beta}) = \frac1n \sum_{i=1}^n \left[\Phi(\bm{X}_i^\top \hat{\bm\beta})-\Phi(-\bm{X}_i^\top \hat{\bm\beta}) \dot{\lambda}(-\bm{X}_i^\top \hat{\bm\beta})\right]\bm{X}_i \bm{X}_i^\top,\notag
\end{align}
which is the expected-information estimator. The two sample versions are asymptotically equivalent under  correct specification: the conditional-expectation version is a sample mean of functions of $\bm X_i$ and is consistent for $G(t,\bm\beta_0)$ and $H(\bm\beta_0)$ without any distributional assumption, whereas the random-weight version has the same limit only under  correct specification, where $\mathbb E[d_i|\bm X_i]=\Phi(\bm X_i^\top\bm\beta_0)$; in the proofs we use the random-weight version, which is also consistent under the alternative (with the corresponding random-weight limits). 

Consequently, the multiplier bootstrap process to mimic the behavior of $U_n(t,\hat{\bm\beta})$ is 
\begin{align}\label{eqn:multiplier.process}
	U_n^{\ast}(t,\hat{\bm\beta})  = \frac{1}{\sqrt{n}} \sum_{i=1}^n V_i u(\bm{W}_i,\hat{\bm\beta})\left[I(\bm{X}_i^\top \hat{\bm\beta} \leq t)-  G_n^\top(t,\hat{\bm\beta}) H_n^{-1}(\hat{\bm\beta}) \bm{X}_i\right],
\end{align}
with $\{V_i\}_{i=1}^n$ a sequence of i.i.d.\ random variables with mean zero, variance one, satisfying Assumption (A7) and independent of the original sample. Built on $U_n^{\ast}(t,\hat{\bm\beta})$ in \eqref{eqn:multiplier.process}, the multiplier bootstrap KS test statistic is
\begin{align}\label{eqn:bootstrap-Kn}
	KS_n^{\ast} = \max_{1\leq j \leq n} |U_n^{\ast}(\bm{X}_j^\top \hat{\bm\beta},\hat{\bm\beta})|.
\end{align}
The correction term $G_n^\top(t,\hat{\bm\beta})H_n^{-1}(\hat{\bm\beta})\bm X_i$ inside the bootstrap summands is essential: it reproduces, within each bootstrap draw, the contribution
\[-G^\top(t,\bm\beta_0)H^{-1}(\bm\beta_0)\,n^{-1/2}\sum_{i=1}^n V_i u(\bm W_i,\bm\beta_0)\bm X_i\]
of the parameter estimation to the first-order representation \eqref{eqn:first-order-rep}. The $p$-value of the test is obtained as the proportion of bootstrap draws exceeding the observed statistic, as detailed in Section \ref{subsection_bootstrap_pro}.


\section{Asymptotic Properties}
In this section, we examine the asymptotic properties of the feasible score-based empirical process $U_n(t,\hat{\bm\beta})$ as well as those of the $KS_n$ test statistic under the null, the alternative, and a sequence of local alternatives. Throughout, $U_n\Rightarrow U_\infty$ in $\ell^\infty(\mathbb R)$ denotes weak convergence of the stochastic process viewed as a random element of the space $\ell^\infty(\mathbb R)$ of bounded functions on $\mathbb R$ equipped with the supremum norm $\|f\|_\infty=\sup_{t\in\mathbb R}|f(t)|$ and its Borel $\sigma$-field, i.e., $\mathbb E[h(U_n)]\to\mathbb E[h(U_\infty)]$ for every bounded continuous functional $h$ on $\ell^\infty(\mathbb R)$, see, e.g., Chapter 1.5 in \cite{Vaart1996Weak}. Consequently, the continuous mapping theorem, see, e.g., Theorem 1.3.6 in \cite{Vaart1996Weak}, applies to functionals such as the Kolmogorov--Smirnov supremum $f\mapsto\sup_{t\in\mathbb R}|f(t)|$.

\subsection{Asymptotic Null Distribution}
This section formally establishes the asymptotic null distribution of $U_n(t,\hat{\bm\beta})$ and of $KS_n$. We first provide the weak convergence of $U_n(t,\hat{\bm\beta})$ to a mean-zero Gaussian process under  correct specification of the Tobit model with parameter $\bm\beta_0$, which implies the null restriction $H_0$ in \eqref{eqn:SpecTestP}.

\begin{Theorem}\label{Thm.weakconvergence.stat}
	 Suppose that the model is correctly specified with parameter $\bm\beta_0$ and that Assumptions (A1)--(A6) in Appendix A hold. Then we have
	\[U_n(t,\hat{\bm\beta}) \Rightarrow U_\infty(t,\bm\beta_0) \text{ in } \ell^\infty(\mathbb R),\] 
	where $U_\infty(t,\bm\beta_0)$ is a centered Gaussian process with covariance function $\mathbb E[\Sigma(t_1) \Sigma(t_2)]$, where
\begin{align*}
	&\Sigma(t) = u(\bm{W},\bm{\beta}_0)\left[I(\bm{X}^\top \bm{\beta}_0 \leq t) - G^\top (t,\bm{\beta}_0) H^{-1}(\bm{\beta}_0)\bm{X}\right],\\
	& G(t,\bm\beta_0) = \mathbb E\{[d-(1-d)\dot{\lambda}(-\bm{X}^\top \bm{\beta}_0)] \bm{X}  I(\bm{X}^\top \bm{\beta}_0 \leq t)\}
    = \mathbb E\{\ell(\bm X)\bm X I(\bm X^\top\bm\beta_0 \leq t)\},\\
	& H(\bm\beta_0) = \mathbb E\{[d-(1-d) \dot{\lambda}(-\bm{X}^\top \bm\beta_0)]\bm{X} \bm{X}^\top\}
    = \mathbb E\{\ell(\bm X)\bm X \bm X^\top\},
\end{align*}
with $\ell(\bm X)=\mathbb E[d-(1-d)\dot\lambda(-\bm X^\top\bm\beta_0)|\bm X]=\Phi(\bm X^\top\bm\beta_0)-\Phi(-\bm X^\top\bm\beta_0)\dot\lambda(-\bm X^\top\bm\beta_0)$ by \eqref{eqn:d-mean}.
\end{Theorem}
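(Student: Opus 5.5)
The plan is to establish the uniform decomposition \eqref{eqn_Unjieshi} together with the MLE influence expansion. These give the first-order representation \eqref{eqn:first-order-rep}. Weak convergence then follows from a functional CLT for the leading i.i.d. process
\[
\tilde U_n(t)=\frac{1}{\sqrt n}\sum_{i=1}^n u(\bm W_i,\bm\beta_0)\big[I(\bm X_i^\top\bm\beta_0\le t)-G^\top(t,\bm\beta_0)H^{-1}(\bm\beta_0)\bm X_i\big],
\]
and Slutsky's lemma in $\ell^\infty(\mathbb R)$ transfers the limit to $U_n(t,\hat{\bm\beta})$. The summands of $\tilde U_n$ are exactly $\Sigma_i(t)$. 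They are centered, because $\mathbb E[u(\bm W,\bm\beta_0)|\bm X]=0$ under correct specification (computed before Remark 1). So the finite-dimensional limits are Gaussian with covariance $\mathbb E[\Sigma(t_1)\Sigma(t_2)]$ by the multivariate CLT; second moments are finite by the moment conditions in (A1)--(A6) and by $\mathbb E[u^2|\bm X]=\ell(\bm X)$ from Lemma A.\ref{lem:varid}.

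For tightness of $\tilde U_n$, the class $\{u\,I(\bm x^\top\bm\beta_0\le t):t\in\mathbb R\}$ is a VC-subgraph class with square-integrable envelope $|u|$, hence Donsker. The correction part is $G^\top(t,\bm\beta_0)H^{-1}\cdot n^{-1/2}\sum_i u_i\bm X_i$, the product of a deterministic bounded function of $t$ and an $O_p(1)$ random vector. Its tightness is immediate, since $t\mapsto G(t,\bm\beta_0)$ is bounded and uniformly continuous in the monotone limits $t\to\pm\infty$.

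The influence expansion of $\hat{\bm\beta}$ is standard MLE theory. Consistency follows from compactness of $\mathcal B$, identification and a uniform LLN; a Taylor expansion of the score $n^{-1}\sum_i\bm X_iu(\bm W_i,\bm\beta)$ then works because $\partial u/\partial\bm\beta^\top=-[d-(1-d)\dot\lambda(-\bm X^\top\bm\beta)]\bm X^\top$. Finally, $H_n(\bar{\bm\beta})\to H(\bm\beta_0)$, which is nonsingular by assumption.

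The main obstacle is \eqref{eqn_Unjieshi}, because $\hat{\bm\beta}$ enters both the smooth residual and the non-smooth indicator. I would write
\[
U_n(t,\hat{\bm\beta})-U_n(t,\bm\beta_0)=\frac{1}{\sqrt n}\sum_i[u(\bm W_i,\hat{\bm\beta})-u(\bm W_i,\bm\beta_0)]I(\bm X_i^\top\hat{\bm\beta}\le t)+\frac{1}{\sqrt n}\sum_i u(\bm W_i,\bm\beta_0)[I(\bm X_i^\top\hat{\bm\beta}\le t)-I(\bm X_i^\top\bm\beta_0\le t)].
\]

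For the first term, a mean-value expansion gives
\[
-\frac1n\sum_i[d_i-(1-d_i)\dot\lambda(-\bm X_i^\top\bar{\bm\beta})]\bm X_i^\top I(\bm X_i^\top\hat{\bm\beta}\le t)\,\sqrt n(\hat{\bm\beta}-\bm\beta_0).
\]
A uniform LLN over the Glivenko--Cantelli class indexed by $(t,\bm\beta)$ reduces this to $-G^{w\top}(t,\bm\beta_0)\sqrt n(\hat{\bm\beta}-\bm\beta_0)+o_p(1)$ uniformly in $t$. That step uses the Lipschitz property of $\dot\lambda$ on compacts with an envelope from (A1)--(A6), and continuity of the distribution of $\bm X^\top\bm\beta$ so that $\bm\beta\mapsto\mathbb E[\cdot\,I(\bm X^\top\bm\beta\le t)]$ is continuous uniformly in $t$. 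Then $G^w=G$ under correct specification by \eqref{eqn:d-mean}.

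For the second term, I would use stochastic equicontinuity of the empirical process indexed by $(t,\bm\beta)\mapsto u(\bm W,\bm\beta_0)I(\bm X^\top\bm\beta\le t)$, again a VC class. This replaces the term by $\sqrt n\,\mathbb E[u(\bm W,\bm\beta_0)(I(\bm X^\top\bm\beta\le t)-I(\bm X^\top\bm\beta_0\le t))]|_{\bm\beta=\hat{\bm\beta}}+o_p(1)$. Since $\mathbb E[u|\bm X]=0$, the expectation vanishes identically; this is exactly why conditioning on $\bm X$ rather than on the index makes the indicator perturbation asymptotically negligible.

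Combining the pieces with the influence expansion gives \eqref{eqn:first-order-rep}, and the theorem follows.
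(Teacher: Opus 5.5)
Your proposal is correct and follows essentially the same route as the paper: splitting $U_n(t,\hat{\bm\beta})-U_n(t,\bm\beta_0)$ into a smooth mean-value part (uniform LLN, giving $-G^\top(t,\bm\beta_0)\sqrt n(\hat{\bm\beta}-\bm\beta_0)$) and an indicator-jump part is exactly Lemma A.\ref{lemma.score.A2}. The jump part is negligible by stochastic equicontinuity over $\{u(\bm W,\bm\beta_0)I(\bm X^\top\bm\beta\le t)\}$, a class centered because $\mathbb E[u|\bm X]=0$; substituting the influence expansion and applying a Donsker/CLT argument to the leading i.i.d.\ process then reproduces Steps 1--2 of the paper's proof. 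The only differences are cosmetic: you sketch the MLE expansion that the paper assumes in (A4), and you get tightness by treating the indicator part and the $O_p(1)$ correction part separately instead of asserting directly that the class $\{\Sigma(t)\}$ is Donsker.
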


The following corollary is an immediate consequence of Theorem \ref{Thm.weakconvergence.stat}.

\begin{Corollary}\label{coro.kernel}
Under the assumptions of Theorem \ref{Thm.weakconvergence.stat}, the covariance function of $U_\infty$ admits the closed form
\begin{align}\label{eqn:closed-kernel}
\mathbb E[\Sigma(t_1)\Sigma(t_2)]
&=\mathbb E\left[u^2(\bm W,\bm\beta_0)I(\bm X^\top\bm\beta_0\le t_1)I(\bm X^\top\bm\beta_0\le t_2)\right]\notag\\
&\quad-G^\top(t_1,\bm\beta_0)H^{-1}(\bm\beta_0)G(t_2,\bm\beta_0).
\end{align}
Consequently, pointwise in $t$,
\begin{align}\label{eqn:var-reduction}
\mathbb E[\Sigma(t)^2]\le\mathbb E\left[u^2(\bm W,\bm\beta_0)I(\bm X^\top\bm\beta_0\le t)\right],
\end{align}
with equality if and only if $G(t,\bm\beta_0)=0$; the two kernels coincide for all $(t_1,t_2)$ if and only if $G(\cdot,\bm\beta_0)\equiv0$.
\end{Corollary}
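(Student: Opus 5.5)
We will expand the product $\Sigma(t_1)\Sigma(t_2)$ into four terms and evaluate each by conditioning on $\bm X$, using the variance identity \eqref{eqn:var-identity}. Writing $a(t)=H^{-1}(\bm\beta_0)G(t,\bm\beta_0)$, we have
\begin{align*}
\Sigma(t_1)\Sigma(t_2)&=u^2(\bm W,\bm\beta_0)\Big[I(\bm X^\top\bm\beta_0\le t_1)I(\bm X^\top\bm\beta_0\le t_2)-a(t_2)^\top\bm X\, I(\bm X^\top\bm\beta_0\le t_1)\\
&\quad-a(t_1)^\top\bm X\, I(\bm X^\top\bm\beta_0\le t_2)+a(t_1)^\top\bm X\bm X^\top a(t_2)\Big].
\end{align*}
The first term is left as is, and it gives the first summand of \eqref{eqn:closed-kernel}.

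For the remaining three terms, every factor other than $u^2$ is $\bm X$-measurable. By the tower property and \eqref{eqn:var-identity}, $\mathbb E[u^2 f(\bm X)]=\mathbb E[\ell(\bm X)f(\bm X)]$ for each such $f$. Together with the representations $G(t,\bm\beta_0)=\mathbb E[\ell(\bm X)\bm X I(\bm X^\top\bm\beta_0\le t)]$ and $H(\bm\beta_0)=\mathbb E[\ell(\bm X)\bm X\bm X^\top]$ stated in Theorem \ref{Thm.weakconvergence.stat}, this gives:
\begin{itemize}[leftmargin=*]
\item the second term equals $-a(t_2)^\top G(t_1)=-G^\top(t_1)H^{-1}G(t_2)$, using the symmetry of $H^{-1}$;
\item the third term equals $-G^\top(t_1)H^{-1}G(t_2)$;
\item the fourth term equals $a(t_1)^\top H a(t_2)=G^\top(t_1)H^{-1}G(t_2)$.
\end{itemize}
Summing the four terms gives \eqref{eqn:closed-kernel}. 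Integrability follows from the moment conditions in (A1)--(A6): finite second moments of $\bm X$ and of $u\bm X$ suffice. Invertibility of $H$ is assumed there as well.

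For \eqref{eqn:var-reduction}, set $t_1=t_2=t$. The correction is then $G^\top(t)H^{-1}G(t)\ge0$. The key point is that $H$ is positive definite: it is the Fisher information, assumed nonsingular in (A1)--(A6), and it is positive semidefinite because $H=\mathbb E[\ell\bm X\bm X^\top]$ with $\ell=\mathbb E[u^2|\bm X]\ge0$. Hence $H^{-1}$ is positive definite, and the quadratic form vanishes iff $G(t)=0$. This yields the inequality and its equality case.

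For the final claim, one direction is immediate: if $G\equiv0$, the kernels coincide. Conversely, coincidence on the diagonal $t_1=t_2=t$ already forces $G(t)^\top H^{-1}G(t)=0$ for every $t$, hence $G\equiv0$.

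There is no real obstacle here. The only delicate point is justifying that $H$ is nonsingular, and that $\ell\ge0$ so that $H$ is positive semidefinite; both follow from the stated assumptions and from \eqref{eqn:var-identity}.
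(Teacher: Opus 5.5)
Your proposal is correct and follows essentially the same route as the paper's proof in Lemma A.\ref{lem:varid}(ii)--(iii): expand $\Sigma(t_1)\Sigma(t_2)$, use the conditional variance identity \eqref{eqn:var-identity} and the tower property to get $\mathbb E[u^2 I(\bm X^\top\bm\beta_0\le t)\bm X]=G(t,\bm\beta_0)$ and $\mathbb E[u^2\bm X\bm X^\top]=H(\bm\beta_0)$, and let the cross terms and the quadratic term collapse. You then set $t_1=t_2$ and use positive definiteness of $H$, which (A2) assumes directly, for the inequality and both equality characterizations.
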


Note that under $H_0$, $\mathbb E[\Sigma(t)]=0$, so $U_\infty$ is a centered Gaussian process. The correction term in $\Sigma(t)$ is precisely the contribution of the estimation of $\bm\beta_0$, and by Corollary \ref{coro.kernel} it shrinks the limiting variance: the feasible kernel in \eqref{eqn:closed-kernel} is the infeasible kernel $\mathbb E[u^2I(t_1)I(t_2)]$ minus the positive semidefinite term $G^\top(t_1,\bm\beta_0)H^{-1}(\bm\beta_0)G(t_2,\bm\beta_0)$. This variance reduction is reproduced by the corrected multiplier bootstrap of Section \ref{subsection_bootstrap_pro}. Since $U_\infty$ depends on $\bm\beta_0$ and on the unknown law of $\bm W$, the limiting null distribution of the continuous functional based on it (including our $KS_n$ statistic) is non-pivotal, which motivates the multiplier bootstrap of Section \ref{subsection_bootstrap_pro}.

With the help of Theorem \ref{Thm.weakconvergence.stat} and the continuous mapping theorem, we have the convergence in distribution of the $KS_n$ test statistic under  correct specification with parameter $\bm\beta_0$.

\begin{Corollary}\label{coro.weakconv.stat}
	 Under correct specification with parameter $\bm\beta_0$ and Assumptions (A1)--(A6) in Appendix A, we have
	\[KS_n \overset{\mathcal{D}}{\rightarrow} \sup_{t\in\mathbb R} |U_\infty(t,\bm\beta_0)|,\]
	where $U_\infty(t,\bm\beta_0)$ is the centered Gaussian process as defined in Theorem \ref{Thm.weakconvergence.stat}.
\end{Corollary}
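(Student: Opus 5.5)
The corollary follows from Theorem \ref{Thm.weakconvergence.stat} by the continuous mapping theorem. The only care needed is that $KS_n$ is a maximum over the random grid $\{\bm X_j^\top\hat{\bm\beta}\}$ rather than a supremum over $\mathbb R$.

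\textbf{Step 1: the maximum equals the supremum.} Fix the sample. The map $t\mapsto U_n(t,\hat{\bm\beta})$ is a right-continuous step function. It jumps only at the points $\bm X_j^\top\hat{\bm\beta}$, $j=1,\dots,n$, and it vanishes for $t<\min_j \bm X_j^\top\hat{\bm\beta}$. Its value on $[\bm X_{(k)}^\top\hat{\bm\beta},\bm X_{(k+1)}^\top\hat{\bm\beta})$ equals its value at the left endpoint, which is a grid point. For $t$ beyond the largest index it equals $U_n(\max_j\bm X_j^\top\hat{\bm\beta},\hat{\bm\beta})$, again a grid value. Hence, exactly and for every $n$,
\[
KS_n=\max_{1\le j\le n}|U_n(\bm X_j^\top\hat{\bm\beta},\hat{\bm\beta})|=\sup_{t\in\mathbb R}|U_n(t,\hat{\bm\beta})|,
\]
since the region where the process is identically zero cannot raise the supremum. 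No approximation error arises, so we never need to compare a discrete grid with the continuum.

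\textbf{Step 2: continuous mapping.} The functional $h(f)=\sup_{t\in\mathbb R}|f(t)|=\|f\|_\infty$ is Lipschitz on $\ell^\infty(\mathbb R)$, because $|\,\|f\|_\infty-\|g\|_\infty|\le\|f-g\|_\infty$. It is therefore continuous everywhere. Theorem \ref{Thm.weakconvergence.stat} gives $U_n(\cdot,\hat{\bm\beta})\Rightarrow U_\infty(\cdot,\bm\beta_0)$ in $\ell^\infty(\mathbb R)$. The continuous mapping theorem (Theorem 1.3.6 in \cite{Vaart1996Weak}) then yields
\[
\sup_t|U_n(t,\hat{\bm\beta})|\overset{\mathcal D}{\to}\sup_t|U_\infty(t,\bm\beta_0)|.
\]
Combining this with Step 1 proves the claim.

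\textbf{Measurability.} The only potential obstacle is measurability of $KS_n$, since weak convergence in $\ell^\infty$ is understood in the Hoffmann-J{\o}rgensen sense. Step 1 removes this issue: $KS_n$ is a maximum of finitely many measurable functions of the data and $\hat{\bm\beta}$, so it is a genuine random variable. Convergence of outer expectations then reduces to ordinary convergence in distribution. The limit $\sup_t|U_\infty|$ is a.s. finite, because the tight Gaussian limit $U_\infty$ has bounded sample paths in $\ell^\infty(\mathbb R)$; this is part of the conclusion of Theorem \ref{Thm.weakconvergence.stat}.
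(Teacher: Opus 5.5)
Your proposal is correct and follows the paper's proof essentially verbatim. The paper also notes that $U_n(\cdot,\hat{\bm\beta})$ is a right-continuous step function jumping only at the points $\bm X_j^\top\hat{\bm\beta}$, so that $KS_n=\sup_{t\in\mathbb R}|U_n(t,\hat{\bm\beta})|$ exactly, and then combines Theorem~\ref{Thm.weakconvergence.stat} with the continuous mapping theorem for the sup-norm functional. Your extra remarks only make explicit details the paper leaves implicit: the process vanishes to the left of the smallest index, $\|\cdot\|_\infty$ is Lipschitz, and $KS_n$ is measurable.
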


In \eqref{eqn:KSn-stat} the empirical process $U_n(t,\hat{\bm\beta})$ is evaluated at the sample points $t=\bm X_j^\top\hat{\bm\beta}$ for $j=1,\ldots,n$. Since $U_n(\cdot,\hat{\bm\beta})$ is a right-continuous step function with jumps at exactly these points, the maximum in \eqref{eqn:KSn-stat} coincides exactly with the supremum of $|U_n(t,\hat{\bm\beta})|$ over $t\in\mathbb R$; the corresponding supremum of the limiting process is well behaved by the continuity of the distribution of $\bm X^\top\bm\beta_0$ in Assumption (A1). The form \eqref{eqn:KSn-stat} therefore greatly simplifies the computation of the KS statistic and of its multiplier-bootstrap version defined later.

\subsection{Consistency and Local Power}

In this section, we first consider the asymptotic properties of $U_n(t,\hat{\bm\beta})$ and $KS_n$ under the fixed alternative hypothesis $H_1$ in \eqref{alter}.

\begin{Theorem}\label{thm.consistency.stat}
	Suppose that $\hat{\bm\beta}$ converges in probability to a unique pseudo-true value $\bm\beta^*\in\mathcal B\setminus\{\bm 0\}$ satisfying $\mathbb E[u(\bm W,\bm\beta^*)\bm X]=0$ under the alternative hypothesis $H_1$ in \eqref{alter}. Under Assumptions (A1), (A2), (A3), and (A6) in Appendix A,
\[\sup_{t\in\mathbb R} \left| \frac{1}{\sqrt{n}} U_n(t,\hat{\bm\beta}) - \mu(t,\bm\beta^*)\right| = o_p(1),\]
where
\[\mu(t,\bm\beta^*) = \mathbb E[u(\bm W,\bm\beta^*)I(\bm X^\top\bm\beta^*\le t)]\]
is the score moment. 
Consequently, if the index restriction $\mathbb E[u(\bm W,\bm\beta^*)|\bm X^\top\bm\beta^*]=0$ fails a.s.\ (equivalently, $\mu(\cdot,\bm\beta^*)\not\equiv0$), then $\sup_{t\in\mathbb R}|\mu(t,\bm\beta^*)|>0$ and
\[ \frac{KS_n}{\sqrt n}\overset{p}{\to}\sup_{t\in\mathbb R}|\mu(t,\bm\beta^*)|>0,\]
so that $\lim_{n\to\infty}\mathbb P(KS_n>c_\tau)=1$ for any fixed critical value $c_\tau$; the test is consistent against every such alternative.
\end{Theorem}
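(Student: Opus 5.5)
The plan is to write $n^{-1/2}U_n(t,\hat{\bm\beta})=\frac1n\sum_{i=1}^n u(\bm W_i,\hat{\bm\beta})I(\bm X_i^\top\hat{\bm\beta}\le t)$ and split
\[
\frac{1}{\sqrt n}U_n(t,\hat{\bm\beta})-\mu(t,\bm\beta^*)=\big[P_n f_{t,\hat{\bm\beta}}-P f_{t,\hat{\bm\beta}}\big]+\big[\mu(t,\hat{\bm\beta})-\mu(t,\bm\beta^*)\big],
\]
where $f_{t,\bm\beta}(\bm W)=u(\bm W,\bm\beta)I(\bm X^\top\bm\beta\le t)$, $P_n$ is the empirical measure, and $\mu(t,\bm\beta)=Pf_{t,\bm\beta}$. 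The first bracket is handled by a uniform law of large numbers over the class $\mathcal F=\{f_{t,\bm\beta}:t\in\mathbb R,\bm\beta\in\mathcal B\}$. The second bracket is handled by continuity of $\mu$ in $\bm\beta$, uniformly in $t$, together with $\hat{\bm\beta}\overset{p}{\to}\bm\beta^*$.

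For the Glivenko--Cantelli step I will argue as follows.
\begin{enumerate}
\item The indicators $I(\bm x^\top\bm\beta\le t)$ form a VC class, being half-spaces in $(\bm x,t)$.
\item The map $\bm\beta\mapsto u(\bm w,\bm\beta)=d(y-\bm x^\top\bm\beta)-(1-d)\lambda(-\bm x^\top\bm\beta)$ is Lipschitz on the compact $\mathcal B$. Its derivative is bounded by $C(1+\|\bm x\|^2)$, using $|\dot\lambda(s)|\le 1$ and $\lambda(s)\le C(1+|s|)$.
\item Hence $\{u(\cdot,\bm\beta)\}$ has finite bracketing numbers under the moment conditions (A2)--(A3): $\mathbb E\|\bm X\|^2<\infty$ and $\mathbb E|Y|<\infty$ or similar.
\item The envelope is $F=C(|Y|+\|\bm X\|+1+\|\bm X\|\sup_{\mathcal B}\|\bm\beta\|)$, which is integrable.
\item Products of a bounded VC class and a Glivenko--Cantelli class with integrable envelope are Glivenko--Cantelli (e.g.\ \cite{Vaart1996Weak}, Ch.~2.4 and 2.10).
\end{enumerate}
This gives $\sup_{t,\bm\beta}|P_nf_{t,\bm\beta}-Pf_{t,\bm\beta}|\to0$ a.s., so the first bracket is $o_p(1)$ uniformly in $t$.

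For the second bracket I write
\[
|\mu(t,\bm\beta)-\mu(t,\bm\beta^*)|\le \mathbb E|u(\bm W,\bm\beta)-u(\bm W,\bm\beta^*)|+\mathbb E\big[|u(\bm W,\bm\beta^*)|\,|I(\bm X^\top\bm\beta\le t)-I(\bm X^\top\bm\beta^*\le t)|\big].
\]
The first term is at most $C\|\bm\beta-\bm\beta^*\|\,\mathbb E(1+\|\bm X\|^2)$. The second term is the main obstacle, because it must be controlled uniformly in $t$.

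For the second term I would proceed as follows.
\begin{enumerate}
\item Truncate. On the event $\|\bm X\|\le M$, the two indicators differ only if $|\bm X^\top\bm\beta^*-t|\le M\|\bm\beta-\bm\beta^*\|$.
\item Use Assumption (A1), continuity (bounded density) of $\bm X^\top\bm\beta^*$; here $\bm\beta^*\ne\bm 0$ matters. Combined with uniform integrability of $|u(\bm W,\bm\beta^*)|$, via Cauchy--Schwarz or a conditional bound, the contribution from this band is at most $\sup_t\mathbb E[|u|\,I(|\bm X^\top\bm\beta^*-t|\le\delta)]$.
\item This quantity tends to $0$ as $\delta\to0$. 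Pointwise in $t$ this follows by dominated convergence, and uniformly in $t$ by a compactness argument: tightness of $\bm X^\top\bm\beta^*$ handles large $|t|$, and equicontinuity handles compact $t$-ranges.
\item The part with $\|\bm X\|>M$ is bounded by $\mathbb E[F\,I(\|\bm X\|>M)]$, which can be made small.
\end{enumerate}
Hence $\sup_t|\mu(t,\hat{\bm\beta})-\mu(t,\bm\beta^*)|=o_p(1)$.

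For the consequences, note that $\mu(t,\bm\beta^*)=\mathbb E\{\mathbb E[u|\bm X^\top\bm\beta^*]I(\bm X^\top\bm\beta^*\le t)\}$. The indicators $I(\cdot\le t)$ determine the measure $m(A)=\mathbb E[\mathbb E(u|\bm X^\top\bm\beta^*)I(\bm X^\top\bm\beta^*\in A)]$. So $\mu\equiv0$ iff the index-conditional mean vanishes a.s., and therefore its failure gives $\sup_t|\mu|>0$.

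Next, $KS_n=\sup_t|U_n(t,\hat{\bm\beta})|$, since the maximum over jump points equals the supremum, as noted after Corollary \ref{coro.weakconv.stat}. The reverse triangle inequality then gives
\[
\big|KS_n/\sqrt n-\sup_t|\mu(t,\bm\beta^*)|\big|\le\sup_t\big|n^{-1/2}U_n(t,\hat{\bm\beta})-\mu(t,\bm\beta^*)\big|=o_p(1).
\]
Consequently, for any fixed $c_\tau$,
\[
\mathbb P(KS_n>c_\tau)=\mathbb P\big(KS_n/\sqrt n>c_\tau/\sqrt n\big)\to1.
\]
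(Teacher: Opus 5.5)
Your proof is correct and follows essentially the paper's route: a uniform law of large numbers over the class $\{u(\bm W,\bm\beta)I(\bm X^\top\bm\beta\le t)\}$, combined with continuity of $\bm\beta\mapsto\mu(t,\bm\beta)$ uniformly in $t$ and $\hat{\bm\beta}\overset{p}{\to}\bm\beta^*$, after which the KS limit and the consistency conclusion follow. You also supply details the paper only asserts, namely the truncation/band argument for uniform continuity (which can be shortened to Cauchy--Schwarz plus the bounds in (A1) at $\bm\beta^*\neq\bm 0$), and you finish via the exact identity $KS_n=\sup_t|U_n(t,\hat{\bm\beta})|$ and the reverse triangle inequality, which is cleaner than the paper's appeal to weak convergence of the empirical law of the indices.
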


Theorem \ref{thm.consistency.stat} does not assert consistency for alternatives in $H_1$ whose index projection vanishes at $\bm\beta^*$. Those are precisely the undetectable alternatives of Remark~1 and are discussed in Section~\ref{sec:indexinvisible}.
By Theorem \ref{thm.consistency.stat}, under a fixed alternative in $H_1$ the drift of $n^{-1/2}U_n(t,\hat{\bm\beta})$ is the score moment $\mu(t,\bm\beta^*)=\mathbb E[u(\bm W,\bm\beta^*)I(\bm X^\top\bm\beta^*\le t)]$ evaluated at the pseudo-true value $\bm\beta^*$: at this rate  the estimation error is annihilated by the $n^{-1/2}$ normalization (its contribution to $\bar U_n$ is $o_p(1)$ by the uniform continuity of the mean map and the centered stochastic equicontinuity), and no PEU correction term survives in the limit. Since $\mu(\cdot,\bm\beta^*)\equiv0$ if and only if the index conditional moment restriction $\mathbb E[u(\bm W,\bm\beta^*)|\bm X^\top\bm\beta^*]=0$ holds a.s., the test is consistent exactly against the fixed alternatives in $H_1$ whose deviation is visible through the index at the pseudo-true value, that is, those with $\mu(\cdot,\bm\beta^*)\not\equiv0$. It is important to distinguish this from the full null $H_0$: because $H_0$ requires $\mathbb E[u(\bm W,\bm\beta_0)|\bm X]=0$ a.s.\ for some $\bm\beta_0$, an alternative may fail $H_0$ at every parameter value while its index projection vanishes at $\bm\beta^*$; such an alternative belongs to $H_1$ but leaves the process asymptotically centered, so the test has no power against it. 
Thus, relative to $H_1$, the test detects the ``index-visible'' fixed alternatives, and over that set no misspecification is masked by the parameter estimation. This is in sharp contrast with local alternatives, where the PEU does not vanish, and the drift contains a correction term.

We next investigate the asymptotic behavior of the score-based specification test under a sequence of local alternative hypotheses converging 
at the parametric $n^{-1/2}$ rate:
\begin{align}\label{eqn:local.alt}
	H_{1n} : \mathbb E\left[u(\bm{W}_i,\bm\beta_0) |\bm{X}_i\right]= n^{-1/2}a(\bm{X}_i^\top \bm{\beta}_0) \quad \text{a.s. for some}\  \bm\beta_0 \in \mathcal{B},
\end{align}
where the nonzero deviation function $a(\cdot)$ satisfies $\mathbb E|a(\bm X_i^\top \bm\beta_0)|<\infty$ and $\mathbb E a^2(\bm X_i^\top \bm\beta_0)<\infty$, and is continuous with $\mathbb E[\sup_{\bm\beta\in\mathcal B}|a(\bm X_i^\top\bm\beta)|]<\infty$  (these hold, e.g., for polynomials of degree at most four under (A3)). Throughout, the local sequence is understood as a contiguous perturbation of the correctly specified model with parameter $\bm\beta_0$, so that the conditional law of $(d_i,Y_i)$ given $\bm X_i$ remains at its model value up to the $O(n^{-1/2})$ mean shift of $u$ specified by $H_{1n}$; all conditional moments entering the null quantities therefore remain at their model values up to $O(n^{-1/2})$. Since conditioning on a coarser variable preserves the restriction, $H_{1n}$ implies the index version $\mathbb E[u(\bm{W}_i,\bm\beta_0)|\bm X_i^\top\bm\beta_0]=n^{-1/2}a(\bm X_i^\top\bm\beta_0)$ a.s.

\begin{Theorem}\label{thm.localpower.stat}
	Under the sequence of local alternatives $H_{1n}$ in \eqref{eqn:local.alt}, Assumptions (A1)--(A6) in Appendix A, 
\[U_n(t,\hat{\bm\beta}) \Rightarrow U_\infty(t,\bm\beta_0) + D (t,\bm\beta_0) \text{ in } \ell^\infty(\mathbb R),\]
where $U_\infty(t,\bm\beta_0)$ is the centered Gaussian process as defined in Theorem \ref{Thm.weakconvergence.stat} and $D (t,\bm\beta_0) $ is a deterministic drift function defined by
\[D (t,\bm\beta_0) = \mathbb E[a(\bm{X}^\top \bm{\beta}_0)I(\bm{X}^\top \bm{\beta}_0 \leq t)]
- G^\top(t,\bm\beta_0)H^{-1}(\bm\beta_0)\mathbb E[a(\bm{X}^\top \bm{\beta}_0)\bm X].\]
Hence, 
\[ KS_n \overset{\mathcal{D}}{\to} \sup_{t\in\mathbb R} \left|U_\infty(t,\bm\beta_0) + D (t,\bm\beta_0)\right|. \]
\end{Theorem}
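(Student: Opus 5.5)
The plan is to reproduce the null argument of Theorem \ref{Thm.weakconvergence.stat} and track the mean shift that $H_{1n}$ adds. Under $H_{1n}$ we write the generalized residual as
\[u(\bm W_i,\bm\beta_0)=\varepsilon_i+n^{-1/2}a(\bm X_i^\top\bm\beta_0),\qquad \varepsilon_i=u(\bm W_i,\bm\beta_0)-\mathbb E[u(\bm W_i,\bm\beta_0)|\bm X_i],\]
so that $\mathbb E[\varepsilon_i|\bm X_i]=0$. The infeasible process then splits as
\[U_n(t,\bm\beta_0)=\frac{1}{\sqrt n}\sum_{i=1}^n\varepsilon_i I(\bm X_i^\top\bm\beta_0\le t)+\frac1n\sum_{i=1}^n a(\bm X_i^\top\bm\beta_0)I(\bm X_i^\top\bm\beta_0\le t).\]
The second term converges uniformly in $t$ to $\mathbb E[a(\bm X^\top\bm\beta_0)I(\bm X^\top\bm\beta_0\le t)]$. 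This follows from a Glivenko--Cantelli argument: the class $\{a(\cdot)I(\cdot\le t)\}$ has the integrable envelope $|a|$ and is built from a VC class of indicators. The same reasoning gives
\[\frac1n\sum_{i=1}^n a(\bm X_i^\top\bm\beta_0)\bm X_i\to\mathbb E[a(\bm X^\top\bm\beta_0)\bm X]\quad\text{in probability},\]
by the law of large numbers and Cauchy--Schwarz, since $\mathbb E a^2<\infty$ and $\mathbb E\|\bm X\|^2<\infty$ by (A3).

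Next we handle the estimator. The MLE still solves $n^{-1}\sum_i u(\bm W_i,\hat{\bm\beta})\bm X_i=0$. Because the local sequence is a contiguous perturbation of the model at $\bm\beta_0$, the Hessian limit remains $H(\bm\beta_0)$ and $\hat{\bm\beta}\to\bm\beta_0$ in probability. The usual Taylor expansion then gives
\[\sqrt n(\hat{\bm\beta}-\bm\beta_0)=H^{-1}(\bm\beta_0)\frac{1}{\sqrt n}\sum_{i=1}^n\varepsilon_i\bm X_i+H^{-1}(\bm\beta_0)\mathbb E[a(\bm X^\top\bm\beta_0)\bm X]+o_p(1).\]
This shows $\hat{\bm\beta}$ is $\sqrt n$-consistent, so the uniform decomposition \eqref{eqn_Unjieshi} can be invoked with $G(t,\bm\beta_0)$ unchanged. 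Its proof uses only $\sqrt n$-consistency and the mean-value expansion of $u(\bm W,\bm\beta)I(\bm X^\top\bm\beta\le t)$, whose Jacobian limit $G$ is affected by the perturbation only at order $O(n^{-1/2})$. Substituting the expansion into \eqref{eqn_Unjieshi} yields, uniformly in $t$,
\[U_n(t,\hat{\bm\beta})=\frac{1}{\sqrt n}\sum_{i=1}^n\varepsilon_i\Big[I(\bm X_i^\top\bm\beta_0\le t)-G^\top(t,\bm\beta_0)H^{-1}(\bm\beta_0)\bm X_i\Big]+D(t,\bm\beta_0)+o_p(1).\]

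The leading sum is an empirical process indexed by a Donsker class, namely the VC indicators together with a finite-dimensional linear term with square-integrable envelope. Its covariance is
\[\mathbb E\big[\mathrm{Var}(u\,|\,\bm X)\,\{\cdot\}\{\cdot\}\big],\]
and under the contiguous perturbation this equals the null covariance $\mathbb E[\Sigma(t_1)\Sigma(t_2)]$ up to $O(n^{-1/2})$, using \eqref{eqn:var-identity}. Hence the sum converges weakly to $U_\infty(\cdot,\bm\beta_0)$. A convenient alternative route is Le Cam's third lemma: establish joint convergence of the null process with the log-likelihood ratio and read off the shift. That shift is the covariance of $\Sigma(t)$ with the score perturbation, which equals $D(t,\bm\beta_0)$.

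Adding the deterministic drift via Slutsky, and applying the continuous mapping theorem to $f\mapsto\sup_t|f(t)|$ together with the step-function remark after Corollary \ref{coro.weakconv.stat}, gives the stated limit for $KS_n$.

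The main obstacle is justifying rigorously that the null-derived objects survive the drifting data-generating process. These are the stochastic equicontinuity used in \eqref{eqn_Unjieshi}, the Hessian limit $H$, and the limiting covariance, and all of them are now computed under a triangular array. I would handle this by invoking contiguity, so that every $o_p(1)$ statement proved under the model at $\bm\beta_0$ transfers to $H_{1n}$. The explicit mean shift is then the only term that must be computed directly, as done above.
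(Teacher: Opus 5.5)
Your proposal is correct and follows essentially the same route as the paper: split $u(\bm W_i,\bm\beta_0)$ into a conditionally centered mark plus $n^{-1/2}a(\bm X_i^\top\bm\beta_0)$, carry the shifted influence expansion (with its extra term $H^{-1}(\bm\beta_0)\mathbb E[a(\bm X^\top\bm\beta_0)\bm X]$) through the uniform decomposition, and conclude with the Donsker argument for the centered part, the law of large numbers for the drift, and the continuous mapping theorem. One small correction: the null proof of the decomposition also uses the centering $\mathbb E[u(\bm W,\bm\beta_0)|\bm X]=0$ in the indicator-jump step, not only $\sqrt n$-consistency and the mean-value expansion; the paper handles this by applying equicontinuity to the centered mark and showing that the leftover $a$-term is $o(1)$, and your contiguity-transfer argument covers this step equally well.
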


Theorem \ref{thm.localpower.stat} and Theorem \ref{thm.bootstrap.stat} in the next subsection together imply that at the $\sqrt n$ rate the limiting power of the test is
\[\mathbb P\Bigl(\sup_{t\in\mathbb R}|U_\infty(t,\bm\beta_0)+D(t,\bm\beta_0)|>q_{1-\tau}\Bigr),\]
where $q_{1-\tau}$ is the $(1-\tau)$-quantile of $\sup_{t\in\mathbb R}|U_\infty(t,\bm\beta_0)|$, and the limit is nontrivial whenever $D(\cdot,\bm\beta_0)\not\equiv0$. Indeed, for the centered Gaussian limit $U_\infty$ and the symmetric convex acceptance region $\{\sup_t|f(t)|\le q_{1-\tau}\}$, the map $D\mapsto\mathbb P(\sup_t|U_\infty(t)+D(t)|\le q_{1-\tau})$ is even and log-concave by the log-concavity of Gaussian measures (Borell's theorem; see \cite{Bogachev1998}), hence maximized at $D\equiv0$. The limiting power is therefore at least the nominal level, with equality only when $D\equiv0$. The drift contains the PEU correction,
\[D(t,\bm\beta_0)=\mathbb E[a(\bm X^\top\bm\beta_0)I(\bm X^\top\bm\beta_0\le t)]-G^\top(t,\bm\beta_0)H^{-1}(\bm\beta_0)\mathbb E[a(\bm X^\top\bm\beta_0)\bm X].\]
Note that the correction term arises because under $H_{1n}$ the maximum likelihood estimator $\hat{\bm\beta}$ drifts by $H^{-1}(\bm\beta_0)\mathbb E[a(\bm X^\top\bm\beta_0)\bm X]$. The PEU does not vanish at the parametric rate, unlike under the fixed alternatives. Writing
\[D(t,\bm\beta_0)=\mathbb E\bigl[a(\bm X^\top\bm\beta_0)\bigl(I(\bm X^\top\bm\beta_0\le t)-\bm X^\top H^{-1}(\bm\beta_0)G(t,\bm\beta_0)\bigr)\bigr],\]
the local alternatives with $D(\cdot,\bm\beta_0)\equiv0$ are exactly those whose deviation function satisfies
\[a(\bm X^\top\bm\beta_0)=\mathbb E[\ell(\bm X)\bm X^\top c\,|\,\bm X^\top\bm\beta_0]\quad\text{a.s. for some }c\in\mathbb R^p.\]
Such deviations lie in the score space of the parametric Tobit model. They are precisely the index projections of the score directions $\ell(\bm X)\bm X^\top c$, i.e., the part of a local perturbation that the parametric fit absorbs. For example, a $\sqrt n$-shift of the true coefficient, $\bm\beta\rightsquigarrow\bm\beta_0+n^{-1/2}c$, leaves the model correctly specified. The induced conditional deviation is $\ell(\bm X)\bm X^\top c$, whose index projection is $\mathbb E[\ell(\bm X)\bm X^\top c\,|\,\bm X^\top\bm\beta_0]$, and the drift depends on the deviation only through this projection, so the shift is absorbed by the estimator and the limiting rejection probability equals the nominal level. For every other local alternative, $D\not\equiv0$, the limiting power is strictly larger than the nominal level by the same argument, and it converges to one as the drift grows in the sense that $\sup_t|U_\infty(t)+\lambda D(t)|\to\infty$ almost surely as $\lambda\to\infty$ for $D\not\equiv0$. This is the usual price of capturing the PEU at the parametric rate. 

\subsection{The Multiplier Bootstrap Procedure}
\label{subsection_bootstrap_pro}

Since the limiting null distribution of $KS_n$ in Corollary \ref{coro.weakconv.stat}, i.e., $\sup_{t\in\mathbb R} |U_\infty(t,\bm\beta_0)|$, is non-pivotal, we compute its $p$-values via a multiplier bootstrap procedure for computational speed and simplicity. Specifically, the multiplier bootstrap statistic is
\begin{align}\label{eqn.bootstrap.Kn}
	KS_n^{\ast} = \max_{1\leq j \leq n} |U_n^{\ast}(\bm{X}_j^\top \hat{\bm\beta},\hat{\bm\beta})|,
\end{align}
with the multiplier bootstrap score-based empirical process $U_n^{\ast}(t,\hat{\bm\beta})$ defined in \eqref{eqn:multiplier.process}, which mimics the first-order uniform representation \eqref{eqn:first-order-rep} of the feasible process $U_n(t,\hat{\bm\beta})$.

The multiplier bootstrap procedure of our proposed score-based specification test is summarized as follows.

\begin{itemize}
	\item[{\bf Step 1:}] Calculate the test statistic $KS_n$ in \eqref{eqn:KSn-stat}.
	\item[{\bf Step 2:}] Generate independent multipliers $V_i\sim N(0,1)$ for $i=1,2,\cdots,n$, and calculate 
the bootstrap test statistic $KS_n^{\ast}$ in \eqref{eqn.bootstrap.Kn}.
	\item[{\bf Step 3:}] Repeat Step 2 $B$ times, drawing fresh multipliers each time, to obtain $B$ test statistics 
$KS_n^{\ast(1)},KS_n^{\ast(2)},\cdots,
\allowbreak KS_n^{\ast(B)}$.
	\item[{\bf Step 4:}] Calculate the $p$-value of rejecting the null $H_0$ as 
$p=(B+1)^{-1}\bigl(1+\sum_{b=1}^{B}I(KS_n^{\ast(b)}>KS_n)\bigr)$.
\end{itemize}


The following theorem ensures the validity of the multiplier bootstrap procedure. We say that a bootstrap process $Z_n^*$ converges weakly to a tight limit $Z$ \emph{in probability conditionally on the data}, written $Z_n^*\Rightarrow_{p} Z$, if
\[\sup_{h\in\mathrm{BL}_1(\ell^\infty(\mathbb R))}\bigl|\mathbb E^*[h(Z_n^*)]-\mathbb E[h(Z)]\bigr|\overset{p}{\to}0,\]
where $\mathbb E^*$ denotes the conditional expectation given the sample and $\mathrm{BL}_1$ the set of bounded Lipschitz functionals of norm at most one; equivalently, every subsequence has a further subsequence along which the conditional law converges weakly almost surely; see \citet[Section 2.9]{Vaart1996Weak} and \citet[Section 3.2]{Chernozhukov2013}. All conditional statements below are understood in this sense.

\begin{Theorem}\label{thm.bootstrap.stat}
	Suppose that Assumptions (A1)--(A7) in Appendix A hold. Under  correct specification with parameter $\bm\beta_0$, or under the sequence of local alternatives $H_{1n}$ in \eqref{eqn:local.alt}, we have
	\[U_n^{\ast}(t,\hat{\bm\beta}) \Rightarrow_{p} U_\infty(t,\bm\beta_0) \text{ in } \ell^\infty(\mathbb R),\]
where $U_\infty(t,\bm\beta_0)$ is the centered Gaussian process as defined in Theorem \ref{Thm.weakconvergence.stat}. Furthermore, under  correct specification, the $KS_n^{\ast}$ statistic has the same limiting null distribution $\sup_{t\in\mathbb R} |U_\infty(t,\bm\beta_0)|$ as the $KS_n$ statistic does in Corollary \ref{coro.weakconv.stat}, conditionally on the original data and in probability.
	
	Under the alternative hypothesis $H_1$ in \eqref{alter}, let $\bm\beta^*=\mathrm{plim}\,\hat{\bm\beta}\in\mathcal B\setminus\{\bm 0\}$ and assume that $H(\bm\beta^*)=\mathbb E\{[d-(1-d)\dot{\lambda}(-\bm X^\top\bm\beta^*)]\bm X\bm X^\top\}$ is nonsingular. Then, conditionally on the original data and in probability,
	\[U_n^{\ast}(t,\hat{\bm\beta}) \Rightarrow_p U_\infty^{\ast}(t,\bm\beta^*) \text{ in } \ell^\infty(\mathbb R),\]
	where $U_\infty^{\ast}(t,\bm\beta^*)$ is a centered Gaussian process with covariance function $\mathbb E[\Sigma^*(t_1;\bm\beta^*)\Sigma^*(t_2;\bm\beta^*)]$, with
	\[\Sigma^*(t;\bm\beta^*)=u(\bm W,\bm\beta^*)\left[I(\bm X^\top\bm\beta^*\le t)-G^\top(t,\bm\beta^*)H^{-1}(\bm\beta^*)\bm X\right]\]
	and $G(t,\bm\beta^*)=\mathbb E\{[d-(1-d)\dot{\lambda}(-\bm X^\top\bm\beta^*)]\bm X I(\bm X^\top\bm\beta^*\le t)\}$. Consequently, $KS_n^{\ast}\overset{\mathcal D}{\to}\sup_{t\in\mathbb R}|U_\infty^{\ast}(t,\bm\beta^*)|$ in probability conditional on the original data, and $KS_n^{\ast}=O_p(1)$. Since $KS_n\to\infty$ in probability under alternatives in $H_1$ with $\mu(\cdot,\bm\beta^*)\not\equiv0$ (Theorem \ref{thm.consistency.stat}), the bootstrap $p$-value converges to zero in probability, so the bootstrap-based test is consistent against every such alternative.
\end{Theorem}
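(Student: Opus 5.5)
The plan is to prove the result in three stages. First I introduce the oracle multiplier process
\[\tilde U_n^*(t)=\frac1{\sqrt n}\sum_{i=1}^n V_i\,\Sigma_i(t),\qquad \Sigma_i(t)=u(\bm W_i,\bm\beta_0)\bigl[I(\bm X_i^\top\bm\beta_0\le t)-G^\top(t,\bm\beta_0)H^{-1}(\bm\beta_0)\bm X_i\bigr],\]
in which every estimated quantity is replaced by its population value. Second, I show that $\tilde U_n^*\Rightarrow_p U_\infty(\cdot,\bm\beta_0)$. Third, I show that $\sup_t|U_n^*(t,\hat{\bm\beta})-\tilde U_n^*(t)|=o_{p^*}(1)$ in probability. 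The bounded-Lipschitz definition of $\Rightarrow_p$ then transfers the limit from $\tilde U_n^*$ to $U_n^*$.

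For the second stage I apply the conditional multiplier central limit theorem (van der Vaart and Wellner, Theorem 2.9.6/2.9.7). This needs the class $\mathcal F=\{w\mapsto\Sigma(t)(w):t\in\mathbb R\}$ to be Donsker with a square-integrable envelope. The class $\mathcal F$ is $u(\bm W,\bm\beta_0)$ times the VC class of indicators $\{I(\bm X^\top\bm\beta_0\le t)\}$, minus $u\,\bm X^\top$ times a bounded, monotone-in-$t$ vector family $H^{-1}G(t,\bm\beta_0)$. The latter is a finite-dimensional parametric class with Lipschitz dependence after reparametrising $t$ by the continuous cdf of the index from (A1). Both pieces are therefore Donsker under the moment conditions (A2)--(A3) on $u$ and $\bm X$, with envelope $|u|(1+C\|\bm X\|)$. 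Condition (A7) on $V_i$ supplies $\|V\|_{2,1}<\infty$. The limit covariance is $\mathbb E[\Sigma(t_1)\Sigma(t_2)]$, which is exactly the covariance of $U_\infty$ in Theorem \ref{Thm.weakconvergence.stat}.

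The third stage is the main obstacle, and I split the difference $U_n^*-\tilde U_n^*$ into three terms.

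\textbf{Term (i).} This term is $n^{-1/2}\sum V_i[u(\bm W_i,\hat{\bm\beta})I(\bm X_i^\top\hat{\bm\beta}\le t)-u(\bm W_i,\bm\beta_0)I(\bm X_i^\top\bm\beta_0\le t)]$. Conditionally on the data it is a mean-zero multiplier process. I bound it through the conditional second moment and the unconditional multiplier inequality (Lemma 2.9.1 of van der Vaart and Wellner). The required input is
\[\sup_t\frac1n\sum_i[\cdots]^2\overset p\to0.\]
This follows from $\hat{\bm\beta}\to\bm\beta_0$, the Lipschitz property of $u$ and $\lambda$ in $\bm\beta$ under (A4)--(A5), and a Glivenko--Cantelli argument over the enlarged class indexed by $(t,\bm\beta)$ in a neighbourhood of $\bm\beta_0$. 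The continuity of the index distribution controls the symmetric differences of the indicators. Asymptotic equicontinuity of the multiplier process over that class then makes the supremum $o_{p^*}(1)$.

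\textbf{Terms (ii) and (iii).} The correction piece is
\[G_n^\top(t,\hat{\bm\beta})H_n^{-1}(\hat{\bm\beta})\cdot n^{-1/2}\sum V_iu(\bm W_i,\hat{\bm\beta})\bm X_i.\]
Here $n^{-1/2}\sum V_iu\bm X_i=O_{p^*}(1)$, and replacing $\hat{\bm\beta}$ by $\bm\beta_0$ inside it costs $o_{p^*}(1)$ by the same variance argument. It then remains to show $\sup_t\|G_n(t,\hat{\bm\beta})-G(t,\bm\beta_0)\|=o_p(1)$ and $H_n(\hat{\bm\beta})\to H(\bm\beta_0)$. These follow from a uniform law of large numbers over $(t,\bm\beta)$, using Lemma A.1 to identify the random-weight limits, together with the invertibility of $H$ from (A6).

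\textbf{Local alternatives.} Under $H_{1n}$ the same argument applies essentially unchanged. The conditional mean of $u$ shifts only by $n^{-1/2}a$, and this affects neither the conditional variance of the multiplier sums (which carry no drift, since $\mathbb E^*V_i=0$) nor the probability limits. $\hat{\bm\beta}\to\bm\beta_0$ still holds, by contiguity.

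\textbf{The $KS_n^*$ statement.} The continuous mapping theorem for the sup-norm gives the conditional law of $KS_n^*$, since the maximum over sample points equals the supremum by the step-function property noted after Corollary \ref{coro.weakconv.stat}.

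\textbf{Fixed alternatives.} Every step is repeated with $\bm\beta^*$ in place of $\bm\beta_0$, using random-weight limits and the assumed nonsingularity of $H(\bm\beta^*)$. Nothing in the argument uses the null except the identification of the limiting covariance, so the limit is $U_\infty^*(\cdot,\bm\beta^*)$. Tightness of this limit gives $KS_n^*=O_p(1)$. Combined with $KS_n\to\infty$ from Theorem \ref{thm.consistency.stat}, we get $\mathbb P^*(KS_n^*>KS_n)\overset p\to0$, so the bootstrap $p$-value tends to zero in probability.
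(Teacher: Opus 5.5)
Your proposal is correct and follows essentially the same route as the paper's proof: the same oracle multiplier process $n^{-1/2}\sum_i V_i\Sigma_i(t)$, the conditional multiplier CLT over the Donsker class $\{\Sigma(t):t\in\mathbb R\}$, and negligibility of the remainder through uniform convergence of $G_n,H_n$ together with equicontinuity of the multiplier process, with $\bm\beta^*$ and the random-weight limits substituted under the fixed alternative. The one real difference is that you handle the smooth and indicator-jump parts jointly via $L^2$-equicontinuity over the class indexed by $(\bm\beta,t)$, whereas the paper splits off a mean-value expansion for the smooth part. Your version is slightly cleaner, since it needs only consistency of $\hat{\bm\beta}$ and avoids the $i$-dependent mean-value points. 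Two small slips: invertibility of $H$ comes from (A2), not (A6), and $H^{-1}G(t,\bm\beta_0)$ is of bounded variation rather than monotone in $t$, which does not affect the Donsker argument.
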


It is important to remark that in the bootstrap process \eqref{eqn:multiplier.process} the multipliers multiply the whole summand, so that the correction term is reproduced within each bootstrap draw through the term $-G_n^\top(t,\hat{\bm\beta})H_n^{-1}(\hat{\bm\beta})\,n^{-1/2}\sum_i V_i u(\bm W_i,\hat{\bm\beta})\bm X_i$, which is of order $O_p(1)$ conditional on the data and matches the PEU contribution in the bootstrap analogue of \eqref{eqn:first-order-rep}. Omitting this term would change the conditional covariance kernel of the bootstrap process, so the correction is essential for valid inference; with it, the proposed bootstrap ``captures the PEU'' and reproduces the full first-order representation of the feasible process. Note that, for the exact maximum likelihood estimator, the first-order condition $\sum_{i=1}^n u(\bm W_i,\hat{\bm\beta})\bm X_i=0$ holds identically, so the statistic $KS_n$ itself requires no adjustment; the entire effect of the parameter estimation is carried by the bootstrap.

\subsection{The Case of Unknown Scale $\sigma$}
\label{subsec:two-parameter}

The asymptotic theory so far normalizes the scale to $\sigma=1$, which is appropriate when the scale is known. In the empirical applications of Section \ref{s:EmpApp}, the scale is unknown, so it is important to spell out what the $\sigma=1$ normalization tests and to specify the version of the score-based specification test used there. Throughout this section we assume that $\Theta$ is compact with $\sigma\ge\sigma_{\min}>0$, that $\mathbb E|z|^4<\infty$, and that $H(\bm\theta_0)$ is positive definite; the Donsker classes of Assumption (A6) are replaced by their two-parameter analogs built from $m(\bm W,\bm\theta)$ and $\bm s(\bm W,\bm\theta)$.

To proceed, collect the parameters in $\bm\theta=(\bm\beta^\top,\sigma)^\top\in\Theta\subset\mathbb R^{p+1}$, and define, for $z=(Y-\bm X^\top\bm\beta)/\sigma$ and $z'=-\bm X^\top\bm\beta/\sigma$, the scalar mark and the joint score
\begin{align}\label{eqn:two-param-mark}
m(\bm W,\bm\theta) &= d z - (1-d)\lambda(z'),\\[2pt]
\bm s(\bm W,\bm\theta) &= \begin{pmatrix} u_\beta(\bm W,\bm\theta)\\ u_\sigma(\bm W,\bm\theta)\end{pmatrix},\qquad
u_\beta(\bm W,\bm\theta)=\frac{\bm X}{\sigma}\,m(\bm W,\bm\theta),\\[2pt]
u_\sigma(\bm W,\bm\theta)&=\frac{1}{\sigma}\left[d(z^2-1) - (1-d)z'\lambda(z')\right],
\end{align}
so that $\bm s$ is the score vector of the Tobit log-likelihood with respect to $(\bm\beta,\sigma)$. Under correct specification with parameter $\bm\theta_0=(\bm\beta_0^\top,\sigma_0)^\top$, $\mathbb E[\bm s(\bm W,\bm\theta_0)|\bm X]=0$ a.s., and in particular $\mathbb E[m(\bm W,\bm\theta_0)|\bm X]=0$ a.s.; the latter reduces to \eqref{eqn:SpecTestP} when $\sigma_0=1$. The two-parameter analog of \eqref{eqn:d-mean} is $\mathbb E[d|\bm X]=\Phi(\bm X^\top\bm\beta_0/\sigma_0)$. The information blocks in \eqref{eqn:two-param-H} below are displayed in the random-weight (observed-information) form, in terms of the indicators $d$ and the standardized residuals $z,z'$. Replacing $d$ by its conditional mean $\Phi(\bm X^\top\bm\beta_0/\sigma_0)$ yields the corresponding conditional-expectation (expected-information) form of each weight, exactly as in the one-parameter case. The two population versions are equal  under correct specification (where $\mathbb E[d|\bm X]=\Phi(\bm X^\top\bm\beta_0/\sigma_0)$), while only the random-weight versions remain consistent under the alternative. 

In the following, the maintained null is the two-parameter score restriction 
$$H_0^{2p}:\mathbb E[\bm s(\bm W,\bm\theta_0)|\bm X]=\bm 0 \text{ a.s. for some } \bm\theta_0\in\Theta,$$ 
with the alternative $H_1^{2p}$ defined as its negation as in \eqref{alter}. By the coarsening argument, we have in particular $\mathbb E[m(\bm W,\bm\theta_0)|\bm X^\top\bm\beta_0]=0$ a.s. As in Remark 1, the index version is a necessary but not sufficient condition for correct specification, and the test detects deviations that are visible through the index. Note that testing $H_0^{2p}$ does \emph{not} impose $\sigma=1$: a correctly specified model with any scale is accepted at its true $(\bm\beta_0,\sigma_0)$. In contrast, the $\sigma=1$ score of Section 2.1 tests the joint hypothesis of a correct functional form and unit scale: if the data are generated by the correctly specified Tobit model $Y^*=\bm X^\top\bm\beta_0+\sigma e$ with $e\sim N(0,1)$ and $\sigma\neq1$, then
\[\mathbb E[u(\bm W,\bm\beta_0)|\bm X]=\sigma\,\phi\Bigl(\frac{\bm X^\top\bm\beta_0}{\sigma}\Bigr)-\Phi\Bigl(-\frac{\bm X^\top\bm\beta_0}{\sigma}\Bigr)\frac{\phi(\bm X^\top\bm\beta_0)}{\Phi(-\bm X^\top\bm\beta_0)}\neq0\quad\text{in general},\]
which reduces to $(\sigma-1)\phi(\bm X^\top\bm\beta_0)$ when $\bm X^\top\bm\beta_0=0$; see Lemma A.\ref{lem:scale-calc}. The $\sigma=1$ test would therefore reject a perfectly specified model solely because of the scale. The two-parameter version below separates scale from functional-form misspecification and is the version used in the empirical applications.

Now the information matrix $H(\bm\theta)=\mathbb E[-\partial_{\bm\theta}\bm s(\bm W,\bm\theta)]$ has blocks
\begin{align}\label{eqn:two-param-H}
H_{11} &= \frac{1}{\sigma^2}\left[d-(1-d)\dot\lambda(z')\right],\\[2pt]
H_{12} &= \frac{1}{\sigma^2}\left[2dz - (1-d)\left(z'\dot\lambda(z')+\lambda(z')\right)\right],\\[2pt]
H_{22} &= \frac{1}{\sigma^2}\left[d(3z^2-1) - (1-d)\left(z'^2\dot\lambda(z')+2z'\lambda(z')\right)\right],
\end{align}
so that 
$$H(\bm\theta)=\mathbb E\left[\begin{matrix} H_{11}\bm X\bm X^\top & H_{12}\bm X\\ H_{12}\bm X^\top & H_{22}\end{matrix}\right].$$ 
In addition, the derivative of the mark satisfies
\begin{align*}
\partial_{\bm\beta}m(\bm W,\bm\theta) &= -\frac{\bm X}{\sigma}\left[d-(1-d)\dot\lambda(z')\right]
\end{align*}
and
\begin{align*}
\partial_\sigma m(\bm W,\bm\theta) = \frac{1}{\sigma}\left[-dz + (1-d)z'\dot\lambda(z')\right].
\end{align*}
Define the $(p+1)$-vector $\mathbb G(t,\bm\theta)=\mathbb E[-\partial_{\bm\theta}m(\bm W,\bm\theta)I(\bm X^\top\bm\beta\le t)]$ and its sample analog
\begin{align}\label{eqn:Gn-twopar}
\mathbb G_n(t,\hat{\bm\theta})=\frac1n\sum_{i=1}^n[-\partial_{\bm\theta}m(\bm W_i,\hat{\bm\theta})]I(\bm X_i^\top\hat{\bm\beta}\le t),\qquad
H_n(\hat{\bm\theta})=\frac1n\sum_{i=1}^n[-\partial_{\bm\theta}\bm s(\bm W_i,\hat{\bm\theta})],
\end{align}
where the entries of $H_n(\hat{\bm\theta})$ are the sample versions of \eqref{eqn:two-param-H}. The feasible process is
\begin{align}\label{eqn:two-param-process}
U_n(t,\hat{\bm\theta}) = \frac{1}{\sqrt n}\sum_{i=1}^n m(\bm W_i,\hat{\bm\theta})I(\bm X_i^\top\hat{\bm\beta}\le t),
\end{align}
with the KS-type test statistic 
$$KS_n=\max_{1\le j\le n}|U_n(\bm X_j^\top\hat{\bm\beta},\hat{\bm\theta})|.$$ 
In the case of unknown scale $\sigma$, the multiplier bootstrap process mimicking its first-order uniform representation is shown to have the following form:
\begin{align}\label{eqn:two-param-bootstrap}
U_n^{\ast}(t,\hat{\bm\theta}) = \frac{1}{\sqrt n}\sum_{i=1}^n V_i\left[m(\bm W_i,\hat{\bm\theta})I(\bm X_i^\top\hat{\bm\beta}\le t) - \mathbb G_n^\top(t,\hat{\bm\theta})H_n^{-1}(\hat{\bm\theta})\,\bm s(\bm W_i,\hat{\bm\theta})\right].
\end{align}
Note that when $\sigma$ is known and normalized to 1, $m=u$, $\bm s=u\bm X$, and $\mathbb G^\top H^{-1}\bm s_i = u_i\,G^\top H^{-1}\bm X_i$. The process and the bootstrap then reduce to their counterparts in Section 2.2. Under correct specification, the feasible process admits the first-order uniform representation
\begin{align}\label{eqn:twopar-representation}
U_n(t,\hat{\bm\theta}) = \frac{1}{\sqrt n}\sum_{i=1}^n\left[m(\bm W_i,\bm\theta_0)I(\bm X_i^\top\bm\beta_0\le t) - \mathbb G^\top(t,\bm\theta_0)H^{-1}(\bm\theta_0)\bm s(\bm W_i,\bm\theta_0)\right]+o_p(1),
\end{align}
uniformly in $t$, and the limiting covariance kernel of Theorem \ref{Thm.weakconvergence.twopar} simplifies, by the two-parameter identities of Lemma A.\ref{lem:twopar-id}, to
\begin{align}\label{eqn:twopar-kernel}
\mathbb E[\Sigma(t_1)\Sigma(t_2)]
&=\mathbb E\left[m^2(\bm W,\bm\theta_0)I(\bm X^\top\bm\beta_0\le t_1)I(\bm X^\top\bm\beta_0\le t_2)\right]\notag\\
&\quad-\mathbb G^\top(t_1,\bm\theta_0)H^{-1}(\bm\theta_0)\mathbb G(t_2,\bm\theta_0)\notag\\
&\le\mathbb E\left[m^2(\bm W,\bm\theta_0)I(\bm X^\top\bm\beta_0\le t_1)I(\bm X^\top\bm\beta_0\le t_2)\right].
\end{align}

The following theorems summarize the theoretical results for $U_n(t,\hat{\bm\theta}) $ and its bootstrap version $U_n^{\ast}(t,\hat{\bm\theta})$, which are designed for the case of unknown $\sigma$.

\begin{Theorem}\label{Thm.weakconvergence.twopar}
	Suppose that the model is correctly specified with parameter $\bm\theta_0$ and that Assumptions (A1)--(A6) in Appendix A hold, with the Donsker classes in (A6) replaced by their two-parameter analogs built from $m(\bm W,\bm\theta)$ and $\bm s(\bm W,\bm\theta)$, together with $\mathbb E|z|^4<\infty$. Then we have
	\[U_n(t,\hat{\bm\theta}) \Rightarrow U_\infty(t,\bm\theta_0)\text{ in } \ell^\infty(\mathbb R),\]
	where $U_\infty(t,\bm\theta_0)$ is a centered Gaussian process with covariance function $\mathbb E[\Sigma(t_1)\Sigma(t_2)]$, with
	\[\Sigma(t)=m(\bm W,\bm\theta_0)I(\bm X^\top\bm\beta_0\le t)-\mathbb G^\top(t,\bm\theta_0)H^{-1}(\bm\theta_0)\bm s(\bm W,\bm\theta_0).\]
	Consequently, $KS_n\overset{\mathcal D}{\to}\sup_{t\in\mathbb R}|U_\infty(t,\bm\theta_0)|$.
\end{Theorem}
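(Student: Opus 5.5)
The proof follows the one-parameter route of Theorem \ref{Thm.weakconvergence.stat}, with the mark $m$ and the joint score $\bm s$ replacing $u$ and $u\bm X$. The plan is to establish the representation \eqref{eqn:twopar-representation} in three steps. We first decompose
\[
U_n(t,\hat{\bm\theta})=U_n(t,\bm\theta_0)+\sqrt n\bigl[\bar M(t,\hat{\bm\theta})-\bar M(t,\bm\theta_0)\bigr]+R_n(t),
\qquad \bar M(t,\bm\theta)=\mathbb E[m(\bm W,\bm\theta)I(\bm X^\top\bm\beta\le t)],
\]
where $\bar M$ is evaluated with $\bm\theta$ fixed, and $R_n(t)$ is the increment of the centered empirical process $\nu_n(t,\bm\theta)=n^{-1/2}\sum_i\{m(\bm W_i,\bm\theta)I(\bm X_i^\top\bm\beta\le t)-\bar M(t,\bm\theta)\}$ between $\hat{\bm\theta}$ and $\bm\theta_0$. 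Note that $\bar M(t,\bm\theta_0)=0$ under correct specification.

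\textbf{Step 1 (stochastic equicontinuity).} The class $\{m(\cdot,\bm\theta)I(\bm x^\top\bm\beta\le t)\}$ is assumed Donsker by the two-parameter analog of (A6). Its $L^2$-modulus is continuous in $(t,\bm\theta)$: $m$ is Lipschitz in $\bm\theta$ with an envelope having a second moment, because $\sigma\ge\sigma_{\min}$, $\mathbb E|z|^4<\infty$ and $\lambda$ has linear growth; in addition, $\mathbb P(|\bm X^\top\bm\beta-\bm X^\top\bm\beta_0|\ge\epsilon)$ is small by the continuity in (A1). Together with $\hat{\bm\theta}\overset{p}{\to}\bm\theta_0$, this gives $\sup_t|R_n(t)|=o_p(1)$.

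\textbf{Step 2 (drift).} We differentiate $\bar M(t,\bm\theta)$ at $\bm\theta_0$. There are two contributions.
\begin{itemize}
\item Differentiating the mark gives $-\mathbb G(t,\bm\theta_0)$, using the formulas for $\partial_{\bm\beta}m$ and $\partial_\sigma m$ and dominated convergence.
\item Moving the indicator threshold gives a term proportional to $\mathbb E[m(\bm W,\bm\theta_0)\mid\bm X^\top\bm\beta_0=t]\,f_{\bm X^\top\bm\beta_0}(t)$. This vanishes because $\mathbb E[m\mid\bm X]=0$. To make it rigorous we condition on $\bm X$, writing $\bar M(t,\bm\theta)=\mathbb E[g(\bm X,\bm\theta)I(\bm X^\top\bm\beta\le t)]$ with $g(\bm X,\bm\theta_0)=0$ and $g$ smooth in $\bm\theta$. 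The indicator then only affects the remainder, at order $o(|\bm\theta-\bm\theta_0|)$ uniformly in $t$.
\end{itemize}
Hence $\sqrt n[\bar M(t,\hat{\bm\theta})-\bar M(t,\bm\theta_0)]=-\mathbb G^\top(t,\bm\theta_0)\sqrt n(\hat{\bm\theta}-\bm\theta_0)+o_p(1)$ uniformly in $t$. The uniformity of the remainder is the main obstacle. I would handle it by a mean-value expansion of $g(\bm X,\cdot)$ with the envelope $\sup_{\bm\theta}|\partial^2_{\bm\theta}g|$ integrable, together with a bound $|I(\bm X^\top\bm\beta\le t)-I(\bm X^\top\bm\beta_0\le t)|\le1$ multiplied by $|g(\bm X,\bm\theta)|=O(|\bm\theta-\bm\theta_0|)$, which is free of $t$.

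\textbf{Step 3 (MLE expansion).} The standard M-estimation argument applies: $H(\bm\theta_0)$ is positive definite, the score class is Donsker, and $\Theta$ is compact with $\sigma\ge\sigma_{\min}$. It gives $\sqrt n(\hat{\bm\theta}-\bm\theta_0)=H^{-1}(\bm\theta_0)n^{-1/2}\sum_i\bm s(\bm W_i,\bm\theta_0)+o_p(1)$. Substituting this into Step 2 yields \eqref{eqn:twopar-representation}.

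For weak convergence, the leading term of \eqref{eqn:twopar-representation} is an empirical process indexed by $\{m(\cdot,\bm\theta_0)I(\bm x^\top\bm\beta_0\le t)-\mathbb G^\top(t,\bm\theta_0)H^{-1}\bm s(\cdot,\bm\theta_0):t\in\mathbb R\}$. This class is the sum of a VC-type indicator class multiplied by a square-integrable mark and a finite-dimensional linear span with bounded coefficients, since $\mathbb G(t,\bm\theta_0)$ is bounded and continuous in $t$. It is therefore Donsker. Its summands are i.i.d. with mean $\mathbb E[\Sigma(t)]=0$, because $\mathbb E[m\mid\bm X]=0$ and $\mathbb E[\bm s]=\bm 0$. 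The finite-dimensional CLT plus tightness then gives $U_n(\cdot,\hat{\bm\theta})\Rightarrow U_\infty(\cdot,\bm\theta_0)$ with covariance $\mathbb E[\Sigma(t_1)\Sigma(t_2)]$. Finally, the KS statement follows from the continuous mapping theorem applied to $f\mapsto\sup_t|f(t)|$. As in the one-parameter case, the maximum over the jump points $\bm X_j^\top\hat{\bm\beta}$ equals this supremum, because $U_n(\cdot,\hat{\bm\theta})$ is a right-continuous step function with jumps only at those points.
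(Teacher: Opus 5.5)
Your route is valid but differs from the paper's. The paper applies Lemma A.2 to $m$ and splits $U_n(t,\hat{\bm\theta})-U_n(t,\bm\theta_0)$ at the sample level into two parts:
\begin{itemize}
\item a smooth part, $n^{-1/2}\sum_i[m(\bm W_i,\hat{\bm\theta})-m(\bm W_i,\bm\theta_0)]I(\bm X_i^\top\hat{\bm\beta}\le t)$, handled by a pathwise mean-value expansion and a uniform law of large numbers for the derivative class;
\item an indicator-jump part, $n^{-1/2}\sum_i m(\bm W_i,\bm\theta_0)[I(\bm X_i^\top\hat{\bm\beta}\le t)-I(\bm X_i^\top\bm\beta_0\le t)]$, which is exactly centered because $\mathbb E[m(\bm W,\bm\theta_0)|\bm X]=0$, and so is $o_p(1)$ by equicontinuity with the mark frozen at $\bm\theta_0$.
\end{itemize}
You instead center the whole $(t,\bm\theta)$-indexed process at its mean map and differentiate $\bar M(t,\bm\theta)$, the classical center-then-differentiate argument. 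Your version uses the full Donsker class of the two-parameter (A6) and needs only the mean map to be differentiable, so it would also cover marks that are not pathwise smooth in $\bm\theta$. The paper's split uses correct specification more directly: the jump term has mean zero identically, so nothing is differentiated through the indicator. The price is a random Taylor expansion with observation-specific mean-value points.

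The step that fails as written is your handling of the uniform remainder in Step 2. Bounding $|I(\bm X^\top\bm\beta\le t)-I(\bm X^\top\bm\beta_0\le t)|\le 1$ and $|g(\bm X,\bm\theta)|\le C(\bm X)\|\bm\theta-\bm\theta_0\|$ gives only $O(\|\bm\theta-\bm\theta_0\|)$ uniformly in $t$. After multiplying by $\sqrt n$ that is $O_p(1)$, not $o_p(1)$.

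What you need is $\sup_{t}\mathbb E\bigl[C(\bm X)\,|I(\bm X^\top\bm\beta\le t)-I(\bm X^\top\bm\beta_0\le t)|\bigr]\to0$ as $\bm\beta\to\bm\beta_0$. This follows from the symmetric-difference bound in (A1), which gives probability at most $C_{\mathcal K}\|\bm\beta-\bm\beta_0\|$ uniformly in $t$, after truncating the integrable $C(\bm X)$ at a level $K$.

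A cleaner equivalent is to write $\bar M(t,\bm\theta)=\mathbb E[\{m(\bm W,\bm\theta)-m(\bm W,\bm\theta_0)\}I(\bm X^\top\bm\beta\le t)]$, since the threshold term with the $\bm\theta_0$ mark vanishes exactly. Then use the same (A1) bound to replace $I(\bm X^\top\bm\beta\le t)$ by $I(\bm X^\top\bm\beta_0\le t)$ in the leading derivative term.

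Likewise, in Step 1 the uniformity in $t$ of the $L^2$-modulus needs the (A1) control of the index near the threshold. The smallness of $\mathbb P(|\bm X^\top(\bm\beta-\bm\beta_0)|\ge\epsilon)$ alone does not give it. With these repairs the rest of your argument goes through.
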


\begin{Theorem}\label{thm.twopar.consistency}
	Suppose that $\hat{\bm\theta}$ converges in probability to a unique pseudo-true value $\bm\theta^*=(\bm\beta^{*\top},\sigma^*)^\top\in\Theta$ with $\bm\beta^*\neq\bm 0$ under $H_1^{2p}$, satisfying $\mathbb E[\bm s(\bm W,\bm\theta^*)]=0$. Under Assumptions (A1), (A2), (A3), and (A6) in Appendix A (with the two-parameter analogs as above),
	\[\sup_{t\in\mathbb R}\left|\frac{1}{\sqrt n}U_n(t,\hat{\bm\theta}) - \mu(t,\bm\theta^*)\right| = o_p(1),\]
	where $\mu(t,\bm\theta^*)=\mathbb E[m(\bm W,\bm\theta^*)I(\bm X^\top\bm\beta^*\le t)]$ is the mark moment. 
Consequently, if the index restriction $\mathbb E[m(\bm W,\bm\theta^*)|\bm X^\top\bm\beta^*]=0$ fails a.s.\ (equivalently, $\mu(\cdot,\bm\theta^*)\not\equiv0$), then $\sup_{t\in\mathbb R}|\mu(t,\bm\theta^*)|>0$ and
\[ \frac{KS_n}{\sqrt n}\overset{p}{\to}\sup_{t\in\mathbb R}|\mu(t,\bm\theta^*)|>0,\]
so that $\lim_{n\to\infty}\mathbb P(KS_n>c_\tau)=1$ for any fixed critical value $c_\tau$; the test is consistent against every such alternative.
\end{Theorem}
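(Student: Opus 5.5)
The plan mirrors the proof of Theorem \ref{thm.consistency.stat}, now with $\bm\theta=(\bm\beta^\top,\sigma)^\top$ in place of $\bm\beta$. Write
\[\frac{1}{\sqrt n}U_n(t,\hat{\bm\theta})=\bar U_n(t,\hat{\bm\theta}),\qquad \bar U_n(t,\bm\theta)=\frac1n\sum_{i=1}^n m(\bm W_i,\bm\theta)I(\bm X_i^\top\bm\beta\le t),\]
and decompose
\[\bar U_n(t,\hat{\bm\theta})-\mu(t,\bm\theta^*)=\bigl[\bar U_n(t,\hat{\bm\theta})-\mu(t,\hat{\bm\theta})\bigr]+\bigl[\mu(t,\hat{\bm\theta})-\mu(t,\bm\theta^*)\bigr],\]
where $\mu(t,\bm\theta)=\mathbb E[m(\bm W,\bm\theta)I(\bm X^\top\bm\beta\le t)]$. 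The first bracket is bounded by $\sup_{t,\bm\theta\in\Theta}|\bar U_n(t,\bm\theta)-\mu(t,\bm\theta)|$. Under the two-parameter analog of (A6), the class $\{m(\cdot,\bm\theta)I(\bm x^\top\bm\beta\le t):t\in\mathbb R,\bm\theta\in\Theta\}$ is Donsker, hence Glivenko--Cantelli. It has an integrable envelope, because $\sigma\ge\sigma_{\min}$, $|\lambda(z')|\le C(1+|z'|)$, and (A3) bounds the moments of $Y$ and $\bm X$. The first bracket is therefore $o_p(1)$ uniformly in $t$. We do not need the $\sqrt n$ rate here, only the $1/n$ normalization, and this is why no PEU term survives.

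For the second bracket, I would show that $(t,\bm\theta)\mapsto\mu(t,\bm\theta)$ is uniformly continuous in $\bm\theta$, uniformly in $t$, near $\bm\theta^*$. Split
\[|\mu(t,\bm\theta)-\mu(t,\bm\theta^*)|\le \mathbb E|m(\bm W,\bm\theta)-m(\bm W,\bm\theta^*)|+\mathbb E\bigl[|m(\bm W,\bm\theta^*)|\,|I(\bm X^\top\bm\beta\le t)-I(\bm X^\top\bm\beta^*\le t)|\bigr].\]
The first term does not depend on $t$. It tends to zero as $\bm\theta\to\bm\theta^*$ by continuity of $m$ in $\bm\theta$ together with dominated convergence, using the envelope above. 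The second term is the main obstacle, since the supremum over $t$ of the indicator difference must be controlled. Following the one-parameter proof, I would bound the indicator difference by $I(|\bm X^\top\bm\beta^*-t|\le \|\bm X\|\|\bm\beta-\bm\beta^*\|)$. I would then apply Cauchy--Schwarz, or truncate $\|\bm X\|\le M$ and $|m|\le K$, to reduce the second term to
\[\sup_t\mathbb P\bigl(|\bm X^\top\bm\beta^*-t|\le M\delta\bigr)+o(1)\quad\text{as }M,K\to\infty .\]
By (A1), $\bm X^\top\bm\beta^*$ has a continuous distribution; this is where $\bm\beta^*\neq\bm 0$ is used. A continuous distribution function is uniformly continuous, so this supremum tends to zero as $\delta\to0$. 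Combining this with $\hat{\bm\theta}\overset{p}{\to}\bm\theta^*$ shows that the second bracket is $o_p(1)$ uniformly in $t$, which proves the displayed uniform convergence.

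The consequences follow quickly. If $\mu(\cdot,\bm\theta^*)\equiv0$, then $\mathbb E[m(\bm W,\bm\theta^*)g(\bm X^\top\bm\beta^*)]=0$ for all indicator functions $g$ of half-lines. These generate the Borel $\sigma$-field of the index, so $\mathbb E[m(\bm W,\bm\theta^*)|\bm X^\top\bm\beta^*]=0$ a.s. Contrapositively, if the index restriction fails then $\sup_t|\mu(t,\bm\theta^*)|>0$. Since $KS_n=\sup_t|U_n(t,\hat{\bm\theta})|$ (the step-function argument after Corollary \ref{coro.weakconv.stat} applies verbatim), the reverse triangle inequality gives
\[\Bigl|KS_n/\sqrt n-\sup_t|\mu(t,\bm\theta^*)|\Bigr|\le \sup_t\Bigl|n^{-1/2}U_n(t,\hat{\bm\theta})-\mu(t,\bm\theta^*)\Bigr|=o_p(1).\]
Hence $KS_n/\sqrt n$ converges in probability to a positive constant, so $KS_n\to\infty$ in probability and $\mathbb P(KS_n>c_\tau)\to1$ for any fixed $c_\tau$.
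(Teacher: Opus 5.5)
Your proposal is correct and follows essentially the same route as the paper, which obtains this theorem by transcribing the proof of Theorem \ref{thm.consistency.stat}. That proof combines three ingredients: a uniform law of large numbers over the class $\{m(\bm W,\bm\theta)I(\bm X^\top\bm\beta\le t)\}$ jointly in $(t,\bm\theta)$, continuity of the mean map in $\bm\theta$ uniformly in $t$ (using (A1) at $\bm\beta^*\neq\bm 0$ for the indicator part), and $\hat{\bm\theta}\overset{p}{\to}\bm\theta^*$. Your last step differs slightly: the exact identity $KS_n=\sup_t|U_n(t,\hat{\bm\theta})|$ with the reverse triangle inequality, together with the explicit $\pi$--$\lambda$ argument for the equivalence with $\mu(\cdot,\bm\theta^*)\not\equiv0$, is cleaner than the paper's appeal to convergence of the empirical distribution of the estimated indices, but the underlying argument is the same.
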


\begin{Theorem}\label{thm.twopar.local}
	Consider the sequence of local alternatives (contiguous perturbations of the correctly specified model with parameter $\bm\theta_0$) $\mathbb E[m(\bm W,\bm\theta_0)|\bm X]=n^{-1/2}a(\bm X^\top\bm\beta_0)$ and $\mathbb E[\bm s(\bm W,\bm\theta_0)|\bm X]=n^{-1/2}\bm q(\bm X)$, where $\bm q(\bm X)=\bigl(a(\bm X^\top\bm\beta_0)\,\bm X^\top/\sigma_0,\,b(\bm X^\top\bm\beta_0)\bigr)^\top$ for a function $b$, the function $a$ satisfies the conditions in \eqref{eqn:local.alt}, and $\mathbb E\|\bm q(\bm X)\|<\infty$. Since $u_\beta=\bm X m/\sigma$, the first $p$ components of the condition on $\mathbb E[\bm s|\bm X]$ are automatic given the condition on $\mathbb E[m|\bm X]$; the substantive additional restriction is that the scale-score deviation $\mathbb E[u_\sigma(\bm W,\bm\theta_0)|\bm X]$ is an index function $b(\bm X^\top\bm\beta_0)$. The theorem extends 
to a general function $b(\bm X)$ with the same integrability and continuity conditions. Then, under the extended assumptions,
	\[U_n(t,\hat{\bm\theta}) \Rightarrow U_\infty(t,\bm\theta_0) + D(t,\bm\theta_0) \text{ in } \ell^\infty(\mathbb R),\]
	with the deterministic drift function
	\[D(t,\bm\theta_0) = \mathbb E[a(\bm X^\top\bm\beta_0)I(\bm X^\top\bm\beta_0\le t)] - \mathbb G^\top(t,\bm\theta_0)H^{-1}(\bm\theta_0)\,\mathbb E[\bm q(\bm X)],\]
	and $KS_n\overset{\mathcal D}{\to}\sup_{t\in\mathbb R}|U_\infty(t,\bm\theta_0)+D(t,\bm\theta_0)|$. This is the two-parameter analog of Theorem \ref{thm.localpower.stat}, with the correction now computed from the joint score.
\end{Theorem}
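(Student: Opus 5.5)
The proof follows the one-parameter argument for Theorem \ref{thm.localpower.stat}, with the scalar correction $G^\top H^{-1}\bm X_i u_i$ replaced by the joint-score correction $\mathbb G^\top H^{-1}\bm s_i$. We write the feasible process in three pieces, all uniformly in $t$:
\[U_n(t,\hat{\bm\theta})=U_n^0(t)+\bigl[U_n(t,\hat{\bm\theta})-U_n(t,\bm\theta_0)\bigr],\qquad U_n^0(t)=n^{-1/2}\sum_i m(\bm W_i,\bm\theta_0)I(\bm X_i^\top\bm\beta_0\le t).\]
First we center $U_n^0$. Under the local sequence, $\mathbb E[m(\bm W,\bm\theta_0)I(\bm X^\top\bm\beta_0\le t)]=n^{-1/2}\mathbb E[a(\bm X^\top\bm\beta_0)I(\bm X^\top\bm\beta_0\le t)]$, so
\[U_n^0(t)=\tilde U_n^0(t)+\mathbb E[a(\bm X^\top\bm\beta_0)I(\bm X^\top\bm\beta_0\le t)],\]
where $\tilde U_n^0$ is a centered empirical process indexed by the Donsker class of (A6). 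Its covariance kernel equals the null kernel up to $O(n^{-1/2})$, because the perturbation is contiguous and the conditional second moments of $m$ stay at their model values. The Lindeberg CLT for triangular arrays, together with the Donsker property and uniform integrability from $\mathbb E|z|^4<\infty$, then gives weak convergence of $\tilde U_n^0$ jointly with $n^{-1/2}\sum_i[\bm s(\bm W_i,\bm\theta_0)-n^{-1/2}\bm q(\bm X_i)]$.

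Second, we handle the estimator. Expanding the score equation $\sum_i\bm s(\bm W_i,\hat{\bm\theta})=0$ around $\bm\theta_0$, using consistency and the uniform law of large numbers for $-\partial_{\bm\theta}\bm s$ (which converges to $H(\bm\theta_0)$, positive definite), gives
\[\sqrt n(\hat{\bm\theta}-\bm\theta_0)=H^{-1}(\bm\theta_0)\,n^{-1/2}\sum_i\bigl[\bm s(\bm W_i,\bm\theta_0)-n^{-1/2}\bm q(\bm X_i)\bigr]+H^{-1}(\bm\theta_0)\mathbb E[\bm q(\bm X)]+o_p(1).\]
This is the drift of the MLE under the alternative, and it comes from the term $n^{-1/2}\sum_i\mathbb E[\bm s_i|\bm X_i]=n^{-1/2}\sum_i n^{-1/2}\bm q(\bm X_i)\to\mathbb E\bm q$ by the LLN. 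The only place the hypothesis on $b$ enters is $\mathbb E\|\bm q\|<\infty$, so the statement extends verbatim to a general $b(\bm X)$.

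Third, and this is the main obstacle, we need the uniform linearization
\[U_n(t,\hat{\bm\theta})-U_n(t,\bm\theta_0)=-\mathbb G^\top(t,\bm\theta_0)\sqrt n(\hat{\bm\theta}-\bm\theta_0)+o_p(1).\]
The difficulty is that the indicator $I(\bm X^\top\bm\beta\le t)$ moves with $\bm\beta$ and is not differentiable. We split the difference into a centered part and a mean part. The centered part, $\nu_n(t,\hat{\bm\theta})-\nu_n(t,\bm\theta_0)$, is $o_p(1)$ by stochastic equicontinuity of the two-parameter Donsker class of (A6) with $L^2$-continuity in $\bm\theta$, since $\hat{\bm\theta}\to\bm\theta_0$. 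For the mean part, $\sqrt n\,[\mathbb E m(\bm W,\bm\theta)I(\bm X^\top\bm\beta\le t)]$ evaluated at $\hat{\bm\theta}$ minus its value at $\bm\theta_0$, we condition on $\bm X$ and write it as $\sqrt n\,\mathbb E[g_n(\bm X,\bm\theta)I(\bm X^\top\bm\beta\le t)]$, where $g_n(\bm X,\bm\theta)=\mathbb E[m(\bm W,\bm\theta)|\bm X]$. At $\bm\theta_0$ we have $g_n(\cdot,\bm\theta_0)=O(n^{-1/2})$. The indicator-shift contribution is therefore $\sqrt n\cdot O(n^{-1/2})\cdot O_p(\|\hat{\bm\theta}-\bm\theta_0\|)$, using the bounded density of the index in (A1), and this is $o_p(1)$. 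The derivative of $g_n$ in $\bm\theta$ yields $-\mathbb G(t,\bm\theta_0)+o(1)$, uniformly in $t$, by dominated convergence with the envelope from (A3) and the continuity of $\dot\lambda$. Under the null this is exactly the argument behind \eqref{eqn:twopar-representation}; the local perturbation changes only $O(n^{-1/2})$ terms, which vanish after multiplication by $O_p(1)$.

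Finally, combining the three steps gives
\begin{align*}
U_n(t,\hat{\bm\theta})&=n^{-1/2}\sum_i\bigl[\tilde m_iI(\bm X_i^\top\bm\beta_0\le t)-\mathbb G^\top(t,\bm\theta_0)H^{-1}(\bm\theta_0)\tilde{\bm s}_i\bigr]\\
&\quad+\mathbb E[a(\bm X^\top\bm\beta_0)I(\bm X^\top\bm\beta_0\le t)]-\mathbb G^\top(t,\bm\theta_0)H^{-1}(\bm\theta_0)\mathbb E[\bm q(\bm X)]+o_p(1),
\end{align*}
where $\tilde m_i$ and $\tilde{\bm s}_i$ denote the conditionally centered versions of $m(\bm W_i,\bm\theta_0)$ and $\bm s(\bm W_i,\bm\theta_0)$. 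The centered sum converges to $U_\infty(t,\bm\theta_0)$ of Theorem \ref{Thm.weakconvergence.twopar}: its finite-dimensional laws converge by the Lindeberg CLT with the kernel converging to the null kernel, and tightness follows from the Donsker property of the product class. The remaining terms form the deterministic drift $D(t,\bm\theta_0)$ stated in the theorem. The continuous mapping theorem applied to $f\mapsto\sup_t|f(t)|$, together with the observation after Corollary \ref{coro.weakconv.stat} that the maximum over sample points equals the supremum, gives the limit of $KS_n$.
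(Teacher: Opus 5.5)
Your proposal is correct and is essentially the paper's argument: you conditionally center $m$ and $\bm s$, expand the MLE with the drift $H^{-1}(\bm\theta_0)\mathbb E[\bm q(\bm X)]$, plug this into the uniform linearization so that the $\tilde{\bm s}_i$-terms assemble into the first-order representation, and read off $D(t,\bm\theta_0)$; your split of the linearization into a centered equicontinuity term plus a mean-map term (where the indicator shift is negligible because $\mathbb E[m(\bm W,\bm\theta_0)|\bm X]=n^{-1/2}a(\bm X^\top\bm\beta_0)$) is just a repackaging of the smooth/indicator-jump decomposition of Lemma A.\ref{lemma.score.A2}. One small correction: since $a$ need not be bounded, Cauchy--Schwarz with the symmetric-difference bound in (A1) bounds that indicator-shift term by $O_p(\|\hat{\bm\theta}-\bm\theta_0\|^{1/2})$ rather than $O_p(\|\hat{\bm\theta}-\bm\theta_0\|)$, which is still $o_p(1)$.
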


\begin{Theorem}\label{thm.twopar.bootstrap}
	Under the extended assumptions and under  correct specification with parameter $\bm\theta_0$ or the sequence of local alternatives of Theorem \ref{thm.twopar.local}, $U_n^{\ast}(t,\hat{\bm\theta})$ in \eqref{eqn:two-param-bootstrap} converges weakly in $\ell^\infty(\mathbb R)$ to $U_\infty(t,\bm\theta_0)$ conditional on the original data and in probability. Under  correct specification, $KS_n^{\ast}$ has the same limiting null distribution as $KS_n$, conditional on the data.
	
	Under $H_1^{2p}$, with $\bm\theta^*=(\bm\beta^{*\top},\sigma^*)^\top=\mathrm{plim}\,\hat{\bm\theta}$, $\bm\beta^*\neq\bm 0$, and $H(\bm\theta^*)$ nonsingular, the same argument yields $U_n^{\ast}(t,\hat{\bm\theta})\Rightarrow_p U_\infty^{\ast}(t,\bm\theta^*)$ in $\ell^\infty(\mathbb R)$ conditional on the original data and in probability, where $U_\infty^{\ast}(t,\bm\theta^*)$ is a centered Gaussian process with covariance kernel $\mathbb E[\Sigma^*(t_1;\bm\theta^*)\Sigma^*(t_2;\bm\theta^*)]$ and $\Sigma^*(t;\bm\theta^*)=m(\bm W,\bm\theta^*)I(\bm X^\top\bm\beta^*\le t)-\mathbb G^\top(t,\bm\theta^*)H^{-1}(\bm\theta^*)\bm s(\bm W,\bm\theta^*)$; consequently, $KS_n^{\ast}=O_p(1)$ while $KS_n\to\infty$ in probability, so the test is consistent against every fixed alternative in $H_1^{2p}$ for which the index projection of the mark moment at the pseudo-true value is nonzero.
\end{Theorem}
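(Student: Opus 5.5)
The proof follows the one-parameter argument of Theorem \ref{thm.bootstrap.stat}, with the scalar weight $u_i\bm X_i$ replaced by the joint score $\bm s(\bm W_i,\bm\theta)$ and $G,H$ replaced by $\mathbb G$ and the $(p+1)$-dimensional information $H(\bm\theta)$. Write $U_n^\ast(t,\hat{\bm\theta})=\mathcal U_n^\ast(t,\bm\theta_0)+R_n^\ast(t)$, where
\[\mathcal U_n^\ast(t,\bm\theta)=\frac1{\sqrt n}\sum_{i=1}^n V_i\left[m(\bm W_i,\bm\theta)I(\bm X_i^\top\bm\beta\le t)-\mathbb G^\top(t,\bm\theta)H^{-1}(\bm\theta)\bm s(\bm W_i,\bm\theta)\right].\]
The plan is to show three things:
\begin{enumerate}
\item[(i)] conditionally on the data, $\mathcal U_n^\ast(\cdot,\bm\theta_0)\Rightarrow_p U_\infty(\cdot,\bm\theta_0)$;
\item[(ii)] $\sup_t|R_n^\ast(t)|=o_{p^\ast}(1)$ in probability;
\item[(iii)] the same two facts hold under the contiguous local alternatives, and under $H_1^{2p}$ with $\bm\theta_0$ replaced by $\bm\theta^\ast$.
\end{enumerate}

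For (i), the summands are $V_i f_t(\bm W_i)$, with $f_t$ ranging over the class $\{m(\cdot,\bm\theta_0)I(\bm X^\top\bm\beta_0\le t)-\mathbb G^\top(t,\bm\theta_0)H^{-1}(\bm\theta_0)\bm s(\cdot,\bm\theta_0)\}$. This class is a sum of an indicator-marked VC-type class and a finite-dimensional linear span. Its envelope is square integrable because $\mathbb E|z|^4<\infty$, $\lambda(z')$ grows linearly, and $\sigma\ge\sigma_{\min}$. It is therefore Donsker under the two-parameter version of (A6). The conditional multiplier central limit theorem (van der Vaart and Wellner, Theorem 2.9.6/2.9.7), with the multipliers satisfying (A7), then gives conditional weak convergence in probability to the Gaussian process with covariance $\mathbb E[\Sigma(t_1)\Sigma(t_2)]$. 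This is exactly the kernel of Theorem \ref{Thm.weakconvergence.twopar}.

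For (ii), split $R_n^\ast$ into two parts. The first is the mark part, $n^{-1/2}\sum_i V_i[m(\bm W_i,\hat{\bm\theta})I(\bm X_i^\top\hat{\bm\beta}\le t)-m(\bm W_i,\bm\theta_0)I(\bm X_i^\top\bm\beta_0\le t)]$. The second is the correction part. The mark part is handled by conditional asymptotic equicontinuity of the multiplier process indexed by the parameter-dependent class $\{m(\cdot,\bm\theta)I(\bm X^\top\bm\beta\le t):\bm\theta\in\Theta,\ t\in\mathbb R\}$. The multiplier process has zero conditional mean, so no drift term appears, unlike in the derivation of \eqref{eqn:twopar-representation}. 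It then suffices that the $L_2(P)$ distance between the two indexing functions tends to zero uniformly in $t$ as $\hat{\bm\theta}\to\bm\theta_0$. This follows from continuity of $m$ in $\bm\theta$, dominated convergence with the fourth-moment envelope, and (A1), which controls $P(|\bm X^\top\bm\beta_0-t|\le\|\bm X\|\,\|\hat{\bm\beta}-\bm\beta_0\|)$ through the continuity of the index distribution.

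For the correction part, write
\[\mathbb G_n^\top H_n^{-1}\,n^{-1/2}\sum_i V_i\bm s(\bm W_i,\hat{\bm\theta})-\mathbb G^\top H^{-1}\,n^{-1/2}\sum_i V_i\bm s(\bm W_i,\bm\theta_0).\]
The vector $n^{-1/2}\sum_iV_i\bm s(\bm W_i,\hat{\bm\theta})$ is $O_{p^\ast}(1)$ and is within $o_{p^\ast}(1)$ of its value at $\bm\theta_0$, by the same equicontinuity argument. It remains to prove $\sup_t\|\mathbb G_n(t,\hat{\bm\theta})-\mathbb G(t,\bm\theta_0)\|=o_p(1)$ and $H_n(\hat{\bm\theta})\to_pH(\bm\theta_0)$, which is nonsingular. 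Both follow from a uniform law of large numbers over a Glivenko--Cantelli class and continuity of the population maps in $\bm\theta$, exactly as in Lemma A.1. Continuity of the population maps uses the explicit derivative formulas for $\partial_{\bm\theta}m$ and the blocks in \eqref{eqn:two-param-H}, all dominated by $C(1+|z|^2+|z'|^2)(1+\|\bm X\|^2)$.

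I expect the main obstacle to be the uniform-in-$t$ consistency of $\mathbb G_n$. The indicator $I(\bm X_i^\top\hat{\bm\beta}\le t)$ moves with $\hat{\bm\beta}$, and the random weights involve $z^2$-type terms through $H_{22}$ and $z$ through $\partial_\sigma m$. So the envelope argument must be done carefully with a truncation at large $|z|$, and the indicator discrepancy must be bounded via (A1) uniformly over $t$.

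Under the local alternatives, contiguity (Le Cam's third lemma is not needed here) implies that $\hat{\bm\theta}\to_p\bm\theta_0$ still holds. The sample moments $\mathbb G_n$ and $H_n$ also keep their limits, since the conditional moments differ by $O(n^{-1/2})$. The multiplier covariance $n^{-1}\sum_i f_{t_1}f_{t_2}(\bm W_i)$ converges to the same kernel. The drift never enters, because the $V_i$ are centered, so the same limit $U_\infty(\cdot,\bm\theta_0)$ results. Under $H_1^{2p}$ the identical steps with $\bm\theta^\ast$ and the assumed nonsingular $H(\bm\theta^\ast)$ give $U_\infty^\ast$. The KS conclusions follow by the continuous mapping theorem for bounded-Lipschitz conditional convergence, together with continuity of the law of $\sup_t|U_\infty|$ to obtain convergence of conditional quantiles. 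Combining $KS_n^\ast=O_p(1)$ with Theorem \ref{thm.twopar.consistency} then yields that the bootstrap $p$-value tends to zero.
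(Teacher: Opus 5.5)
Your proposal is correct and follows essentially the same route as the paper. You decompose $U_n^\ast(t,\hat{\bm\theta})$ into the multiplier sum of the $\Sigma_i(t)$ at $\bm\theta_0$, which the conditional multiplier CLT handles over the Donsker class, plus a remainder that is negligible by uniform consistency of $\mathbb G_n$ and $H_n$ and conditional equicontinuity of the multiplier processes; the local alternatives follow from the centered multipliers (plus contiguity) and the fixed alternative from replacing $\bm\theta_0$ by $\bm\theta^\ast$. The only difference is cosmetic: you treat the smooth and indicator-jump parts of the mark together through $L_2(P)$-continuity of $(\bm\theta,t)\mapsto m(\cdot,\bm\theta)I(\bm X^\top\bm\beta\le t)$, which needs only consistency of $\hat{\bm\theta}$, whereas the paper uses a mean-value split that relies on the $\sqrt n$ rate.
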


The two-parameter results parallel their known-$\sigma$ counterparts term by term: with the joint score $\bm s$, the $(p+1)$-vector $\mathbb G(t,\bm\theta)$ and the $(p+1)\times(p+1)$ information matrix $H(\bm\theta)$ replacing $u\bm X$, $G(t,\bm\beta)$ and $H(\bm\beta)$, the first-order uniform representation, the PEU correction $-\mathbb G^\top(t,\bm\theta)H^{-1}(\bm\theta)\bm s(\bm W,\bm\theta)$ and the multiplier bootstrap take the same form. In particular, when the scale is known and $\sigma=1$, deleting the scale component gives $\bm s=u\bm X$, $\mathbb G=G$, $H=H(\bm\beta)$ and $m=u$, so Theorems \ref{Thm.weakconvergence.twopar}--\ref{thm.twopar.bootstrap} reduce exactly to Theorems \ref{Thm.weakconvergence.stat}--\ref{thm.bootstrap.stat}. When $\sigma$ is unknown, however, the two limits are not related by simply setting $\sigma=1$: both the mark and the score change, so the known-$\sigma$ critical values are not generally valid for the unknown-$\sigma$ statistic (nor conversely), and the two-parameter version must be used whenever the scale is estimated. The analog of the variance identity is
\begin{align*}
\mathbb E[\Sigma(t)^2]&=\mathbb E[m^2(\bm W,\bm\theta_0)I(\bm X^\top\bm\beta_0\le t)]-\mathbb G^\top(t,\bm\theta_0)H^{-1}(\bm\theta_0)\mathbb G(t,\bm\theta_0)\\
&\le\mathbb E[m^2(\bm W,\bm\theta_0)I(\bm X^\top\bm\beta_0\le t)],
\end{align*}
so the correction can only shrink the limiting variance relative to the process without the correction term. The reduction is now driven by both the regression and the scale score, and the bootstrap correction remains essential. 

Under 
$H_1^{2p}$, Theorem \ref{thm.twopar.consistency} gives the mark moment $\mu(t,\bm\theta^*)=\mathbb E[m(\bm W,\bm\theta^*)I(\bm X^\top\bm\beta^*\le t)]$, so the two-parameter test is consistent against every fixed alternative whose index projection at the pseudo-true value is nonzero. Theorem \ref{thm.twopar.bootstrap} shows that the bootstrap statistic is $O_p(1)$ while $KS_n\to\infty$ in probability, so the bootstrap-based test is also consistent. Under the local alternatives of Theorem \ref{thm.twopar.local}, the drift is projected with the joint score,
\[D(t,\bm\theta_0)=\mathbb E[a(\bm X^\top\bm\beta_0)I(\bm X^\top\bm\beta_0\le t)]-\mathbb G^\top(t,\bm\theta_0)H^{-1}(\bm\theta_0)\mathbb E[\bm q(\bm X)],\]
and the undetectable local alternatives are those whose deviation is absorbed by the joint score. Compared with the known-$\sigma$ case, the correction now also contains the scale-score direction. Besides the local perturbations absorbed by the regression score, those absorbed by a $\sqrt n$-shift of $\sigma$ are undetectable at the parametric rate. The bootstrap remains valid under the local alternatives, so the limiting local power is nontrivial whenever $D\not\equiv0$.

\section{Numerical Studies}\label{sec:simulations}
We evaluate the finite-sample performance of the proposed test in three complementary designs (Settings~I and II below, the index study of Section \ref{sec:indexinvisible}, and an unknown-scale design). For Settings~I and II, and for the index study, we also compare the proposed test with two existing specification tests under a common protocol: the same sample sizes, the same 1000 replications, and $B=1000$ bootstrap replications for every method.
The significance levels are set to 0.01, 0.05, and 0.1 for all simulations.
For Settings~I and II, we use sample sizes $n=100$ and $n=200$ and repeat the tests 1000 times, with $B=1000$ bootstrap replications;  The unknown-scale design uses the sample sizes and replication counts given in its subsection. The proposed test uses the multiplier bootstrap of Section \ref{subsection_bootstrap_pro}. Section \ref{sec:size-power} reports size and power together with the comparison with existing tests, and Section \ref{sec:indexinvisible} the index-visible/index-invisible study.

\subsection{Size, Power, and Comparison with Existing Tests}\label{sec:size-power}\label{sec:benchmarks}

We compare the proposed test with two existing procedures that test correct specification of the Tobit conditional mean. For $r(\bm W,\bm\theta)=Y-g(\bm X,\bm\theta)$ with $g(\bm x,\bm\theta)=\Phi(\eta)(\bm x^\top\bm\beta+\sigma\lambda(\eta))$ and $\eta=\bm x^\top\bm\beta/\sigma$, the conditional-mean restriction $\mathbb E[r(\bm W,\bm\theta_0)|\bm X]=0$ a.s.\ tested by the competitors and the mark restriction $\mathbb E[m(\bm W,\bm\theta_0)|\bm X]=0$ a.s.\ tested here are both implied by correct specification of the Tobit model, and the two restrictions coincide when the censoring probability equals $\Phi(\bm X^\top\bm\beta_0/\sigma_0)$; the three tests therefore assess the implications of correct specification of the same Tobit model and differ in the statistic, the conditioning variable, and the calibration.
The first competitor is the cumulative-residual Cram\'er--von Mises test of \cite{Sun2021JKSS}. They base their test on the estimated empirical process $W_n(\bm x)=n^{-1/2}\sum_i\hat\xi_iI(\bm X_i\le\bm x)$, where $\hat\xi_i=Y_i-g(\bm X_i,\hat{\bm\theta}_n)$, and on the Cram\'er--von Mises statistic $T_n=\int W_n(\bm x)^2\text{d}F_n(\bm x)=n^{-2}\sum_i[\sum_j\hat\xi_jI(\bm X_j\le\bm X_i)]^2$, with critical values obtained from the wild bootstrap $Y_i^b=\max\{g(\bm X_i,\hat{\bm\theta}_n)+(Y_i-g(\bm X_i,\hat{\bm\theta}_n))V_i,\,0\}$; see their equations (2.2)--(2.3) and Section~3.2. Since our design and test are based on the normal Tobit likelihood, we implement the test statistic using the maximum likelihood estimator and the normal conditional mean, and update the parameter on each bootstrap sample with one Newton step; this is the efficient version of their test in our setting.
The second competitor is the kernel-smoothing lack-of-fit test of \cite{Koul2014JMA}, a weighted adaptation of \cite{Zheng1996} to Tobit regression. With residuals $r_i=Y_i-g(\bm X_i,\hat{\bm\theta})$, a product kernel $K$ and bandwidth $h$, the statistic is $I_n=[n(n-1)h^p]^{-1}\sum_{i\neq j}K((\bm X_i-\bm X_j)/h)r_ir_j$, and the standardized version $J_n=nh^{p/2}I_n/\hat\sigma_n$ is asymptotically standard normal under $H_0$, where $\hat\sigma_n^2=2[n(n-1)h^p]^{-1}\sum_{i\neq j}K((\bm X_i-\bm X_j)/h)^2r_i^2r_j^2$ is a consistent estimator of the asymptotic variance. We use the product standard-normal kernel and the bandwidth $h=n^{-1/7}$, as chosen by \cite{Koul2014JMA} and \cite{Sun2021JKSS}. Because the normal approximation for this degenerate $U$-statistic is inaccurate in samples of this size, we calibrate the test by a wild bootstrap ($r_i^b=r_iV_i$); the parameter-estimation effect is asymptotically negligible for this second-order statistic. Unlike the kernel test, the proposed test requires no bandwidth selection, and unlike full-covariate methods such as the cumulative-residual test, conditioning on the index avoids the dimensionality penalty of high-dimensional indicators.

{\bf Setting I:} The data generating process is 
\begin{align*}
	Y^{*} &= \beta_0 + \beta_1 X_{1} + \beta_2 X_{2} + \gamma h(\bm{X}) + e,\\[2pt]
	Y &= Y^* I(Y^*>0).
\end{align*}

We set  $e\sim N(0,1)$ and $\gamma=0,0.1,0.2,0.3,0.4,0.5,0.6$. The true regression parameters and covariates are chosen to be as follows:
\begin{itemize}
	\item [$(i).$] $(\beta_0,\beta_1,\beta_2) = (-0.5,-1.5,2),X_1, X_2 \sim U(0,2\pi), h(\bm{X})=|\sin(X_{1}) + 0.5 X_{2}^2|.$
	\item [$(ii).$] $(\beta_0,\beta_1,\beta_2) = (1,1,1),X_1, X_2 \sim N(0,1), h(\bm{X})=X_{1}^2 + X_{2}^2.$
\end{itemize}

\begin{table}[htbp]
\centering
\caption{Empirical rejection rates (in \%) of the proposed test under Setting I, scenario $(i)$; columns report the significance levels $\tau=1\%,5\%,10\%$. Based on $M=1000$ replications with $B=1000$ bootstrap replications; Monte Carlo standard errors are at most $\sqrt{0.5\cdot0.5/1000}\approx1.6$ percentage points.}
\label{tab:set1a}
\begin{tabular}{c|ccc|ccc}
\hline
$\gamma$ & \multicolumn{3}{c|}{$n=100$} & \multicolumn{3}{c}{$n=200$}\\
 & 1\% & 5\% & 10\% & 1\% & 5\% & 10\%\\
\hline
0.0 & 0.2 & 3.9 & 9.6 & 0.7 & 4.7 & 8.8\\
0.1 & 0.6 & 5.7 & 10.2 & 0.9 & 5.8 & 12.8\\
0.2 & 0.5 & 7.1 & 14.7 & 3.0 & 12.8 & 22.2\\
0.3 & 2.2 & 13.2 & 24.0 & 8.6 & 27.5 & 42.2\\
0.4 & 5.0 & 19.7 & 35.3 & 21.2 & 49.7 & 65.4\\
0.5 & 7.8 & 30.5 & 49.8 & 44.1 & 73.6 & 85.5\\
0.6 & 14.7 & 44.1 & 64.4 & 58.5 & 90.1 & 95.9\\
\hline
\end{tabular}
\end{table}

\begin{table}[htbp]
\centering
\caption{As in Table \ref{tab:set1a} for Setting I, scenario $(ii)$.}
\label{tab:set1b}
\begin{tabular}{c|ccc|ccc}
\hline
$\gamma$ & \multicolumn{3}{c|}{$n=100$} & \multicolumn{3}{c}{$n=200$}\\
 & 1\% & 5\% & 10\% & 1\% & 5\% & 10\%\\
\hline
0.0 & 1.0 & 4.4 & 10.4 & 0.6 & 5.6 & 9.4\\
0.1 & 1.0 & 8.6 & 16.7 & 3.0 & 12.7 & 21.9\\
0.2 & 3.4 & 15.2 & 28.5 & 12.3 & 34.6 & 48.2\\
0.3 & 9.0 & 30.6 & 47.9 & 34.0 & 69.6 & 83.8\\
0.4 & 17.3 & 46.8 & 63.8 & 58.7 & 87.0 & 95.3\\
0.5 & 29.0 & 62.4 & 79.0 & 74.9 & 94.8 & 99.1\\
0.6 & 39.3 & 73.7 & 87.1 & 84.3 & 97.2 & 99.6\\
\hline
\end{tabular}
\end{table}

Tables \ref{tab:set1a} and \ref{tab:set1b} report the empirical rejection rates of the proposed test under the two scenarios of Setting I.

The proposed test maintains empirical sizes close to the nominal levels in Setting I: across the two scenarios and both sample sizes its rejection rates under the null ($\gamma=0$) lie in the ranges $[0.2\%,1.0\%]$, $[3.9\%,5.6\%]$ and $[8.8\%,10.4\%]$ at the 1\%, 5\% and 10\% levels, respectively. As $\gamma$ increases, the power of the proposed test increases overall; the minor non-monotonicities at small $\gamma$ (e.g., at the $1\%$ level with $n=100$ in scenario $(i)$) are within Monte Carlo error. Under Scenario $(ii)$, which conforms to the standard Tobit model assumptions, the power is higher than under Scenario $(i)$; e.g., at the 5\% level and $n=200$ the rejection rate reaches 97.2\% at $\gamma=0.6$ in Scenario $(ii)$, versus 90.1\% in Scenario $(i)$. Increasing the sample size from 100 to 200 yields substantially higher power, while the proposed test maintains an accurate size throughout.

{\bf Setting I comparison.} Table \ref{tab:benchS1} reports the rejection rates of the three tests under Setting~I at both sample sizes, and Figure \ref{fig:benchS1} displays the corresponding 5\% rejection-rate curves. The proposed test is the only procedure whose empirical size is close to the nominal level in both scenarios and at both sample sizes (at the 5\% level, 3.9\% and 4.7\% in scenario $(i)$ and 4.4\% and 5.6\% in scenario $(ii)$ for $n=100$ and $n=200$), whereas the cumulative-residual test of \cite{Sun2021JKSS} is conservative at $n=200$ (2.6\% in scenario $(i)$ and 1.0\% in scenario $(ii)$) and the kernel test of \cite{Koul2014JMA} severely under-rejects at the 5\% level in every design (0.0\%--0.6\%), so that its nominal level is not meaningful.

\begin{table}[htbp]
\centering
\small
\setlength{\tabcolsep}{4pt}
\caption{Empirical rejection rates (\%) of the three tests under Setting~I, scenarios $(i)$ and $(ii)$, $n=100$ and $n=200$ ($M=1000$ replications and $B=1000$ bootstrap replications for every method). For $\gamma=0$ the 1\%, 5\% and 10\% rejection rates are reported (size); for $\gamma>0$ the rejection rate at the 5\% level is reported (power). The proposed test uses the multiplier bootstrap; both benchmarks are calibrated by a wild bootstrap.}
\label{tab:benchS1}
\begin{tabular}{l|ccc|ccc|ccc|ccc}
\hline
 & \multicolumn{6}{c|}{$n=100$} & \multicolumn{6}{c}{$n=200$}\\
 & \multicolumn{3}{c|}{size ($\gamma=0$)} & \multicolumn{3}{c|}{power (5\%)} & \multicolumn{3}{c|}{size ($\gamma=0$)} & \multicolumn{3}{c}{power (5\%)}\\
Method & 1\% & 5\% & 10\% & 0.2 & 0.4 & 0.6 & 1\% & 5\% & 10\% & 0.2 & 0.4 & 0.6\\
\hline
\multicolumn{13}{c}{Scenario $(i)$}\\
Proposed & 0.2 & 3.9 & 9.6 & 7.1 & 19.7 & 44.1 & 0.7 & 4.7 & 8.8 & 12.8 & 49.7 & 90.1\\
Sun et al.\ (2021a) & 0.6 & 4.3 & 11.3 & 4.7 & 9.1 & 12.0 & 0.2 & 2.6 & 6.0 & 6.2 & 18.6 & 36.2\\
Koul et al.\ (2014) & 0.0 & 0.0 & 0.5 & 1.9 & 22.0 & 67.7 & 0.0 & 0.1 & 1.4 & 8.0 & 72.1 & 99.8\\
\hline
\multicolumn{13}{c}{Scenario $(ii)$}\\
Proposed & 1.0 & 4.4 & 10.4 & 15.2 & 46.8 & 73.7 & 0.6 & 5.6 & 9.4 & 34.6 & 87.0 & 97.2\\
Sun et al.\ (2021a) & 1.0 & 5.5 & 12.2 & 24.5 & 51.8 & 66.0 & 0.2 & 1.0 & 2.5 & 25.9 & 81.4 & 91.4\\
Koul et al.\ (2014) & 0.0 & 0.6 & 1.9 & 26.7 & 94.5 & 100.0 & 0.1 & 0.6 & 2.2 & 59.7 & 99.9 & 100.0\\
\hline
\end{tabular}
\end{table}

\begin{figure}[htbp]
	\centering
	\begin{subfigure}[b]{0.45\textwidth}
		\centering
		\includegraphics[width=\textwidth]{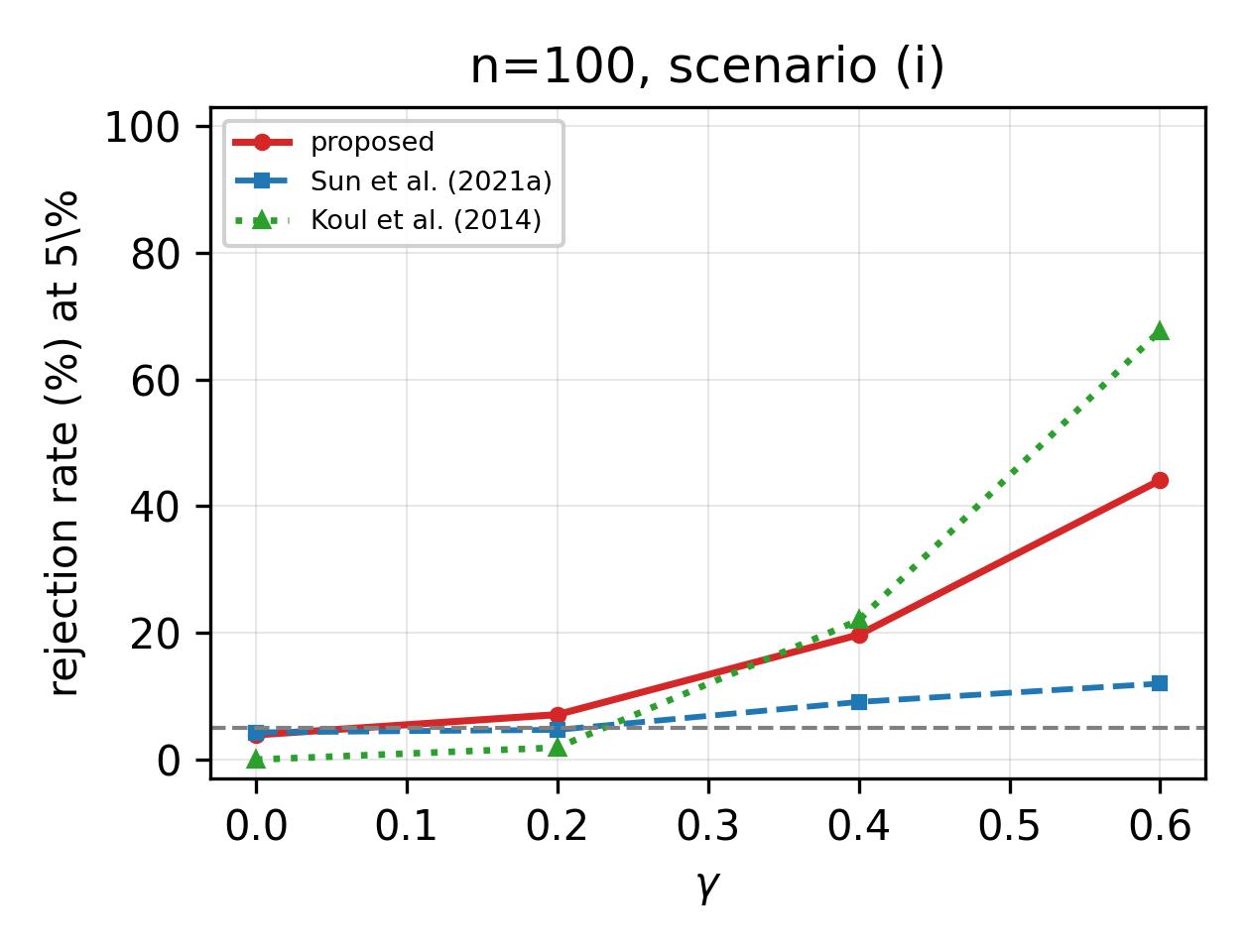}
		\caption{$n$=100, scenario $(i)$}
	\end{subfigure}
	\hfill
	\begin{subfigure}[b]{0.45\textwidth}
		\centering
		\includegraphics[width=\textwidth]{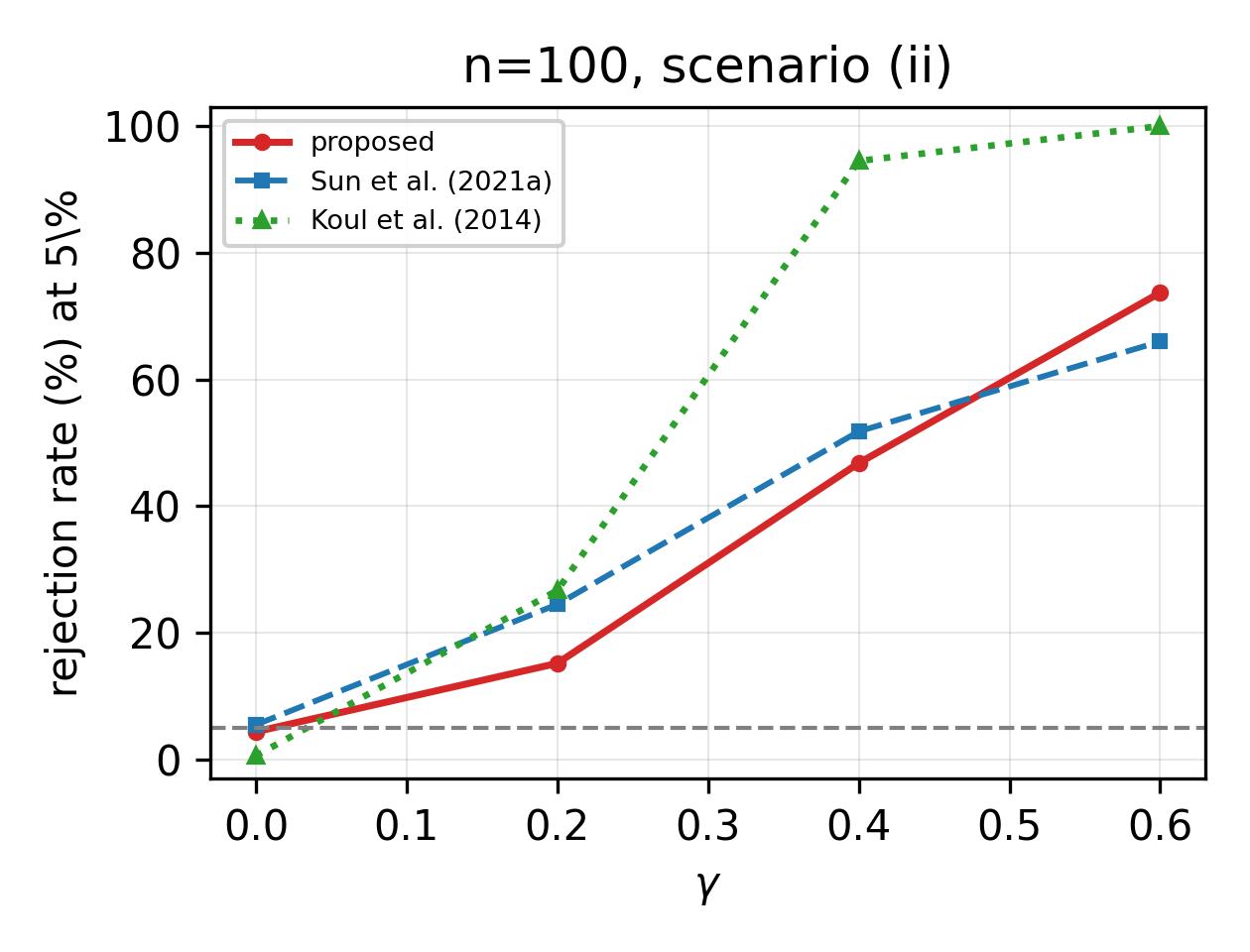}
		\caption{$n$=100, scenario $(ii)$}
	\end{subfigure}
	\vspace{1em}
	\begin{subfigure}[b]{0.45\textwidth}
		\centering
		\includegraphics[width=\textwidth]{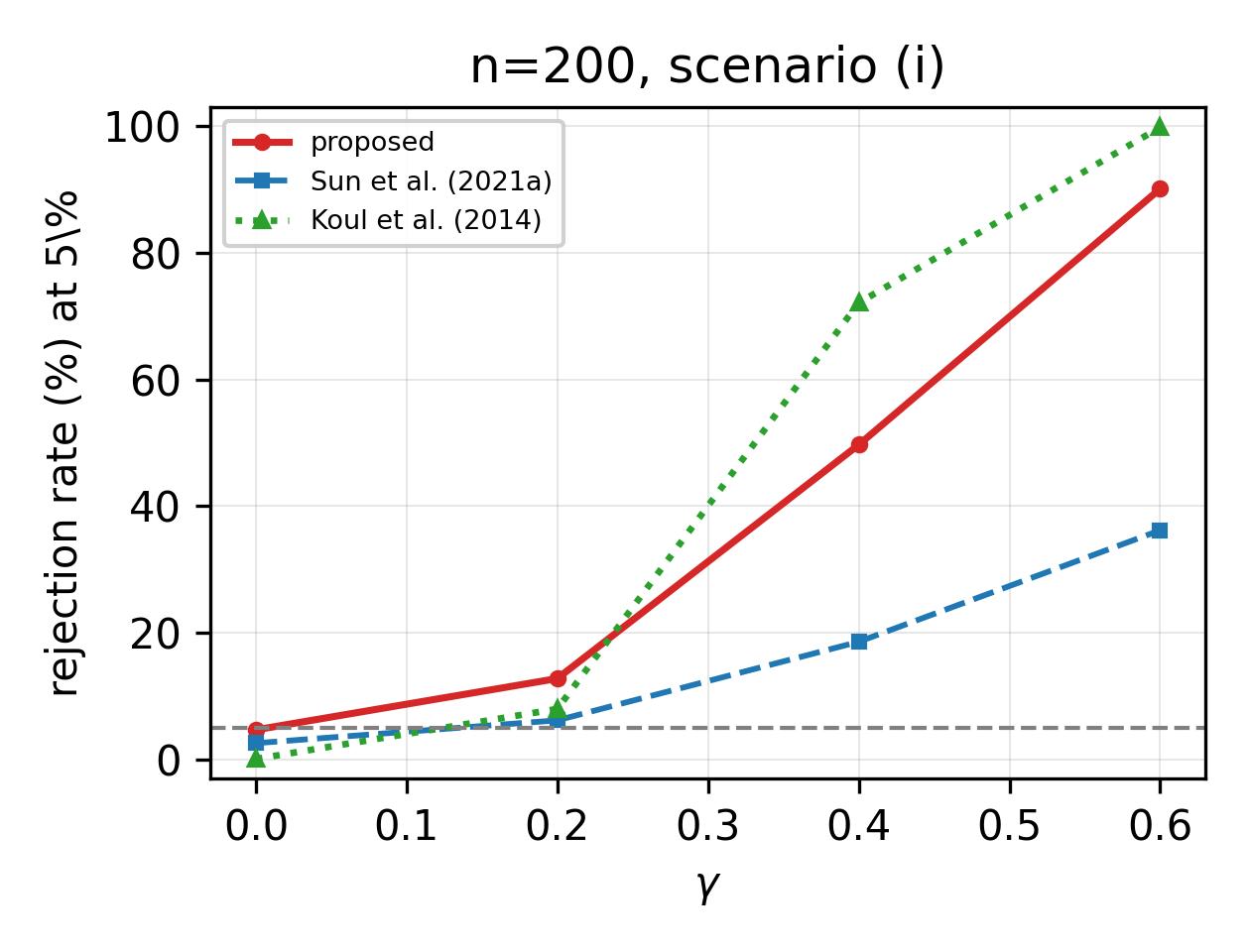}
		\caption{$n$=200, scenario $(i)$}
	\end{subfigure}
	\hfill
	\begin{subfigure}[b]{0.45\textwidth}
		\centering
		\includegraphics[width=\textwidth]{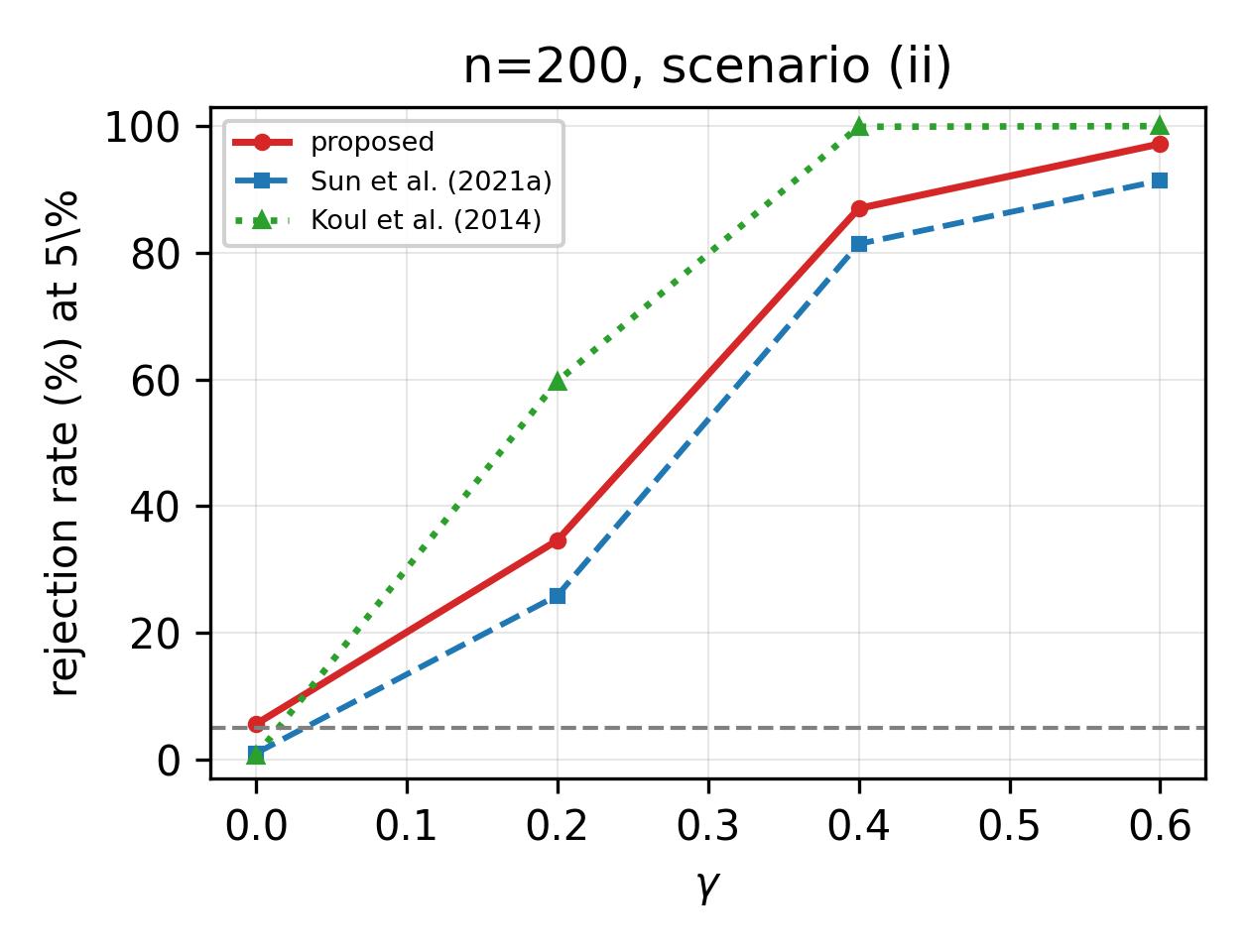}
		\caption{$n$=200, scenario $(ii)$}
	\end{subfigure}
	\caption{Empirical rejection rates of the three tests at the 5\% level under Setting~I, scenarios $(i)$ and $(ii)$ and $n=100$, $200$ ($M=1000$, $B=1000$); the gray dashed line marks the nominal 5\% level.}
	\label{fig:benchS1}
\end{figure}

Among the procedures whose level is reliable, the proposed test delivers competitive power: at $n=200$ it dominates the cumulative-residual test throughout, for example 90.1\% versus 36.2\% in scenario $(i)$ and 97.2\% versus 91.4\% in scenario $(ii)$ at $\gamma=0.6$; the cumulative-residual test is somewhat more powerful at $n=100$ for small deviations in scenario $(ii)$ (24.5\% versus 15.2\% at $\gamma=0.2$), where its size is close to nominal, but it is overtaken by the proposed test at $n=200$. The kernel test rejects more often under the alternatives, but its rejection rates are those of a test whose actual size under the null is essentially zero (0.1\% at the 5\% level, $n=200$, scenario $(i)$), and it requires a bandwidth. This comparison illustrates the trade-off discussed in Remark~1: conditioning on the index buys robustness and dimension reduction at the cost of some power against index-orthogonal deviations, whereas the full-covariate competitors cannot be used reliably once the dimension grows, as the next comparison shows.

{\bf Setting II:} To evaluate the performance of the proposed test with moderately high-dimensional covariates, we consider the following Tobit model with $p$-dimensional covariates.
\begin{align*}
	Y^{*} &= \beta_0 + \beta_1 X_{1} + \beta_2 X_{2} +\cdots + \beta_p X_{p} + \gamma g(\bm{X}) + e,\\[2pt]
	Y &= Y^* I(Y^*>0).
\end{align*}
where $g(\bm{X})=0.15\sum_{i=1}^p X_i^2$, $X_i (i = 1, 2, ..., p)$ all follow the standard normal distribution and are mutually independent, $e\sim N(0,1)$, and $\gamma=0,0.5,1.0,1.5,2.0$. All true parameters are assigned a value of 1.

\begin{table}[htbp]
\centering
\caption{Empirical rejection rates (in \%) of the proposed test under Setting II with $n=100$ and covariate dimensions $p=8$ and $p=12$; see Table \ref{tab:set1a} for notation.}
\label{tab:set2}
\begin{tabular}{c|ccc|ccc}
\hline
$\gamma$ & \multicolumn{3}{c|}{$n=100$, $p=8$} & \multicolumn{3}{c}{$n=100$, $p=12$}\\
 & 1\% & 5\% & 10\% & 1\% & 5\% & 10\%\\
\hline
0.0 & 1.6 & 6.9 & 12.8 & 1.5 & 7.9 & 17.6\\
0.5 & 1.0 & 8.9 & 16.0 & 1.1 & 7.5 & 15.5\\
1.0 & 2.0 & 10.6 & 19.7 & 2.9 & 12.3 & 23.3\\
1.5 & 4.5 & 17.5 & 28.3 & 4.3 & 18.3 & 30.5\\
2.0 & 6.3 & 24.8 & 36.8 & 6.7 & 21.7 & 35.9\\
\hline
\end{tabular}
\end{table}

\begin{table}[htbp]
\centering
\caption{As in Table \ref{tab:set2} for $n=200$, $p=8,12$.}
\label{tab:set2b}
\begin{tabular}{c|ccc|ccc}
\hline
$\gamma$ & \multicolumn{3}{c|}{$n=200$, $p=8$} & \multicolumn{3}{c}{$n=200$, $p=12$}\\
 & 1\% & 5\% & 10\% & 1\% & 5\% & 10\%\\
\hline
0.0 & 1.1 & 6.7 & 12.0 & 1.4 & 7.3 & 12.9\\
0.5 & 1.9 & 7.5 & 15.1 & 2.3 & 8.9 & 15.7\\
1.0 & 4.9 & 19.7 & 31.8 & 6.7 & 18.6 & 30.6\\
1.5 & 13.9 & 36.5 & 51.9 & 13.1 & 35.2 & 49.7\\
2.0 & 21.6 & 52.9 & 66.9 & 19.5 & 47.4 & 62.8\\
\hline
\end{tabular}
\end{table}

The simulation results are summarized in Tables \ref{tab:set2} and \ref{tab:set2b}.

Tables \ref{tab:set2} and \ref{tab:set2b} report the results for Setting II. Under the null ($\gamma=0$) the proposed test's rejection rates are $[1.1\%,1.6\%]$, $[6.7\%,7.9\%]$ and $[12.0\%,17.6\%]$ at the 1\%, 5\% and 10\% levels, respectively, across $p=8$ and $p=12$; the test is thus accurate at the $1\%$ level and moderately oversized at the $5\%$ and $10\%$ levels, especially when $p=12$. This oversizing is consistent with the fact that the PEU correction involves the inverse of the $(p+1)\times(p+1)$ information matrix, whose estimation error grows with $p$, and with the $O_p(n^{-1/4}(\log n)^{1/2})$ remainder of the uniform representation given in Lemma A.\ref{lemma.score.A2}. The size is expected to improve as $n$ increases relative to $p$. The power of the proposed test increases with $\gamma$ in all configurations and with the sample size; at the 5\% level with $n=200$ it reaches 52.9\% ($p=8$) and 47.4\% ($p=12$) at $\gamma=2.0$. The power loss from $p=8$ to $p=12$ is modest within this range of dimensions, although the simulations should be interpreted with the fixed-$p$ theory of Section 3 in mind.

{\bf Setting II comparison.} Table \ref{tab:benchS2} reports the corresponding results for $p=8$ and $p=12$, and Figure \ref{fig:benchS2} displays the 5\% rejection-rate curves. The cumulative-residual test no longer controls size in these designs: at $p=8$ its empirical null rejection rates are 42.8\%/82.5\%/96.4\% ($n=100$) and 94.7\%/99.5\%/99.8\% ($n=200$) at the 1\%/5\%/10\% levels, which we attribute to the sparsity of the high-dimensional orthant indicator $I(\bm X_j\le\bm X_i)$; at $p=12$ it changes character, losing power entirely at $n=100$ (rejection rates remain at 4\%--5\% for all $\gamma$) and over-rejecting under the null at $n=200$ (20.6\% at the 5\% level) with low, nearly flat power.

\begin{table}[htbp]
\centering
\small
\setlength{\tabcolsep}{4pt}
\caption{Empirical rejection rates (\%) of the proposed test and the kernel test of \cite{Koul2014JMA} under Setting~II, $p=8$ and $p=12$, $n=100$ and $n=200$ ($M=1000$, $B=1000$); notation as in Table \ref{tab:benchS1}. The cumulative-residual test of \cite{Sun2021JKSS} does not control size in these designs (at $p=8$, $n=200$ its empirical null rejection rates are 94.7\%, 99.5\% and 99.8\% at the 1\%, 5\% and 10\% levels) and is omitted.}
\label{tab:benchS2}
\begin{tabular}{l|ccc|cccc|ccc|cccc}
\hline
 & \multicolumn{7}{c|}{$n=100$} & \multicolumn{7}{c}{$n=200$}\\
 & \multicolumn{3}{c|}{size ($\gamma=0$)} & \multicolumn{4}{c|}{power (5\%)} & \multicolumn{3}{c|}{size ($\gamma=0$)} & \multicolumn{4}{c}{power (5\%)}\\
Method & 1\% & 5\% & 10\% & 0.5 & 1.0 & 1.5 & 2.0 & 1\% & 5\% & 10\% & 0.5 & 1.0 & 1.5 & 2.0\\
\hline
\multicolumn{15}{c}{$p=8$}\\
Proposed & 1.6 & 6.9 & 12.8 & 8.9 & 10.6 & 17.5 & 24.8 & 1.1 & 6.7 & 12.0 & 7.5 & 19.7 & 36.5 & 52.9\\
Koul et al.\ (2014) & 0.0 & 2.4 & 10.9 & 6.2 & 18.6 & 45.1 & 71.4 & 0.0 & 2.6 & 9.5 & 9.4 & 38.6 & 74.1 & 94.3\\
\hline
\multicolumn{15}{c}{$p=12$}\\
Proposed & 1.5 & 7.9 & 17.6 & 7.5 & 12.3 & 18.3 & 21.7 & 1.4 & 7.3 & 12.9 & 8.9 & 18.6 & 35.2 & 47.4\\
Koul et al.\ (2014) & 0.0 & 1.0 & 13.4 & 5.0 & 13.3 & 26.1 & 38.2 & 0.0 & 2.4 & 13.8 & 5.0 & 20.0 & 40.6 & 53.6\\
\hline
\end{tabular}
\end{table}

\begin{figure}[htbp]
	\centering
	\begin{subfigure}[b]{0.45\textwidth}
		\centering
		\includegraphics[width=\textwidth]{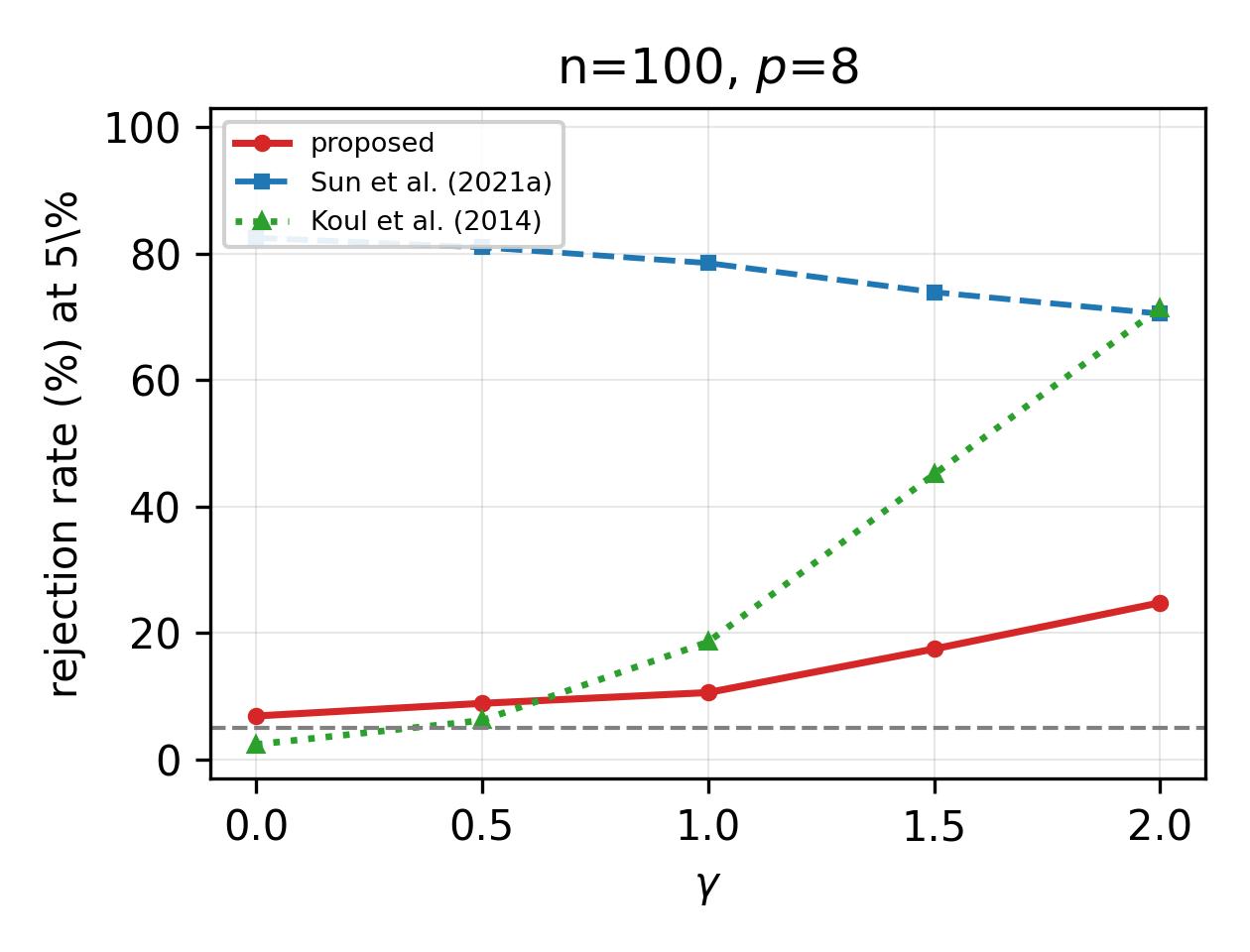}
		\caption{$n$=100, $p$=8}
	\end{subfigure}
	\hfill
	\begin{subfigure}[b]{0.45\textwidth}
		\centering
		\includegraphics[width=\textwidth]{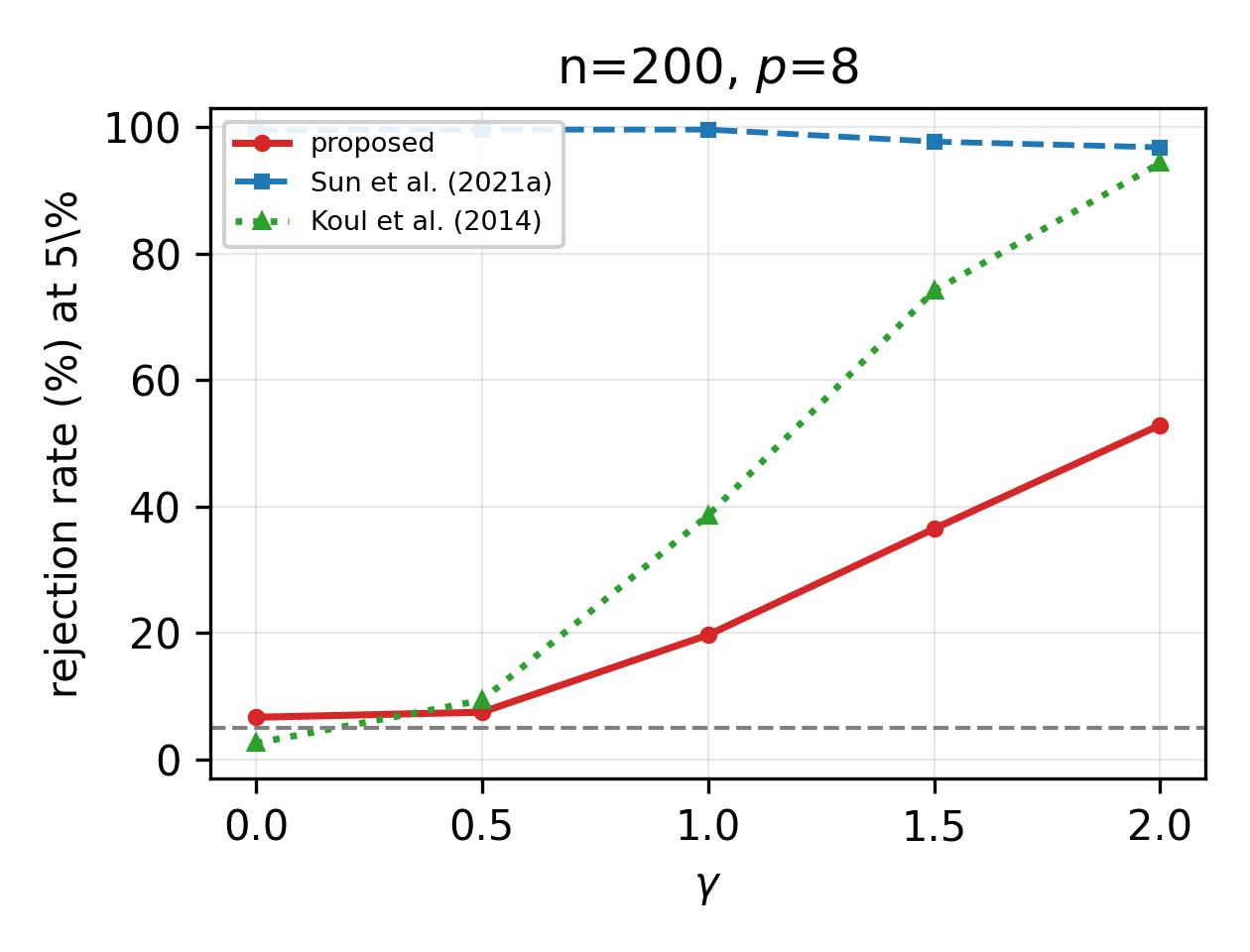}
		\caption{$n$=200, $p$=8}
	\end{subfigure}
	\vspace{1em}
	\begin{subfigure}[b]{0.45\textwidth}
		\centering
		\includegraphics[width=\textwidth]{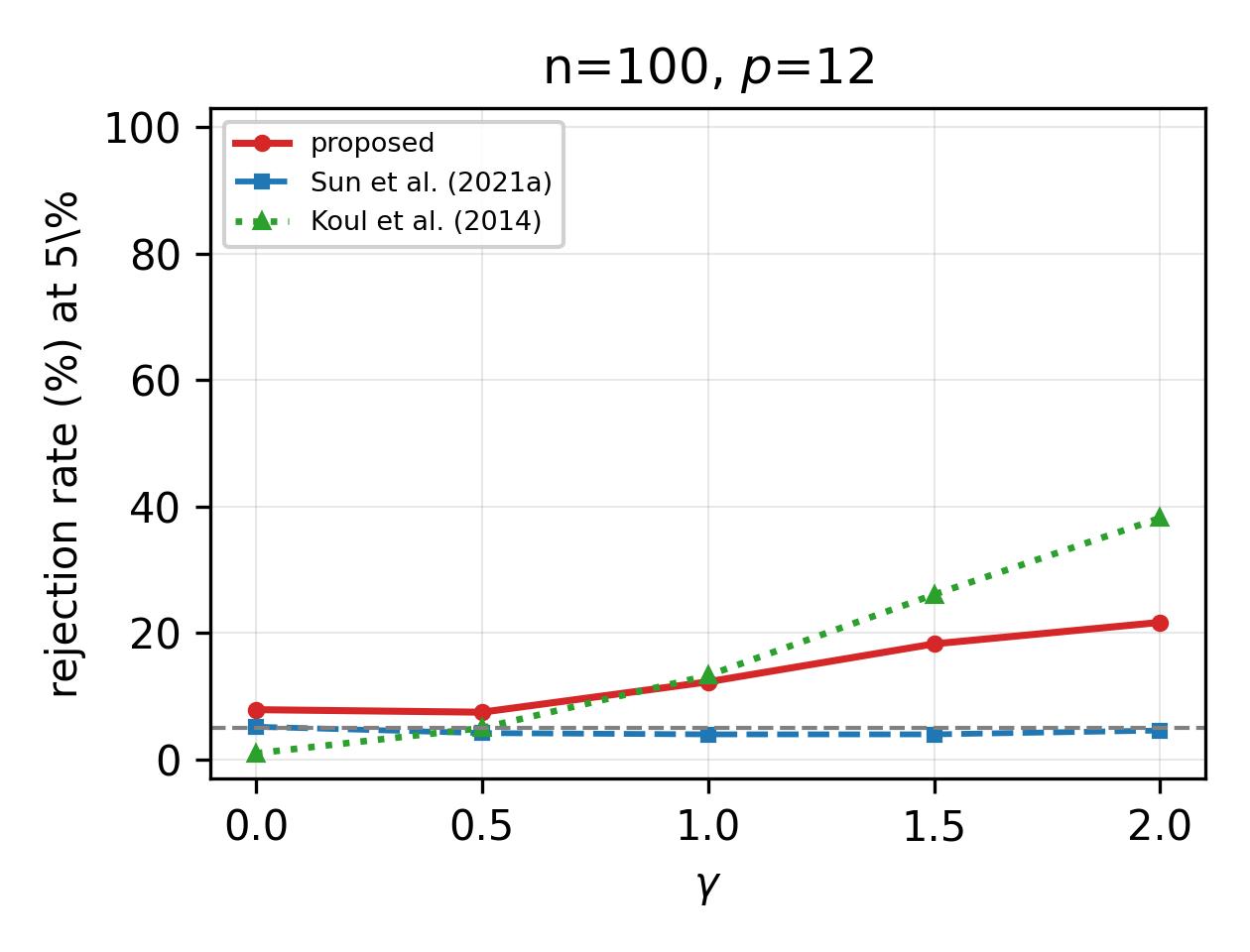}
		\caption{$n$=100, $p$=12}
	\end{subfigure}
	\hfill
	\begin{subfigure}[b]{0.45\textwidth}
		\centering
		\includegraphics[width=\textwidth]{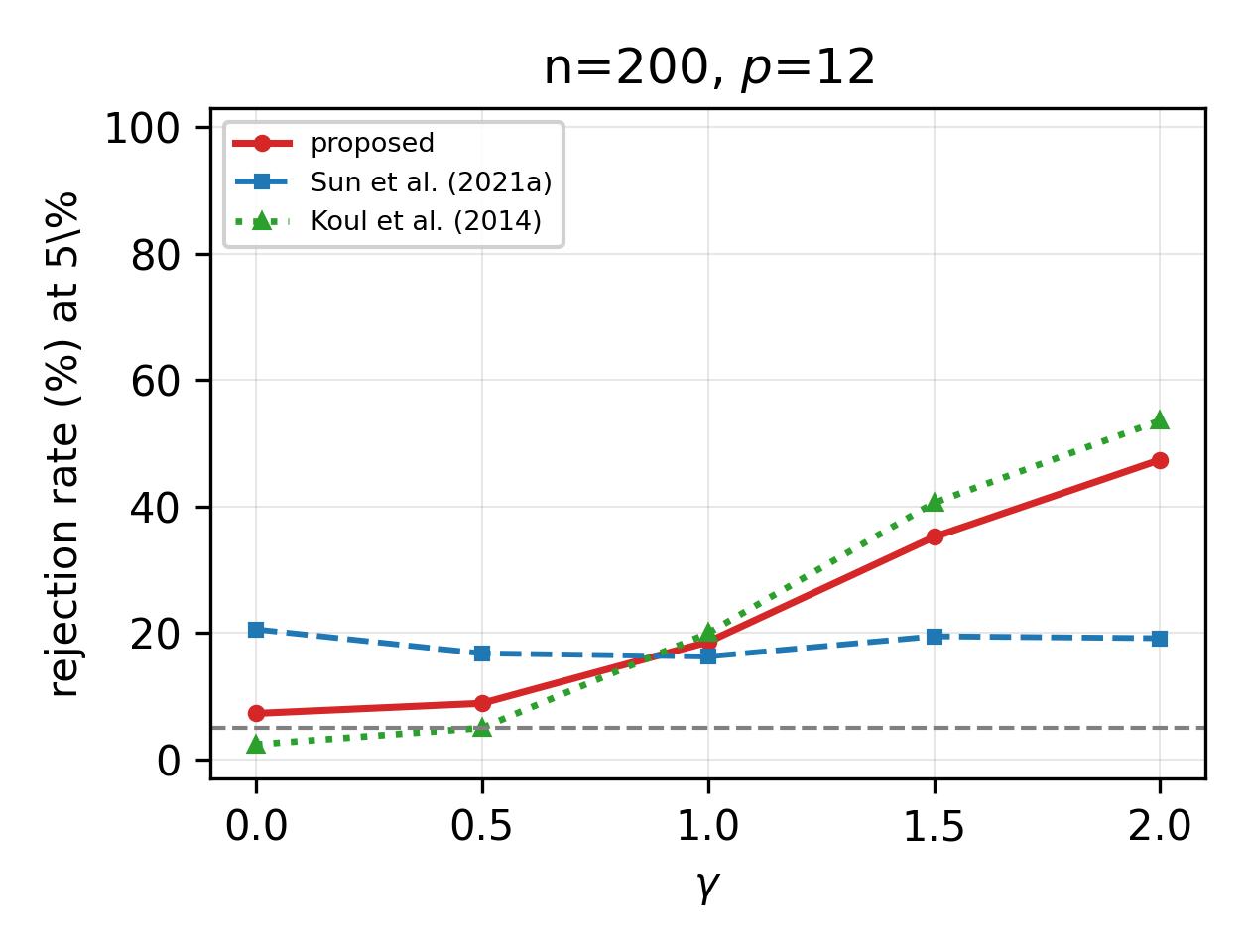}
		\caption{$n$=200, $p$=12}
	\end{subfigure}
	\caption{Empirical rejection rates of the three tests at the 5\% level under Setting~II, $p=8$, $12$ and $n=100$, $200$ ($M=1000$, $B=1000$); the gray dashed line marks the nominal 5\% level.}
	\label{fig:benchS2}
\end{figure}

The kernel test retains power (94.3\% at $\gamma=2$, $p=8$, $n=200$), but its size is again distorted (2.6\% at the 5\% level, and 13.8\% at the 10\% level for $p=12$), and its performance depends on the bandwidth, which must be selected. In contrast, the proposed test remains the procedure whose size stays closest to the nominal level across dimensions (6.7\%--7.9\% at the 5\% level, versus 2.6\% for the kernel test and 99.5\% for the cumulative-residual test at $p=8$, $n=200$) and delivers power that increases with $\gamma$ and with the sample size (52.9\% at $\gamma=2$, $p=8$, $n=200$). In moderate dimensions, the proposed test therefore offers reliable inference without any tuning parameter, whereas the competitors either break down (Sun et al.) or trade in level distortion and bandwidth sensitivity (Koul et al.). Finally, after an initial sort of the index values, each bootstrap replication of the proposed test requires only $O(n)$ operations (for fixed $p$), whereas each replication of the two competing statistics evaluates a $U$-statistic of order $O(n^2)$.

{\bf Unknown scale:}\label{sec:twopar-sim} The theory of Section \ref{subsec:two-parameter} and the empirical applications use the two-parameter version of the test with both $\bm\beta$ and $\sigma$ unknown. To assess its finite-sample performance, we repeat the exercise for the DGP
\[
Y^*=1+X_1+X_2+\gamma(X_1^2+X_2^2)+\sigma e,\qquad
Y=\max(0,Y^*),\qquad e\sim N(0,1),
\]
with $X_1,X_2\sim N(0,1)$, $\sigma=2$ and $\gamma=0,0.2,0.4,0.6$. The mark, the joint score and the information matrix are those of Section \ref{subsec:two-parameter}, and the estimator is the two-parameter maximum likelihood estimator; all other settings are as in Setting~I, with $M=500$ replications and $B=500$ bootstrap replications. Table \ref{tab:twopar} reports the results. Under the null ($\gamma=0$) the proposed test has rejection rates $0.8\%$, $5.0\%$ and $11.4\%$ at the three levels for $n=100$, and $1.4\%$, $5.8\%$ and $12.0\%$ for $n=200$; the slight oversizing at the 10\% level is of the same order as in Setting~II. The power increases rapidly with $\gamma$: at the 5\% level and $n=200$ it rises from 5.8\% at $\gamma=0$ to $17.4\%$, $53.8\%$ and $88.0\%$ at $\gamma=0.2,0.4,0.6$.

\begin{table}[htbp]
\centering
\caption{Empirical rejection rates (in \%) of the proposed test for the two-parameter (unknown-$\sigma$) model, $M=500$ replications, $B=500$. Notation as in Table \ref{tab:set1a}.}
\label{tab:twopar}
\begin{tabular}{c|ccc|ccc}
\hline
$\gamma$ & \multicolumn{3}{c|}{$n=100$} & \multicolumn{3}{c}{$n=200$}\\
 & 1\% & 5\% & 10\% & 1\% & 5\% & 10\%\\
\hline
0.0 & 0.8 & 5.0 & 11.4 & 1.4 & 5.8 & 12.0\\
0.2 & 3.0 & 10.0 & 18.8 & 3.6 & 17.4 & 27.6\\
0.4 & 9.8 & 28.0 & 39.0 & 26.6 & 53.8 & 68.6\\
0.6 & 24.4 & 52.0 & 65.4 & 60.6 & 88.0 & 95.4\\
\hline
\end{tabular}
\end{table}

\subsection{Index-Visible versus Index-Invisible Misspecification}\label{sec:indexinvisible}

By Remark 1, the test is consistent against deviations of the conditional mean that are visible through the index, but it has no power against deviations orthogonal to the index. We illustrate this limitation with the design $\bm X=(X_1,X_2)$ and an intercept, $\bm\beta_0=(0,1,0)^\top$ (so that the index is $X_1$), and the mean-shift alternatives
\[
\text{(visible): } Y^*=X_1+\gamma(X_1^2-1)+e,\qquad
\text{(invisible): } Y^*=X_1+\gamma(X_2^2-1)+e,
\]
with $X_1,X_2,e$ i.i.d.\ $N(0,1)$, $n=200$,  $M=1000$ and $B=1000$; we also apply the two benchmark tests of Section \ref{sec:size-power} to both designs under the same protocol. In the visible design, the perturbation is a function of the index and the test has power that grows quickly with $\gamma$; in the invisible design, the perturbation is, to first order, orthogonal to the index, so the score deviation $\mathbb E[u(\bm W,\bm\beta_0)|X_1]$ vanishes to leading order, and the test should remain close to its nominal size. Table \ref{tab:indexinvisible} confirms the contrast and completes the comparison with existing approaches. In the visible design all three tests detect the deviation: at the 5\% level the proposed test rejects $70.0\%$ and $89.3\%$ of the time at $\gamma=0.5$ and $\gamma=1.0$, the cumulative-residual test $54.8\%$ and $92.1\%$, and the kernel test $89.7\%$ and $100.0\%$. In the invisible design the tests separate exactly as Remark~1 predicts: the proposed test remains close to its nominal size at small perturbations ($6.0\%$ at $\gamma=0.5$) and reacts only through the second-order effect of the mean shift on the conditional score mean ($15.1\%$, $33.6\%$ and $49.4\%$ at $\gamma=1.0,1.5,2.0$), whereas the full-covariate benchmarks detect the index-orthogonal deviation with high power ($57.3\%$ and $97.9\%$ at $\gamma=0.5$ for the cumulative-residual and kernel tests, respectively). This is precisely the trade-off of Remark~1: conditioning on the index delivers reliable size without any tuning, together with the dimension reduction documented in Section \ref{sec:size-power}, at the cost of power against deviations that are invisible through the index, whereas the full-covariate tests have the complementary profile and, as shown in Section \ref{sec:size-power}, cannot be used reliably once the dimension grows. Both benchmarks are also systematically conservative under the null at the 5\% level in this design ($0.6\%$--$0.8\%$ at $\gamma=0$, versus $4.8\%$--$5.2\%$ for the proposed test). The experiment therefore supports interpreting the proposed test as a diagnostic for the index projection of the score conditional moment.

\begin{table}[htbp]
\centering
\small
\setlength{\tabcolsep}{4pt}
\caption{Empirical rejection rates (\%) of the three tests in the index-visible/index-invisible design, $n=200$ ($M=1000$ replications and $B=1000$ bootstrap replications for every method); notation as in Table \ref{tab:benchS1}. For $\gamma=0$ the 1\%, 5\% and 10\% rejection rates are reported (size); for $\gamma>0$ the rejection rate at the 5\% level is reported (power).}
\label{tab:indexinvisible}
\begin{tabular}{l|ccc|cccc|ccc|cccc}
\hline
 & \multicolumn{7}{c|}{Visible: $\gamma(X_1^2-1)$} & \multicolumn{7}{c}{Invisible: $\gamma(X_2^2-1)$}\\
 & \multicolumn{3}{c|}{size ($\gamma=0$)} & \multicolumn{4}{c|}{power (5\%)} & \multicolumn{3}{c|}{size ($\gamma=0$)} & \multicolumn{4}{c}{power (5\%)}\\
Method & 1\% & 5\% & 10\% & 0.5 & 1.0 & 1.5 & 2.0 & 1\% & 5\% & 10\% & 0.5 & 1.0 & 1.5 & 2.0\\
\hline
Proposed & 1.0 & 5.2 & 10.6 & 70.0 & 89.3 & 91.5 & 89.3 & 1.1 & 4.8 & 10.1 & 6.0 & 15.1 & 33.6 & 49.4\\
Sun et al.\ (2021a) & 0.0 & 0.8 & 1.5 & 54.8 & 92.1 & 87.3 & 76.5 & 0.2 & 0.6 & 1.9 & 57.3 & 82.4 & 79.0 & 73.3\\
Koul et al.\ (2014) & 0.2 & 0.8 & 1.8 & 89.7 & 100.0 & 100.0 & 100.0 & 0.0 & 0.6 & 2.0 & 97.9 & 100.0 & 100.0 & 100.0\\
\hline
\end{tabular}
\end{table}

\section{Empirical Applications}
\label{s:EmpApp}

In this section, we apply our proposed score-based specification test to the Michigan Panel Study of Income Dynamics (PSID) dataset and the Extramarital Affairs dataset. 

For these two datasets, the tested null is the index-conditional mark moment implied by $H_0^{2p}$,
\[
\mathbb E\bigl[m(\bm W,\bm\theta_0)\,\big|\,\bm X^\top\bm\beta_0\bigr]=0 \quad \text{a.s.\ for some } \bm\theta_0=(\bm\beta_0^\top,\sigma_0)^\top\in\Theta,
\]
where $m$ and $\bm s$ are the mark and joint score of Section~\ref{subsec:two-parameter}. Note that this restriction is a necessary but not sufficient condition for the specification of the Tobit model $Y^*=\bm X^\top\bm\beta+e$ with $e\sim N(0,\sigma^2)$ independent of $\bm X$. If the model holds for $(\bm\beta_0,\sigma_0)$, then the full score restriction $\mathbb E[\bm s(\bm W,\bm\theta_0)|\bm X]=0$ a.s.\ holds, and hence so does the index projection of the mark moment. Since the test conditions on the one-dimensional index, it detects only deviations that are visible through the index; non-rejection therefore means that the data provide no evidence of \emph{index-visible} misspecification, and it cannot be interpreted as proof that the model is correctly specified (deviations orthogonal to the index are undetectable, and the mark moment constrains only one conditional moment of the censored outcome; see Remark~1 and Section~\ref{sec:indexinvisible}). Since the scale $\sigma$ is unknown in the applications, we implement the two-parameter version of Section \ref{subsec:two-parameter}, i.e., the feasible process \eqref{eqn:two-param-process} based on the joint score \eqref{eqn:two-param-mark}, with the multiplier bootstrap \eqref{eqn:two-param-bootstrap}; the estimation of $(\bm\beta,\sigma)$ is by maximum likelihood, and the bootstrap uses $B=1000$ replications. Table \ref{tab:emp} reports the maximum likelihood estimates for both datasets, computed using the same code as in the tests. The Affairs column matches the estimates in \cite{Greene2012econometric} (Table 19.3) to within $0.002$, providing a check on the implementation. The PSID likelihood is numerically delicate, a phenomenon documented for these data by \cite{McCullough2004}: some estimates quoted in the literature correspond to a suboptimal point, and we report the actual maximizer, obtained by  quasi-Newton (L-BFGS-B) optimization from several starting values with the analytic score (the norm of the average score is below $10^{-6}$), which is the fit used to compute the test statistic below. Figure~\ref{fig:hist} displays the two outcome distributions.

\begin{table}[htbp]
\centering
\small
\caption{Maximum likelihood estimates of the two-parameter Tobit model for the two empirical applications, from the fitted models used in the tests below; non-wife income is measured in thousands of 1975 dollars.}
\label{tab:emp}
\begin{tabular}{lll}
\hline
 & PSID1976 & Affairs\\
\hline
Intercept & 965.72 & 6.8174\\
Non-wife income / marriage rating & $-8.816$ & $-2.2785$\\
Education & 80.64 & $0.0599$\\
Experience & 131.55 & --\\
Experience$^2$ & $-1.864$ & --\\
Age & $-54.41$ & $-0.1799$\\
Children $<6$ / children indicator & $-894.00$ & $1.1236$\\
Children 6--18 & $-16.24$ & --\\
Years married & -- & $0.5090$\\
Religiousness & -- & $-1.6885$\\
Occupation & -- & $0.3071$\\
$\sigma$ & 1122.01 & 8.2577\\
Log-likelihood & $-3819.10$ & $-705.13$\\
\hline
Test statistic $KS_n$ (proposed) & 0.547 & 0.615\\
$p$-value (proposed test) & 0.519 & 0.117\\
$p$-value (Sun et al., 2021a) & 0.83 & 0.97\\
$J_n$ (Koul et al., 2014) & $-0.895$ & 0.628\\
$p$-value (Koul et al., 2014) & 0.82 & 0.25\\
\hline
\end{tabular}
\par\smallskip
{\small Note: The $p$-values in the last three rows are wild-bootstrap $p$-values ($B=1000$) for the cumulative-residual test of \cite{Sun2021JKSS} and the kernel test of \cite{Koul2014JMA}, implemented in their two-parameter form as in Section~\ref{sec:benchmarks}; $J_n$ is the standardized kernel statistic (asymptotic-normal $p$-values are $0.815$ and $0.265$, respectively).}
\end{table}

\begin{figure}[htbp]
	\centering
	\begin{subfigure}[b]{0.48\textwidth}
		\centering
		\includegraphics[width=\textwidth]{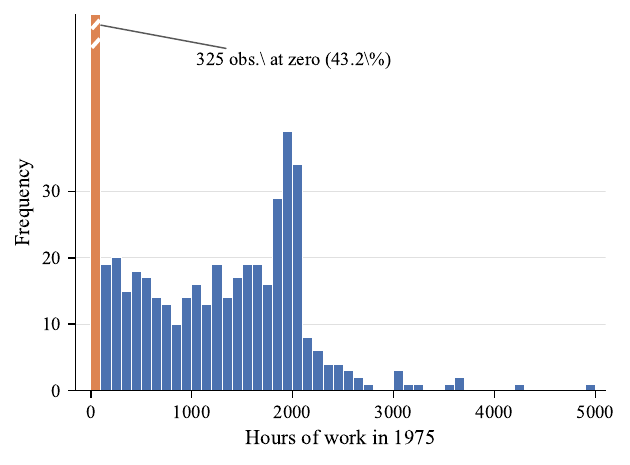}
		\caption{PSID data}
		\label{fig:hist_hours}
	\end{subfigure}
	\hfill
	\begin{subfigure}[b]{0.48\textwidth}
		\centering
		\includegraphics[width=\textwidth]{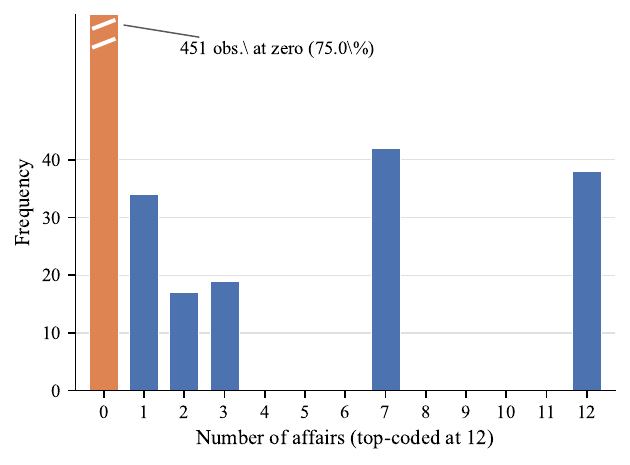}
		\caption{Extramarital Affairs data}
		\label{fig:hist_affairs}
	\end{subfigure}
	\caption{Distributions of the two empirical datasets. 
	The darker bar marks the censoring point in each panel; it is truncated at the white break marks because the mass at zero far exceeds the remaining frequencies (43.2\% of the PSID observations report zero hours of work and 75.0\% of the Affairs observations report zero affairs, versus a maximum bin frequency of 39 and 42, respectively, among positive values). The Affairs variable takes the coded values $0,1,2,3,7,12$ (top-coded at 12).}
	\label{fig:hist}
\end{figure}

\subsection{PSID data}

The PSID dataset, available via the \textit{AER} package (\texttt{PSID1976}), contains 753 married women; 325 women did not work for pay in 1975, so 43.16\% of the observations are censored at zero. \cite{Winkelmann2009AM} employ the Tobit model for estimation and interpretation of these data.
The response variable is annual hours of work, and there are 7 covariates: non-wife income (family income net of the woman's labor earnings), years of schooling, years of experience, years of experience squared, age, number of children younger than 6, and number of children aged 6 to 18; the estimation results for the Tobit model are presented in Table 7.1 of \cite{Winkelmann2009AM}.

\cite{Greene2012econometric}  notes that the Tobit model can not only predict the clustering of data at the censoring point but also the clustering of observations near the censoring point. 
Therefore, as shown in Figure~\ref{fig:hist}(a), the distribution of ``hours of work'' exhibits a concentration of observations near the censoring point, aligning with the assumptions of the Tobit model.

We fit the two-parameter Tobit model by maximum likelihood (Table \ref{tab:emp}) and apply the proposed test with $B=1000$ bootstrap replications: the test statistic is $KS_n=0.547$ and the $p$-value is 0.519, so the index-conditional mark moment is not rejected: the data provide no evidence of index-visible misspecification. The large $p$-value indicates that the index-conditional mark moment is compatible with these data.

\subsection{Extramarital Affairs data}

The Extramarital Affairs data are Fair's 601-observation survey sample used in \cite{Greene2012econometric} (Table 19.3); the data are available in the \textit{AER} package (\texttt{Affairs}).
Given the high frequency of zero values in the dataset (451 observations, about 75.0\%), we employ a Tobit model to account for this censoring structure.
The core response variable is the number of extramarital affairs (top-coded at 12), and there are 7 covariates: marriage satisfaction rating (1--5), age (years), years married, children indicator, religiousness (1--5), years of education, and occupational level (1--7).

From Figure~\ref{fig:hist}(b), the distribution of the number of affairs displays a large mass point exactly at zero (75.0\%), while the positive values are sparse and far from zero, indicating potential misspecification if the Tobit model is applied.

Similarly, for the Extramarital Affairs data, we fit the two-parameter Tobit model by maximum likelihood and apply the proposed test with $B=1000$ bootstrap replications: the test statistic is $KS_n=0.615$ and the $p$-value is 0.117, so there is no evidence of index-visible misspecification at the conventional 5\% level (nor at the 10\% level). This finding is in line with the Tobit estimates of \cite{Greene2012econometric} (Table 19.3), whose standard specification is not overwhelmingly rejected on this sample; as emphasized above, the test targets the index projection of the mark moment, so these findings do not rule out misspecification in directions that are invisible through the index (Remark~1 and Section~\ref{sec:indexinvisible}). We note that results can differ on the larger 6,366-observation Redbook subsample analyzed elsewhere in the literature, which the code provided with this paper can accommodate.

For comparison, we also apply the two benchmark tests of Section~\ref{sec:benchmarks} to both datasets, using their full-covariate statistics, the same fitted Tobit models, and a wild bootstrap with $B=1000$ replications. The results are reported in the last three rows of Table~\ref{tab:emp}. Neither benchmark rejects at the 5\% level: the cumulative-residual test of \cite{Sun2021JKSS} gives wild-bootstrap $p$-values of $0.83$ (PSID) and $0.97$ (Affairs), and the kernel test of \cite{Koul2014JMA} gives $0.82$ and $0.25$, respectively. These conclusions agree with the proposed test. The large $p$-values of the cumulative-residual test reflect the finite-sample conservativeness of its wild bootstrap documented in Section~\ref{sec:benchmarks}, while the kernel test is consistent with the proposed test in ranking the Affairs data as the more suspicious of the two.

\section{Conclusion}
\label{s:conclusion}

This paper proposes a score-based empirical process for specification testing in Tobit regression models. We develop asymptotic theory under the null, the alternative, and local alternatives, for both the known-scale and the unknown-scale versions of the test. Monte Carlo simulations demonstrate that the proposed test controls size well 
and delivers competitive power. An unknown-scale design, an index-visible/index-invisible power study, and comparisons with existing specification tests complete the picture. Two empirical applications further confirm the practical utility of our approach. By avoiding the curse of dimensionality inherent in nonparametric methods and requiring neither bandwidth selection nor tuning parameters, the proposed test effectively addresses the PEU. The multiplier bootstrap reproduces the first-order uniform representation of the feasible process, so that it is valid under correct specification.

\section*{CRediT Authorship Contribution Statement}
Xuewen Li: Writing - review - original draft, Data analysis. 
Xiaojun Song: Conceptualization, Writing - review \& editing, Validation.
Xiaogang Wang: Conceptualization, Writing - review - original draft, Organization, Proof, Data analysis.
Feipeng Zhang: Conceptualization, Writing - review \& editing, Proof, Validation.

\section*{Acknowledgments}
Wang's work is supported by the Ningxia Basic Science Research Center of Mathematics (No. 2026NXSXZX0203) and the Key Program of Ningxia Natural Science Foundation (No. 2023AAC02043).

\section*{Data Availability Statement}

The data that support the findings of this study are available in the R package AER. 
The empirical applications use the \texttt{PSID1976} and \texttt{Affairs} datasets contained in this package.


\bibliography{mybibfile1020}

\newpage
\setcounter{section}{1}
\renewcommand{\thesection}{\Alph{section}}
\setcounter{apptheorem}{0}
\section*{Appendix A: Proofs of the Score-based Specification Test}

We impose the following assumptions.
\begin{itemize}
	\item[(A1)] The observations $(Y_i,\bm{X}_i),i=1,2,\cdots,n$ are independent and identically distributed, $\bm\beta_0\neq0$, and the index family is uniformly continuous on compact subsets of the parameter space away from the origin: for every compact $\mathcal K\subset\mathcal B\setminus\{\bm 0\}$ there exists $C_{\mathcal K}<\infty$ such that
	\[\sup_{\bm\beta\in\mathcal K}\ \sup_{t\in\mathbb R}\ \mathbb P\bigl(|\bm X_1^\top\bm\beta-t|\le\gamma\bigr)\le C_{\mathcal K}\,\gamma\qquad\text{for all }\gamma>0\]
	and, for all $\bm\beta_1,\bm\beta_2\in\mathcal K$ and all $t\in\mathbb R$,
	\[\mathbb P\Bigl(\bigl\{\bm X_1^\top\bm\beta_1\le t<\bm X_1^\top\bm\beta_2\bigr\}\cup\bigl\{\bm X_1^\top\bm\beta_2\le t<\bm X_1^\top\bm\beta_1\bigr\}\Bigr)\le C_{\mathcal K}\,\|\bm\beta_1-\bm\beta_2\|.\]
	For example, both bounds hold if $\bm X_1$ has a continuously distributed component whose conditional density given the remaining components is bounded and whose coefficient is bounded away from zero on $\mathcal K$; see Lemma A.\ref{lem:uniform-continuity}. The two bounds are what is used below: the first gives the uniform continuity of the index distribution, and the second bounds the symmetric difference of the indicator increments, which makes the indicator-marked classes continuous in $L^2(P)$ and yields the rate in Lemma A.\ref{lemma.score.A2}; together they replace the requirement that the distribution of $\bm X_1^\top\bm\beta$ be continuous for every $\bm\beta\in\mathcal B$, which cannot hold at $\bm\beta=0$.
    \item[(A2)] The true parameter ${\bm\beta}_0$ lies in the interior of the compact parameter space $\mathcal{B}$. The matrices $\mathbb E(\bm X_1 \bm X_1^\top)$ and $\mathbb E[\ell(\bm X_1)\bm X_1 \bm X_1^\top]$, where $\ell(\bm X)=\Phi(\bm X^\top\bm\beta_0)-\Phi(-\bm X^\top\bm\beta_0)\dot\lambda(-\bm X^\top\bm\beta_0)>0$, are positive definite.
    \item[(A3)] Under correct specification, the full conditional moment restriction $\mathbb E[u(\bm W_1,\bm\beta_0)|\bm X_1]=0$ a.s.\ holds (it is implied by \eqref{eqn:SpecTestP}); the moment conditions 
    $\mathbb E\|{\bm X}_1\|^4<\infty$, $\mathbb E[\sup_{\bm\beta\in\mathcal B}|u(\bm W_1,\bm\beta)|^4]<\infty$, and $\mathbb E[u^2(\bm W_1,\bm\beta_0)\|\bm X_1\|^2]<\infty$ are instead maintained under both the null and the alternative hypotheses.  
     \item[(A4)] The estimator $\hat{\bm\beta}$ is the score-based maximum likelihood estimator, i.e., it solves 
     $$\sum_{i=1}^n u(\bm W_i,\hat{\bm\beta})\bm X_i=0$$ up to a numerical tolerance that is $o_p(n^{-1/2})$, and admits the influence expansion
     \[\sqrt n(\hat{\bm\beta}-\bm\beta_0)=H^{-1}(\bm\beta_0)\frac{1}{\sqrt n}\sum_{i=1}^n u(\bm W_i,\bm\beta_0)\bm X_i+o_p(1).\]
     Under correct specification, this expansion is standard under (A2), (A5), and the identification of $\bm\beta_0$. It is stated as an assumption only to keep the proofs focused on the empirical-process arguments.
     \item[(A5)] The log-likelihood function is three times continuously differentiable with respect to $\bm\beta$, and its first three derivatives are dominated, uniformly in $\bm\beta\in\mathcal B$, by integrable functions of $(\bm W_1,\bm X_1)$.
     \item[(A6)] The function classes $\{u(\bm W,\bm\beta)I(\bm X^\top\bm\beta\le t):\bm\beta\in\mathcal B,\ t\in\mathbb R\}$ and $\{u(\bm W,\bm\beta)\bm X:\bm\beta\in\mathcal B\}$ are $\mathbb P$-Donsker with square-integrable envelopes. Sufficient primitive conditions are given in Lemma A.\ref{lemma.donsker} below, whose proof uses only (A1)--(A5).
     \item[(A7)] The multipliers $\{V_i\}_{i=1}^n$ are i.i.d.\ with $\mathbb E (V_i)=0$, $\mathbb E (V_i^2)=1$, $\sup_i\mathbb E(|V_i|^{2+\delta})<\infty$ for some $\delta>0$, and are independent of the sample.
\end{itemize}
The normalization $\sigma=1$ in the main text is without loss of generality under  correct specification with known scale; for unknown $\sigma$, see Section \ref{subsec:two-parameter} and Remark A.1.

{\bf Remark A.1:} (Unknown scale.) The two-parameter version of the test with unknown $\sigma$ is developed in Section \ref{subsec:two-parameter}: the joint score, the information matrix, the feasible process \eqref{eqn:two-param-process} and its bootstrap \eqref{eqn:two-param-bootstrap}, together with the null limit, consistency, local power and bootstrap validity theorems. The interpretation of the $\sigma=1$ normalization used in the main text is discussed there: the $\sigma=1$ score tests the joint hypothesis $\{$correct functional form and $\sigma=1\}$, and rejects correctly specified models solely because of a non-unit scale.

{\bf Remark A.2:}  The standard latent Tobit model $Y_i^*=\bm X_i^\top\bm\beta+e_i$ with observed response $Y_i=\max(0,Y_i^*)$ has a zero censoring threshold; it can be extended to a non-zero threshold,  $Y_i=\max(a,Y_i^*)$; see \cite{Carson2007Tobit}. With censoring, $\hat a=\min_{1\le i\le n}Y_i$ equals $a$ exactly whenever at least one observation is censored, an event whose probability tends to one, so the threshold can be treated as known in the asymptotic analysis and the generalized residual and the score are formed from the shifted response $Y_i-\hat a$. A complete treatment of the estimation effect of $\hat a$, including the modified regularity conditions, is left for future research.

\begin{lemma}\label{lemma.score.A1}
	Under Assumptions (A1)--(A5), with the random-weight sample versions
	\[
	H_n(\bm\beta)=\frac1n\sum_{i=1}^n\bigl[d_i-(1-d_i)\dot\lambda(-\bm X_i^\top\bm\beta)\bigr]\bm X_i\bm X_i^\top
	\]
	and
	\[
	G_n(t,\bm\beta)=\frac1n\sum_{i=1}^n\bigl[d_i-(1-d_i)\dot\lambda(-\bm X_i^\top\bm\beta)\bigr]\bm X_i I(\bm X_i^\top\bm\beta\le t),
	\]
	the following convergences hold uniformly in $\bm\beta\in\mathcal B$ and $t\in\mathbb R$:
	\begin{align*}
		  &H_n(\bm\beta) \overset{p}{\rightarrow} H^{w}(\bm\beta),\qquad
		  G_n(t,\bm\beta) \overset{p}{\rightarrow} G^{w}(t,\bm\beta),
	\end{align*}
	where $H^{w}(\bm\beta)=\mathbb E\{[d-(1-d)\dot\lambda(-\bm X^\top\bm\beta)]\bm X\bm X^\top\}$ and $G^{w}(t,\bm\beta)=\mathbb E\{[d-(1-d)\dot\lambda(-\bm X^\top\bm\beta)]\bm X I(\bm X^\top\bm\beta\le t)\}$. Replacing the random weights by their conditional means gives the same conclusion with the limits $H(\bm\beta)=\mathbb E[\ell(\bm X,\bm\beta)\bm X\bm X^\top]$ and $G(t,\bm\beta)=\mathbb E[\ell(\bm X,\bm\beta)\bm X I(\bm X^\top\bm\beta\le t)]$, where $\ell(\bm X,\bm\beta)=\Phi(\bm X^\top\bm\beta)-\Phi(-\bm X^\top\bm\beta)\dot\lambda(-\bm X^\top\bm\beta)$. Under  correct specification with parameter $\bm\beta_0$ the two limits coincide at $\bm\beta=\bm\beta_0$, since $\mathbb E[d|\bm X]=\Phi(\bm X^\top\bm\beta_0)$; in particular, $H(\bm\beta_0)=H^{w}(\bm\beta_0)$ is the information matrix.
\end{lemma}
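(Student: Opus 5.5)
The plan is to obtain both statements as uniform laws of large numbers over Glivenko--Cantelli classes, and then to identify the limits. Write $w(\bm W,\bm\beta)=d-(1-d)\dot\lambda(-\bm X^\top\bm\beta)$. For the matrix $H_n$, consider each entry $w(\bm W,\bm\beta)X_jX_k$, indexed by $\bm\beta\in\mathcal B$. For $G_n$, consider each component $w(\bm W,\bm\beta)X_j I(\bm X^\top\bm\beta\le t)$, indexed by $(\bm\beta,t)\in\mathcal B\times\mathbb R$. It suffices to show that each of these finite collections of classes is $\mathbb P$-Glivenko--Cantelli. The uniform convergence $\sup_{\bm\beta,t}\|G_n-G^w\|\overset{p}{\to}0$ and $\sup_{\bm\beta}\|H_n-H^w\|\overset{p}{\to}0$ is then exactly the GC property.

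\textbf{Envelope.} First I will record that $\dot\lambda(x)=-x\lambda(x)-\lambda^2(x)$ is smooth. It is bounded on $[0,\infty)$. As $x\to-\infty$ we have $\lambda(x)\sim -x$, so $\dot\lambda(x)\to -1$. Indeed $\dot\lambda$ takes values in $(-1,0)$, as is standard for the inverse Mills ratio. Hence $|w|\le 2$. Since $\mathbb E\|\bm X\|^4<\infty$ by (A3), we get the integrable envelopes $2\|\bm X\|^2$ for $H_n$ and $2\|\bm X\|$ for $G_n$.

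\textbf{Class for $H_n$.} By (A5) the map $\bm\beta\mapsto w(\bm W,\bm\beta)$ is continuous. A cruder route also works: $\dot\lambda$ has a bounded derivative, which yields the Lipschitz bound $|w(\bm W,\bm\beta_1)-w(\bm W,\bm\beta_2)|\le C\|\bm X\|\,\|\bm\beta_1-\bm\beta_2\|$. Since $\mathcal B$ is compact, the Lipschitz-in-parameter bracketing theorem (van der Vaart--Wellner, Thm.~2.7.11) gives finite $L^1$ bracketing numbers, with bracket envelope $C\|\bm X\|^3\in L^1$. This yields GC.

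\textbf{Class for $G_n$.} The product of the Lipschitz class above with the indicator class $\{I(\bm x^\top\bm\beta\le t)\}$ is GC. The indicators of half-spaces form a VC class. The product of a VC class with a fixed GC/Lipschitz class under an integrable envelope is GC by the preservation theorem (van der Vaart--Wellner, Thm.~2.10.5/2.10.6). Alternatively, I can build brackets directly. Cover $\mathcal B$ by $\epsilon$-balls and $\mathbb R$ by a finite grid. Between grid points, bound the indicator increments using the first inequality in (A1) on compact pieces away from $\bm 0$; near $\bm\beta=\bm 0$, use half-space VC preservation. I expect this uniformity in $(\bm\beta,t)$ to be the only non-routine step, and I will handle it by the VC preservation route so that (A1)'s restriction away from the origin is not needed.

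\textbf{Conditional-mean version.} Replacing $d_i$ by $\Phi(\bm X_i^\top\bm\beta)$ gives weights $\ell(\bm X,\bm\beta)$. These are again bounded and Lipschitz in $\bm\beta$ with constant $C\|\bm X\|$, so the same argument applies verbatim, with the limits $H(\bm\beta)$ and $G(t,\bm\beta)$.

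\textbf{Identifying the limits at $\bm\beta_0$.} Under correct specification, \eqref{eqn:d-mean} gives $\mathbb E[d\,|\,\bm X]=\Phi(\bm X^\top\bm\beta_0)$. Conditioning on $\bm X$ inside $H^w(\bm\beta_0)$ and $G^w(t,\bm\beta_0)$ therefore gives $\mathbb E[w(\bm W,\bm\beta_0)\,|\,\bm X]=\ell(\bm X)$, so the two limits coincide. Finally, $-\partial^2\ell_i/\partial\bm\beta\partial\bm\beta^\top=w(\bm W_i,\bm\beta)\bm X_i\bm X_i^\top$, since $\partial u/\partial(\bm X^\top\bm\beta)=-d+(1-d)\dot\lambda(-\bm X^\top\bm\beta)$. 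Hence $H(\bm\beta_0)$ is the expected negative Hessian, which is the Fisher information.
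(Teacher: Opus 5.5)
Your proposal is correct and follows essentially the same route as the paper. Bounded weights that are continuous (indeed Lipschitz) in $\bm\beta$, multiplied by the VC class of half-space indicators, give Glivenko--Cantelli classes with integrable envelopes; the uniform law of large numbers then yields the convergences, and conditioning on $\bm X$ with $\mathbb E[d\,|\,\bm X]=\Phi(\bm X^\top\bm\beta_0)$ identifies the two limits at $\bm\beta_0$. You are more explicit than the paper about the Lipschitz-in-parameter bracketing and the Glivenko--Cantelli preservation step for the product class, and your envelope $|w|\le 2$ is looser than the paper's $w\in[0,1]$ but harmless.
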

\noindent{\bf Proof of Lemma A.\ref{lemma.score.A1}}  The random weight $d-(1-d)\dot\lambda(-\bm X^\top\bm\beta)$ is bounded in $[0,1]$ (indeed $-1<\dot\lambda(x)<0$ for all $x$) and continuous in $\bm\beta$; the classes $\{[d-(1-d)\dot\lambda(-\bm X^\top\bm\beta)]\bm X\bm X^\top\}$ and $\{[d-(1-d)\dot\lambda(-\bm X^\top\bm\beta)]\bm X I(\bm X^\top\bm\beta\le t)\}$ are therefore Glivenko--Cantelli with square-integrable envelopes (the latter by the VC property of the indicator class; cf.\ the proof of Lemma A.\ref{lemma.donsker} below), and the uniform law of large numbers \citep[Section 2.4]{Vaart1996Weak} gives the stated convergences. Replacing the random weights by their conditional means leaves a uniformly bounded class and the same argument applies, yielding the limits $H(\bm\beta)$ and $G(t,\bm\beta)$.  Under correct specification, at $\bm\beta=\bm\beta_0$, $\mathbb E[d|\bm X]=\Phi(\bm X^\top\bm\beta_0)$ by \eqref{eqn:d-mean}, so the two population forms coincide and $H(\bm\beta_0)$ equals the information matrix $\mathbb E[(d-(1-d)\dot\lambda(-\bm X^\top\bm\beta_0))\bm X\bm X^\top]$.

\begin{lemma}\label{lemma.score.A2}
	Under correct specification with parameter $\bm\beta_0$ and Assumptions (A1)--(A6),
	\begin{align*}
		\sup_{t\in\mathbb R} \left|U_n(t,\hat{\bm\beta}) - U_n(t,\bm\beta_0) + G^\top(t,\bm\beta_0)\sqrt{n}(\hat{\bm\beta} - \bm\beta_0)\right| = o_p(1).
	\end{align*}
	The decomposition is proved under correct specification, where the indicator-jump part is centered because $\mathbb E[u(\bm W,\bm\beta_0)|\bm X]=0$ and the random-weight and expected-weight forms of $G$ coincide; it extends to the local alternatives $H_{1n}$ of Section 3.2 (the corresponding mean term is $o(1)$), while under a fixed alternative the analogous expansion, centered at the pseudo-true value $\bm\beta^*$, has indicator-jump remainder  $O_p(n^{1/4})=o_p(n^{1/2})$ uniformly in $t$ (the mean-map part is $O_p(n^{1/4})$ and dominates the empirical-process part by the symmetric-difference bound in (A1)), which is all that the $n^{-1/2}$-normalized limit of Theorem 3.4 requires.
\end{lemma}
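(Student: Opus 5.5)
We split the difference $U_n(t,\hat{\bm\beta})-U_n(t,\bm\beta_0)$ into a part coming from changing the mark $u$ inside the sum and a part coming from moving the indicator threshold. Write
\begin{align*}
U_n(t,\hat{\bm\beta})-U_n(t,\bm\beta_0)
&=\frac{1}{\sqrt n}\sum_{i=1}^n\bigl[u(\bm W_i,\hat{\bm\beta})-u(\bm W_i,\bm\beta_0)\bigr]I(\bm X_i^\top\hat{\bm\beta}\le t)\\
&\quad+\frac{1}{\sqrt n}\sum_{i=1}^n u(\bm W_i,\bm\beta_0)\bigl[I(\bm X_i^\top\hat{\bm\beta}\le t)-I(\bm X_i^\top\bm\beta_0\le t)\bigr]=:A_n(t)+B_n(t).
\end{align*}
We then show that $A_n(t)=-G^\top(t,\bm\beta_0)\sqrt n(\hat{\bm\beta}-\bm\beta_0)+o_p(1)$ and $B_n(t)=o_p(1)$, both uniformly in $t$. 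Throughout we use that $\sqrt n(\hat{\bm\beta}-\bm\beta_0)=O_p(1)$, which follows from (A4) and the central limit theorem under (A3).

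For $A_n$, a mean-value expansion in $\bm\beta$ works because $\partial_{\bm\beta}u(\bm W,\bm\beta)=-[d-(1-d)\dot\lambda(-\bm X^\top\bm\beta)]\bm X$. This gives
\[A_n(t)=-\Bigl\{\frac1n\sum_{i=1}^n\bigl[d_i-(1-d_i)\dot\lambda(-\bm X_i^\top\tilde{\bm\beta})\bigr]\bm X_iI(\bm X_i^\top\hat{\bm\beta}\le t)\Bigr\}^\top\sqrt n(\hat{\bm\beta}-\bm\beta_0),\]
with $\tilde{\bm\beta}$ between $\hat{\bm\beta}$ and $\bm\beta_0$. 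Since $u$ is linear in $\bm X^\top\bm\beta$ except through $\lambda$, one can take a row-wise intermediate point, or use the integral form of the remainder, to avoid the usual vector mean-value issue.

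We then replace the bracketed average by $G_n(t,\hat{\bm\beta})$. The error is bounded by $\sup_x|\ddot\lambda(x)|\,\|\hat{\bm\beta}-\bm\beta_0\|\,n^{-1}\sum\|\bm X_i\|^2=o_p(1)$, using (A3) and the bounded second derivative of $\lambda$. Lemma A.\ref{lemma.score.A1} then gives $G_n(t,\hat{\bm\beta})=G^w(t,\hat{\bm\beta})+o_p(1)$ uniformly in $t$. Continuity of $\bm\beta\mapsto G^w(t,\bm\beta)$ uniformly in $t$ holds by dominated convergence for the weight and by the symmetric-difference bound in (A1) for the indicator, the latter combined with Cauchy--Schwarz and $\mathbb E\|\bm X\|^2<\infty$. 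This yields $G^w(t,\hat{\bm\beta})\to G^w(t,\bm\beta_0)=G(t,\bm\beta_0)$, where the last equality holds under correct specification. Multiplying by the $O_p(1)$ factor $\sqrt n(\hat{\bm\beta}-\bm\beta_0)$ gives the claim for $A_n$.

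For $B_n$, which we expect to be the main obstacle, we center it. The summand $f_{\bm\beta,t}=u(\bm W,\bm\beta_0)[I(\bm X^\top\bm\beta\le t)-I(\bm X^\top\bm\beta_0\le t)]$ has mean zero for every $(\bm\beta,t)$, because $\mathbb E[u(\bm W,\bm\beta_0)|\bm X]=0$ by (A3). So $B_n(t)=\mathbb G_n f_{\hat{\bm\beta},t}$ is a pure empirical-process term. The class $\{f_{\bm\beta,t}\}$ is Donsker: it is a difference of elements of the class in (A6) with $\bm\beta=\bm\beta_0$ fixed in the mark, and a fixed square-integrable function times VC indicator classes is Donsker.

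We also need the $L^2$ size of these functions to vanish. By the conditional variance identity \eqref{eqn:var-identity}, $\ell\le1$, and (A1),
\[\mathbb E f_{\bm\beta,t}^2=\mathbb E\bigl[\ell(\bm X)\,|I(\bm X^\top\bm\beta\le t)-I(\bm X^\top\bm\beta_0\le t)|\bigr]\le C\|\bm\beta-\bm\beta_0\|,\]
uniformly in $t$. Hence $\sup_t\|f_{\hat{\bm\beta},t}\|_{L^2}^2=O_p(n^{-1/2})\to0$. Asymptotic equicontinuity of the Donsker class, evaluated over the shrinking set $\{\|\bm\beta-\bm\beta_0\|\le\delta\}$ with $\delta\downarrow0$, then gives $\sup_t|B_n(t)|=o_p(1)$. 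For a rate, a maximal inequality with bracketing entropy of order $\log(1/\epsilon)$ at $L^2$-radius $n^{-1/4}$ would give $O_p(n^{-1/4}(\log n)^{1/2})$.

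Finally, for the local alternatives, $\mathbb E f_{\bm\beta,t}=n^{-1/2}\mathbb E\{a(\bm X^\top\bm\beta_0)[\cdots]\}$, so the mean part of $B_n$ is $\mathbb E|a|\cdot O(\|\hat{\bm\beta}-\bm\beta_0\|)$-small, that is, $o(1)$, and the argument goes through unchanged.
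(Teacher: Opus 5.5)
Your proposal is correct and follows essentially the same route as the paper. Both use the same split into a smooth mean-value part converging to $-G^\top(t,\bm\beta_0)\sqrt n(\hat{\bm\beta}-\bm\beta_0)$ and a conditionally centered indicator-jump part, which asymptotic equicontinuity of a Donsker class makes negligible; your Lipschitz bound on $\dot\lambda$ (which handles the $i$-dependent mean-value points) and your variance-identity bound $\mathbb E f_{\bm\beta,t}^2\le C\|\bm\beta-\bm\beta_0\|$ are slightly more explicit than the paper's corresponding steps.

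One small correction to your extension remark. Bound the local-alternative mean term by Cauchy--Schwarz, $\sup_t|\mathbb E[a(\bm X^\top\bm\beta_0)\{I(\bm X^\top\bm\beta\le t)-I(\bm X^\top\bm\beta_0\le t)\}]|\le(\mathbb E a^2)^{1/2}(C\|\bm\beta-\bm\beta_0\|)^{1/2}$ at $\bm\beta=\hat{\bm\beta}$; the bound $\mathbb E|a|\cdot O(\|\hat{\bm\beta}-\bm\beta_0\|)$ does not hold in general. The same bound, with $u(\bm W,\bm\beta^*)$ in place of $a$ and a factor $\sqrt n$ in front, gives the $O_p(n^{1/4})$ fixed-alternative remainder in the statement, which you did not address.
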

\noindent{\bf Proof of Lemma A.\ref{lemma.score.A2}} Write
\begin{align*}
U_n(t,\hat{\bm\beta}) = U_n(t,\bm\beta_0) + \mathcal R_{1n}(t) + \mathcal R_{2n}(t),
\end{align*}
with
\begin{align*}
\mathcal R_{1n}(t) &= \frac{1}{\sqrt{n}}\sum_{i=1}^n \left[u(\bm W_i,\hat{\bm\beta})-u(\bm W_i,\bm\beta_0)\right]I(\bm X_i^\top\hat{\bm\beta}\le t),\\
\mathcal R_{2n}(t) &= \frac{1}{\sqrt{n}}\sum_{i=1}^n u(\bm W_i,\bm\beta_0)\left[I(\bm X_i^\top\hat{\bm\beta}\le t)-I(\bm X_i^\top\bm\beta_0\le t)\right].
\end{align*}
{\em Step 1 (smooth part).} By the mean value theorem, with mean-value points $\bar{\bm\beta}_i$ on the segments between $\hat{\bm\beta}$ and $\bm\beta_0$,
\begin{align*}
\mathcal R_{1n}(t) = \left[\frac{1}{n}\sum_{i=1}^n \partial_{\bm\beta}u(\bm W_i,\bar{\bm\beta}_i) I(\bm X_i^\top\hat{\bm\beta}\le t)\right]^\top \sqrt{n}(\hat{\bm\beta}-\bm\beta_0),
\end{align*}
where $\partial_{\bm\beta}u(\bm W,\bm\beta)=-(d-(1-d)\dot\lambda(-\bm X^\top\bm\beta))\bm X$. The functions $\partial_{\bm\beta}u(\bm W,\bm\beta)I(\bm X^\top\bm\beta\le t)$, indexed by $(\bm\beta,t)\in\mathcal B\times\mathbb R$, form a class of the form treated in Lemma A.\ref{lemma.donsker} (the gradient is again Lipschitz in $\bm\beta$ with square-integrable Lipschitz constant under (A3)), hence $\mathbb P$-Donsker and Glivenko--Cantelli, and by  the uniform law of large numbers for this class together with the centered stochastic equicontinuity of Lemma A.\ref{lemma.donsker}, using $\sup_{1\le i\le n}\|\bar{\bm\beta}_i-\bm\beta_0\|=o_p(1)$ and $\hat{\bm\beta}\overset{p}{\to}\bm\beta_0$,
\begin{align*}
\frac{1}{n}\sum_{i=1}^n \partial_{\bm\beta}u(\bm W_i,\bar{\bm\beta}_i) I(\bm X_i^\top\hat{\bm\beta}\le t) = \frac{1}{n}\sum_{i=1}^n \partial_{\bm\beta}u(\bm W_i,\bm\beta_0) I(\bm X_i^\top\bm\beta_0\le t) + o_p(1)
\end{align*}
uniformly in $t$, and the right-hand side converges in probability, uniformly in $t$, to
\begin{align*}
\mathbb E[\partial_{\bm\beta}u(\bm W,\bm\beta_0)I(\bm X^\top\bm\beta_0\le t)]
&= -\mathbb E[(d-(1-d)\dot\lambda(-\bm X^\top\bm\beta_0))\bm X I(\bm X^\top\bm\beta_0\le t)]\\
&= -G(t,\bm\beta_0),
\end{align*}
using  the identity $\mathbb E[d|\bm X]=\Phi(\bm X^\top\bm\beta_0)$ under correct specification, under which the random-weight form of $G$ equals $G(t,\bm\beta_0)$. Hence $\mathcal R_{1n}(t) = -G^\top(t,\bm\beta_0)\sqrt{n}(\hat{\bm\beta}-\bm\beta_0)+o_p(1)$ uniformly in $t$.

{\em Step 2 (indicator-jump part), under $H_0$.} Let $T_n(t,\bm\beta)=\frac1{\sqrt n}\sum_{i=1}^n u(\bm W_i,\bm\beta_0)I(\bm X_i^\top\bm\beta\le t)$, so that $\mathcal R_{2n}(t)=T_n(t,\hat{\bm\beta})-T_n(t,\bm\beta_0)$. The class $\mathcal G=\{u(\bm W,\bm\beta_0)I(\bm X^\top\bm\beta\le t):\bm\beta\in\mathcal B,\ t\in\mathbb R\}$ is $\mathbb P$-Donsker by Lemma A.\ref{lemma.donsker}, and the map $(\bm\beta,t)\mapsto g_{\bm\beta,t}=u(\bm W,\bm\beta_0)I(\bm X^\top\bm\beta\le t)$ is continuous from $(\mathcal B\times\mathbb R,\rho)$ into $L^2(P)$: indeed, for $\rho((\bm\beta_1,t_1),(\bm\beta_2,t_2))<\delta$,
\begin{align*}
\|g_{\bm\beta_1,t_1}-g_{\bm\beta_2,t_2}\|_2^2 &\le \mathbb E\Bigl[u^2(\bm W,\bm\beta_0)\,\mathbf 1\{I(\bm X^\top\bm\beta_1\le t_1)\neq I(\bm X^\top\bm\beta_2\le t_2)\}\Bigr] \\[2pt]
&\to 0
\end{align*}
as $\delta\downarrow 0$, since the symmetric-difference event implies that one of the indices falls within distance $(\|\bm X\|+1)\delta$ of its threshold, and its probability is $O(\delta)$ uniformly in $t$ by the two bounds in (A1). Under $H_0$ the class $\mathcal G$ is centered: $\mathbb E[u(\bm W,\bm\beta_0)|\bm X]=0$ implies $Pg_{\bm\beta,t}=\mathbb E[I(\bm X^\top\bm\beta\le t)\,\mathbb E[u(\bm W,\bm\beta_0)|\bm X]]=0$ for every $(\bm\beta,t)$, so that $T_n(\cdot)=\mathbb G_n(g_\cdot)$ on this class and the stochastic equicontinuity of Lemma A.\ref{lemma.donsker} applies to $T_n$ itself. By the stochastic equicontinuity of the centered process $\mathbb G_n$ over the Donsker class $\mathcal G$ \citep[Section 1.5]{Vaart1996Weak} and since $\rho((\hat{\bm\beta},t),(\bm\beta_0,t))=\|\hat{\bm\beta}-\bm\beta_0\|=O_p(n^{-1/2})=o_p(1)$, we obtain, for every $\epsilon>0$ and $\eta>0$,
\begin{align*}
\mathbb P\!\left(\sup_{t\in\mathbb R}|\mathcal R_{2n}(t)|>\epsilon\right)
&\le \mathbb P\!\left(\sup_{\rho(\theta_1,\theta_2)<\delta}|T_n(\theta_1)-T_n(\theta_2)|>\epsilon\right)\\
&\quad + \mathbb P(\|\hat{\bm\beta}-\bm\beta_0\|>\delta) \le \eta + o(1)
\end{align*}
for $\delta>0$ small enough and then $n$ large enough; hence $\sup_t|\mathcal R_{2n}(t)|=o_p(1)$. Combining the two steps completes the proof. We note for later use that Step 2 uses the null only through the centering $\mathbb E[u(\bm W,\bm\beta_0)|\bm X]=0$ (under the local alternatives $H_{1n}$ the remaining mean term $\mathbb E[a(\bm X^\top\bm\beta_0)(I_{\bm\beta_1,t_1}-I_{\bm\beta_2,t_2})]$ is $o(1)$, and under a fixed alternative the remainder, centered at the pseudo-true value, is  $O_p(n^{1/4})$ uniformly in $t$), and that the approximation error of Lemma A.\ref{lemma.score.A2} is of order $O_p(n^{-1/4}(\log n)^{1/2})$ uniformly in $t$, the rate being driven by the indicator-jump term (the empirical-process modulus over $L^2$-balls of radius $O(n^{-1/4})$).

\begin{lemma}\label{lemma.score.A3}
	Under Assumptions (A1)--(A5), we have
	\begin{align*}
\frac{1}{\sqrt{n}} \sum_{i=1}^n u(\bm{W}_i,\hat{\bm\beta}) \bm{X}_i
	= \frac{1}{\sqrt{n}} \sum_{i=1}^n u(\bm{W}_i,\bm\beta_0) 
    \bm{X}_i 
		- H^{w}(\bm\beta_0) \sqrt{n} (\hat{\bm\beta} - \bm\beta_0) + o_p(1),
	\end{align*}
where $H^{w}(\bm\beta_0)=\mathbb E[(d-(1-d)\dot\lambda(-\bm X^\top\bm\beta_0))\bm X\bm X^\top]$ is the random-weight limit of Lemma A.\ref{lemma.score.A1}; under  correct specification, $H^{w}(\bm\beta_0)=H(\bm\beta_0)=\mathbb E[\ell(\bm X)\bm X\bm X^\top]$.
\end{lemma}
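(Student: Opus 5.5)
The plan is a first-order Taylor expansion of the score vector with a uniform law of large numbers for the Hessian, in the same spirit as Step 1 of the proof of Lemma A.\ref{lemma.score.A2} but without indicators. The key derivative is $\partial_{\bm\beta}u(\bm W,\bm\beta)=-[d-(1-d)\dot\lambda(-\bm X^\top\bm\beta)]\bm X$. Hence the $p\times p$ Jacobian of $u(\bm W,\bm\beta)\bm X$ is $-[d-(1-d)\dot\lambda(-\bm X^\top\bm\beta)]\bm X\bm X^\top$. Applying the mean value theorem row by row, with mean-value points $\bar{\bm\beta}^{(k)}$ on the segment between $\hat{\bm\beta}$ and $\bm\beta_0$ for the $k$th component, gives
\[
\frac{1}{\sqrt n}\sum_{i=1}^n u(\bm W_i,\hat{\bm\beta})\bm X_i=\frac{1}{\sqrt n}\sum_{i=1}^n u(\bm W_i,\bm\beta_0)\bm X_i-\bar H_n\,\sqrt n(\hat{\bm\beta}-\bm\beta_0).
\]
Here the $k$th row of $\bar H_n$ is the $k$th row of $H_n(\bar{\bm\beta}^{(k)})$, the random-weight matrix of Lemma A.\ref{lemma.score.A1}.

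It then suffices to show $\bar H_n\overset{p}{\to}H^{w}(\bm\beta_0)$, since $\sqrt n(\hat{\bm\beta}-\bm\beta_0)=O_p(1)$ by the influence expansion in (A4) together with the central limit theorem for $n^{-1/2}\sum_i u(\bm W_i,\bm\beta_0)\bm X_i$. The latter has finite variance by (A3), which gives $\mathbb E[u^2\|\bm X\|^2]<\infty$. The convergence of $\bar H_n$ follows in two steps.

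\begin{enumerate}
\item Lemma A.\ref{lemma.score.A1} gives $\sup_{\bm\beta\in\mathcal B}\|H_n(\bm\beta)-H^{w}(\bm\beta)\|=o_p(1)$.
\item $H^{w}(\cdot)$ is continuous at $\bm\beta_0$. The weight lies in $[0,1]$ and is continuous in $\bm\beta$ because $\dot\lambda$ is continuous. With the integrable envelope $\|\bm X\|^2$ from $\mathbb E\|\bm X\|^4<\infty$, dominated convergence gives the continuity.
\end{enumerate}

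Since every $\bar{\bm\beta}^{(k)}\overset{p}{\to}\bm\beta_0$ (because $\hat{\bm\beta}\overset{p}{\to}\bm\beta_0$), combining the two steps yields $\bar H_n-H^{w}(\bm\beta_0)=o_p(1)$. The product $o_p(1)\cdot O_p(1)$ then gives the $o_p(1)$ remainder. Under correct specification, the last claim $H^{w}(\bm\beta_0)=H(\bm\beta_0)$ is exactly the final assertion of Lemma A.\ref{lemma.score.A1}, which conditions on $\bm X$ and uses \eqref{eqn:d-mean}.

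The main subtlety is the mean-value step: a single intermediate point does not exist for a vector-valued map, so I use a different point for each row as above. Alternatively, I would write the integral form $\int_0^1 H_n(\bm\beta_0+s(\hat{\bm\beta}-\bm\beta_0))\,ds$ and apply the uniform bound inside the integral. Differentiability holds pointwise for every $\bm W$ because $d$ does not depend on $\bm\beta$ and $\lambda$ is smooth, so no indicator-jump term appears. This is why, unlike Lemma A.\ref{lemma.score.A2}, no empirical-process equicontinuity argument is needed beyond the Glivenko--Cantelli property already used in Lemma A.\ref{lemma.score.A1}.
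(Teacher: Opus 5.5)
Your proof is correct and follows the paper's route: a mean-value expansion of the score, the uniform convergence of the random-weight matrix $H_n(\bm\beta)$ from Lemma A.\ref{lemma.score.A1}, continuity of $H^{w}$ at $\bm\beta_0$, and $\sqrt n(\hat{\bm\beta}-\bm\beta_0)=O_p(1)$. Your row-wise mean-value points $\bar{\bm\beta}^{(k)}$ (or the integral form) are slightly cleaner than the paper's observation-wise points $\bar{\bm\beta}_i$: they let you apply the uniform law of large numbers directly, whereas for $n^{-1}\sum_i[d_i-(1-d_i)\dot\lambda(-\bm X_i^\top\bar{\bm\beta}_i)]\bm X_i\bm X_i^\top$ the paper tacitly needs an extra Lipschitz or domination bound on the weight, which (A5) supplies.
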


{\bf Proof of Lemma A.\ref{lemma.score.A3}} 
By a mean value expansion, with mean-value points $\bar{\bm\beta}_i$ on the segments between $\hat{\bm\beta}$ and $\bm\beta_0$,
\begin{align*}
	&\frac{1}{\sqrt{n}} \sum_{i=1}^n u(\bm{W}_i,\hat{\bm\beta}) \bm{X}_i\\
		&= \frac{1}{\sqrt{n}} \sum_{i=1}^n u(\bm{W}_i,\bm\beta_0) \bm{X}_i 
			+ \frac{1}{\sqrt{n}}\sum_{i=1}^n (u(\bm{W}_i,\hat{\bm\beta})-u(\bm{W}_i,\bm\beta_0))  \bm{X}_i \\
	&= \frac{1}{\sqrt{n}} \sum_{i=1}^n u(\bm{W}_i,\bm\beta_0) \bm{X}_i 
		- \left[\frac{1}{n}\sum_{i=1}^n [d_i-(1-d_i)\dot{\lambda}(-\bm{X}_i^\top \bar{\bm\beta}_i)]\bm{X}_i \bm{X}_i^\top\right] \sqrt{n}(\hat{\bm\beta} - \bm\beta_0)\\
	&= \frac{1}{\sqrt{n}} \sum_{i=1}^n u(\bm{W}_i,\bm\beta_0) \bm{X}_i 
		- H^{w}(\bm\beta_0) \sqrt{n}(\hat{\bm\beta} - \bm\beta_0) + o_p(1),
\end{align*}
where the second equality uses $u(\bm W_i,\hat{\bm\beta})-u(\bm W_i,\bm\beta_0)=-\bm X_i^\top[d_i-(1-d_i)\dot\lambda(-\bm X_i^\top\bar{\bm\beta}_i)](\hat{\bm\beta}-\bm\beta_0)$, and the last equality follows from Lemma A.\ref{lemma.score.A1} applied to the class $\{[d-(1-d)\dot\lambda(-\bm X^\top\bm\beta)]\bm X\bm X^\top\}$ (whose limit is $H^{w}(\bm\beta_0)$), together with $\hat{\bm\beta}\overset{p}{\to}\bm\beta_0$ and the continuity of the limit at $\bm\beta_0$. Under correct specification, in particular, if $\hat{\bm\beta}$ is the maximum likelihood estimator, the first-order conditions make the left-hand side identically zero, and the lemma is then the influence expansion of Assumption (A4) rearranged (under correct specification, where $H^{w}(\bm\beta_0)=H(\bm\beta_0)$).

\begin{lemma}\label{lemma.donsker}
	Let $\mathcal T=\mathbb R$, and for $\bm\theta=(\bm\beta^\top,t)^\top \in \mathcal{B}\times \mathcal{T}$, define the process
	\[T_n(\bm\theta)=T_n(t,\bm\beta)=\frac{1}{\sqrt{n}}\sum_{i=1}^n u(\bm W_i,\bm\beta) I(\bm X_i^\top \bm\beta \leq t),\qquad f_{\bm\theta}(\bm W)=u(\bm W,\bm\beta) I(\bm X^\top \bm\beta \leq t),\]
	and the metric
	\[\rho(\bm\theta_1,\bm\theta_2) =\|\bm\beta_1 - \bm\beta_2\| + |t_1-t_2|.\]
	Under Assumptions (A1)--(A5), the class $\mathcal F=\{f_{\bm\theta}:\bm\theta\in\mathcal B\times\mathcal T\}$ is $\mathbb P$-Donsker with square-integrable envelope and the map $\bm\theta\mapsto f_{\bm\theta}$ is continuous from $(\mathcal B\times\mathcal T,\rho)$ into $L^2(P)$; consequently, the centered process
	\[\mathbb G_n(f_{\bm\theta})=\frac1{\sqrt n}\sum_{i=1}^n\bigl[f_{\bm\theta}(\bm W_i)-Pf_{\bm\theta}\bigr],\qquad \bm\theta\in\mathcal B\times\mathcal T,\]
	is stochastically $\rho$-equicontinuous: for any $\epsilon>0$,
	\[\lim_{\delta \downarrow 0} \limsup_{n\rightarrow \infty}  \mathbb P\Bigl(\sup_{\rho(\bm\theta_1,\bm\theta_2)<\delta} |\mathbb G_n(f_{\bm\theta_1}) - \mathbb G_n(f_{\bm\theta_2})|>\epsilon \Bigr) = 0.\]
	If, in addition, $Pf_{\bm\theta}=0$ for all $\bm\theta\in\mathcal B\times\mathcal T$, then $T_n(\bm\theta)=\mathbb G_n(f_{\bm\theta})$ and the same conclusion holds for the uncentered sums $T_n$. This is the case under $H_0$ for the class with $u(\bm W,\bm\beta_0)$ held fixed, because $\mathbb E[u(\bm W,\bm\beta_0)|\bm X]=0$, and for the multiplier processes below, conditionally on the data, because $\mathbb E^*[V_i]=0$. In general the deterministic part $\sqrt n(Pf_{\bm\theta_1}-Pf_{\bm\theta_2})$ does not vanish as $n\to\infty$ and is treated separately, as in the proof of Lemma A.\ref{lemma.score.A2}.
	Moreover, the same conclusion holds with $u$ replaced by any fixed measurable function $m(\bm W,\bm\beta)$ that is Lipschitz in $\bm\beta$ with square-integrable Lipschitz constant, such as $\partial_{\bm\beta}u$ or, when $a$ is Lipschitz (in particular for the polynomials in \eqref{eqn:local.alt}), $a(\bm X^\top\bm\beta)$.
\end{lemma}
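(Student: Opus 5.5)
The plan is to verify the Donsker property through a bracketing-entropy argument, using the Lipschitz structure of $u$ in $\bm\beta$ and the symmetric-difference bound of (A1) for the indicators. The product structure $f_{\bm\theta}=u(\cdot,\bm\beta)\,I(\bm X^\top\bm\beta\le t)$ suggests handling the two factors separately and then combining them.

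\textbf{Envelope and smooth factor.} The envelope $F(\bm W)=\sup_{\bm\beta\in\mathcal B}|u(\bm W,\bm\beta)|$ is square integrable by (A3), with even a fourth moment available. Since $\partial_{\bm\beta}u(\bm W,\bm\beta)=-[d-(1-d)\dot\lambda(-\bm X^\top\bm\beta)]\bm X$ and the weight lies in $[0,1]$, the mean value theorem gives
\[
|u(\bm W,\bm\beta_1)-u(\bm W,\bm\beta_2)|\le \|\bm X\|\,\|\bm\beta_1-\bm\beta_2\|.
\]
So the smooth factor is Lipschitz in the parameter with square-integrable constant $M=\|\bm X\|$.

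\textbf{Indicator factor.} The class $\{I(\bm X^\top\bm\beta\le t)\}$ consists of indicators of half-spaces, hence is VC. I would nevertheless build brackets directly, because the product with the unbounded $u$ has to be controlled.

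\textbf{Brackets.} Cover $\mathcal B\times[-K,K]$ by a $\rho$-grid of mesh $\delta$; this needs $O(\delta^{-(p+1)})$ points. For each grid point $(\bm\beta_j,t_j)$ use the bracket
\[
u(\cdot,\bm\beta_j)\,I(\bm X^\top\bm\beta_j\le t_j)\pm\Big[M\delta+F\,\mathbf 1\{|\bm X^\top\bm\beta_j-t_j|\le(\|\bm X\|+1)\delta\}\Big].
\]
Its $L^2$ size is controlled by $\delta^2\,\mathbb E M^2$ plus the term
\[
\mathbb E\big[F^2\,\mathbf 1\{\cdots\}\big]\le \big(\mathbb E F^4\big)^{1/2}\,\mathbb P(\cdots)^{1/2}.
\]
Split the probability on $\{\|\bm X\|\le R\}$ and its complement. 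The first piece is bounded by $C_{\mathcal K}(R+1)\delta$ using (A1). The second piece is bounded by $\mathbb E\|\bm X\|^4/R^4$. Choosing $R=\delta^{-1/5}$ gives a bracket size of order $\delta^{c}$ for some $c>0$. The tails $|t|>K$ are handled by two extra brackets of size $o(1)$ as $K\to\infty$. The log of the bracketing number is therefore $O(\log(1/\varepsilon))$, the bracketing integral converges, and the Donsker property follows from Theorem 2.5.6 of \cite{Vaart1996Weak}.

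\textbf{Main obstacle: $\bm\beta$ near the origin.} The hard step is that (A1) holds only on compact $\mathcal K\subset\mathcal B\setminus\{\bm 0\}$. I would first try to work on $\mathcal K$, a neighborhood-excluded subset containing $\bm\beta_0$ and $\bm\beta^*$. That is all the later arguments use, since $\hat{\bm\beta}$ lies in any such neighborhood with probability tending to one. Failing that, I would fall back on the VC-subgraph route for the whole class: the indicators form a VC class, and the functions $u(\cdot,\bm\beta)$ are Lipschitz in a finite-dimensional parameter. Products of such classes are Donsker by the uniform-entropy permanence theorems, namely Theorem 2.10.20 of \cite{Vaart1996Weak} applied with envelope $F$. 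This requires no continuity of the index distribution at all.

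\textbf{$L^2$ continuity.} Bound
\[
\|f_{\bm\theta_1}-f_{\bm\theta_2}\|_2^2\le 2\,\mathbb E M^2\,\|\bm\beta_1-\bm\beta_2\|^2+2\,\mathbb E\big[F^2\,\mathbf 1\{I_1\neq I_2\}\big].
\]
Treat the second term exactly as in the bracket computation, using both bounds of (A1) with $|t_1-t_2|\le\delta$.

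\textbf{Stochastic equicontinuity.} Donsker plus $L^2$ continuity gives asymptotic equicontinuity in the $L^2(P)$ semimetric, and hence in $\rho$, by Section 1.5 and 2.1.2 of \cite{Vaart1996Weak}. When $Pf_{\bm\theta}\equiv0$ we have $T_n=\mathbb G_n$ trivially, and the multiplier case follows from the conditional multiplier CLT (Theorem 2.9.6 of \cite{Vaart1996Weak}) under (A7).

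\textbf{Other marks $m$.} The final claim, for general Lipschitz $m$ such as $\partial_{\bm\beta}u$ or $a(\bm X^\top\bm\beta)$, repeats the argument verbatim. The envelope becomes $\sup_{\bm\beta}|m|$, whose square integrability follows from the Lipschitz bound together with compactness of $\mathcal B$.
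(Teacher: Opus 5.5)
Your proof is correct in substance and has the same skeleton as the paper's: bracketing entropy with Theorem 2.5.6 of \cite{Vaart1996Weak}, then $L^2$-continuity from (A1), then asymptotic equicontinuity. The difference is in how the brackets are built.

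The paper brackets the two factors separately. It uses parametric Lipschitz brackets for $u(\cdot,\bm\beta)$, and brackets for the half-space indicators whose bracketing numbers it takes directly from the VC property. It then multiplies them, using H\"older with $\|F\|_4$ and $\|I^U-I^L\|_4^4=\|I^U-I^L\|_2^2$ to obtain product brackets of size $O(\eta^{1/2})$.

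You instead bracket $f_{\bm\theta}$ jointly around $\rho$-grid points. The boundary-strip term $F\,\mathbf 1\{|\bm X^\top\bm\beta_j-t_j|\le(\|\bm X\|+1)\delta\}$ is controlled by the first bound in (A1) plus a truncation of $\|\bm X\|$, giving brackets of size $O(\delta^{1/5})$. Your route has a worse (harmless) exponent and needs (A1), hence the restriction to a compact $\mathcal K\subset\mathcal B\setminus\{\bm 0\}$. In exchange it actually justifies the bracketing bound. The VC property controls uniform covering numbers, not bracketing numbers, so the paper's bracketing claim for the indicators tacitly needs the same anti-concentration you invoke.

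Your uniform-entropy fallback (Theorem 2.10.20) is the legitimate way to use the VC property. It gives the Donsker property on all of $\mathcal B$ with no continuity condition. It also covers general Lipschitz marks $m$ with only a square-integrable envelope, whereas the bracketing argument with a rate would need an $L^{2+\epsilon}$ envelope. For that envelope you should also assume $m(\cdot,\bm\beta_\ast)\in L^2$ at one point; the Lipschitz bound and compactness alone do not give it.

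Two corrections.

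First, for the $L^2$-continuity and $\rho$-equicontinuity parts, the restriction to $\mathcal K$ is necessary, not just convenient. If $\bm 0\in\mathcal B$ these claims are false: $f_{(\bm 0,t)}=u(\bm W,\bm 0)I(t\ge 0)$ jumps at $t=0$, so $\|f_{(\bm 0,0)}-f_{(\bm 0,-\epsilon)}\|_2=\|u(\bm W,\bm 0)\|_2>0$ for every $\epsilon>0$. State explicitly that these parts are proved (and true) on $\mathcal K\times\mathbb R$, which is all Lemma A.\ref{lemma.score.A2} uses. The fallback replaces only the Donsker part, so do not present it as a full substitute. The paper's Step 2, which applies the (A1) bound ``uniformly in $\bm\beta$ and $t$'', glosses over the same point.

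Second, the tail $t>K$ cannot be covered by a single bracket, because there $f_{\bm\theta}\approx u(\cdot,\bm\beta)$ still varies with $\bm\beta$. Use $u(\cdot,\bm\beta_j)\pm\bigl[M\delta+F\,\mathbf 1\{\|\bm X\|\sup_{\bm\beta\in\mathcal B}\|\bm\beta\|>K\}\bigr]$ over the $\bm\beta$-grid instead. This adds $O(\delta^{-p})$ brackets and leaves the log-bracketing number of order $\log(1/\varepsilon)$.
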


\noindent{\bf Proof of Lemma A.\ref{lemma.donsker}} {\em Step 1 (bracketing entropy).} For $\bm\beta_1,\bm\beta_2\in\mathcal B$,
\begin{align*}
|u(\bm W,\bm\beta_1)-u(\bm W,\bm\beta_2)| &\le C(\bm W)\|\bm\beta_1-\bm\beta_2\|,\\[2pt]
C(\bm W)&=\|\bm X\|\bigl(1+\sup_{x}|\dot\lambda(x)|\bigr)\le 2\|\bm X\|,
\end{align*}
since $\partial_{\bm\beta}u(\bm W,\bm\beta)=-(d-(1-d)\dot\lambda(-\bm X^\top\bm\beta))\bm X$ and $-1<\dot\lambda(x)<0$ for all $x$ (the Riccati identity $\dot\lambda(x)=-\lambda(x)(x+\lambda(x))$ together with the classical bounds on the inverse Mills ratio; see, e.g., \citealp{Sampford1953}); under (A3), $\mathbb E C^2(\bm W)<\infty$. The class $\mathcal F_0=\{I(\bm X^\top \bm\beta \leq t):\bm\beta\in \mathcal{B},\ t\in \mathcal{T}\}$ is a VC-subgraph class with VC index at most $p+2$ \citep[Section 2.6]{Vaart1996Weak}, so that $\log N_{[]}(\eta,\mathcal F_0,L_2(P))\lesssim (p+2)\log(1/\eta)$. For the class $\mathcal F_1=\{u(\bm W,\bm\beta):\bm\beta\in\mathcal B\}$, cover $\mathcal B$ by $O(\eta^{-p})$ balls of diameter $\eta$ with centers $\bm\beta_j$; the pairs $[u(\bm W,\bm\beta_j)-C(\bm W)\eta,\,u(\bm W,\bm\beta_j)+C(\bm W)\eta]$ form brackets of $L_2$-size $O(\eta)$, hence $\log N_{[]}(\eta,\mathcal F_1,L_2)\lesssim p\log(1/\eta)$. For the product class $\mathcal F=\{u(\bm W,\bm\beta)I(\bm X^\top\bm\beta\le t)\}$, combine an $\eta$-bracket $[u^L,u^U]$ for $\mathcal F_1$ with an $\eta$-bracket $[I^L,I^U]$ for $\mathcal F_0$. Since the indicators are $\{0,1\}$-valued, the product bracket with lower bound $b^L:=$$\min\{u^LI^L,u^LI^U,u^UI^L,u^UI^U\}$ and upper bound $b^U:=\max\{u^LI^L,u^LI^U,u^UI^L,u^UI^U\}$, has $L_2$-size $\|b^U-b^L\|_2$, which is bounded by
\begin{align*}
\|b^U-b^L\|_2 &\le \|u^U-u^L\|_2 + \bigl\|\max(|u^L|,|u^U|)(I^U-I^L)\bigr\|_2\notag\\
&\le \eta + (\|F\|_4+\eta)\,\|I^U-I^L\|_2^{1/2} \le \eta+(\|F\|_4+\eta)\,\eta^{1/2},
\end{align*}
where the first inequality is the pointwise bound $|b^U-b^L|\le(u^U-u^L)+\max(|u^L|,|u^U|)(I^U-I^L)$, and the second uses $\max(|u^L|,|u^U|)\le F+\eta$, H\"older and $\|I^U-I^L\|_4^4=\|I^U-I^L\|_2^2$ for $\{0,1\}$-valued functions. Hence product brackets of $L_2$-size $O(\eta^{1/2})$ exist: replacing $\eta$ by $\eta^2$ gives $\log N_{[]}(\eta,\mathcal F,L_2)\lesssim(4p+4)\log(1/\eta)$. With envelope $F=\sup_{\bm\beta\in\mathcal B}|u(\bm W,\bm\beta)|\le |Y|+\|\bm X\|\sup_{\bm\beta\in\mathcal B}\|\bm\beta\|+\lambda\bigl(-\|\bm X\|\sup_{\bm\beta\in\mathcal B}\|\bm\beta\|\bigr)$ (since $\lambda$ is decreasing and $|\bm X^\top\bm\beta|\le\|\bm X\|\sup_{\bm\beta\in\mathcal B}\|\bm\beta\|$), which satisfies $\mathbb E F^4<\infty$ under (A3) because $\lambda(-t)\le t+1$ for all $t\ge0$ by the classical bounds on the inverse Mills ratio and $\mathbb E Y^4\lesssim \mathbb E\|\bm X\|^4+1<\infty$, the bracketing entropy integral $\int_0^1\sqrt{\log N_{[]}(\eta,\mathcal F,L_2)}\,d\eta$ is finite, and by \citet[Theorem 2.5.6]{Vaart1996Weak} the class $\mathcal F$ is $\mathbb P$-Donsker. This proves the Donsker part of Assumption (A6) from primitive conditions; the class $\{u(\bm W,\bm\beta)\bm X\}$ is Donsker by the same argument.

{\em Step 2 (continuity into $L^2$ and equicontinuity).} For $\theta_j=(\bm\beta_j,t_j)$, $j=1,2$, with $\rho(\theta_1,\theta_2)<\delta$,
\begin{align*}
\|f_{\theta_1}-f_{\theta_2}\|_2 &\le \|u(\bm W,\bm\beta_1)-u(\bm W,\bm\beta_2)\|_2\\[2pt]
&\quad + \|F\|_2\sqrt{P(I(\bm X^\top\bm\beta_1\le t_1)\neq I(\bm X^\top\bm\beta_2\le t_2))}.
\end{align*}
The first term is bounded by $\|C\|_2\delta$. For the second, the symmetric-difference event implies that one of the indices $\bm X^\top\bm\beta_j$ falls within distance $(\|\bm X\|+1)\delta$ of $t_j$; the probability of this event tends to $0$ as $\delta\downarrow0$ uniformly in $\bm\beta$ and $t$ by Assumption (A1) and Lemma A.\ref{lem:uniform-continuity}. Thus $\omega(\delta):=\sup_{\rho(\theta_1,\theta_2)<\delta}\|f_{\theta_1}-f_{\theta_2}\|_2\downarrow 0$. Since the empirical process indexed by a Donsker class is asymptotically equicontinuous with respect to the $L_2$-norm \citep[Section 1.5]{Vaart1996Weak}, we obtain, for every $\epsilon>0$,
	\[\lim_{\delta\downarrow0}\limsup_{n\to\infty}\mathbb P\Bigl(\sup_{\rho(\theta_1,\theta_2)<\delta}|\mathbb G_n(f_{\theta_1})-\mathbb G_n(f_{\theta_2})|>\epsilon\Bigr)=0,\]
	which is the claimed stochastic $\rho$-equicontinuity of the centered process. When $Pf_{\bm\theta}=0$ for all $\bm\theta$, $\mathbb G_n(f_{\bm\theta})=T_n(\bm\theta)$ and the conclusion transfers to the uncentered sums; otherwise the additional deterministic term $\sqrt n(Pf_{\theta_1}-Pf_{\theta_2})$ is handled separately, as in the proof of Lemma A.\ref{lemma.score.A2}. The extension to the classes listed in the statement is immediate from the same argument. $\square$

\begin{lemma}\label{lem:varid}
	 Under correct specification with parameter $\bm\beta_0$, Assumption (A3), and the notation of Sections 2.1--2.2:
	\begin{enumerate}[label=(\roman*),leftmargin=2.2em]
		\item $\mathbb E[u^2(\bm W,\bm\beta_0)|\bm X]=\ell(\bm X)=\Phi(\bm X^\top\bm\beta_0)-\Phi(-\bm X^\top\bm\beta_0)\dot\lambda(-\bm X^\top\bm\beta_0)$ a.s.;
		\item $\mathbb E[\Sigma(t_1)\Sigma(t_2)]=\mathbb E[u^2(\bm W,\bm\beta_0)I(\bm X^\top\bm\beta_0\le t_1)I(\bm X^\top\bm\beta_0\le t_2)]-G^\top(t_1,\bm\beta_0)H^{-1}(\bm\beta_0)G(t_2,\bm\beta_0)$;
		\item $\mathbb E[\Sigma(t)^2]\le\mathbb E[u^2(\bm W,\bm\beta_0)I(\bm X^\top\bm\beta_0\le t)]$, with equality if and only if $G(t,\bm\beta_0)=0$; the two kernels coincide for all $(t_1,t_2)$ if and only if $G(\cdot,\bm\beta_0)\equiv0$;
		\item $\mathbb E[\Sigma(t)\,u(\bm W,\bm\beta_0)\bm X^\top]=0$ for all $t$, so $\Sigma(t)$ and the score $u(\bm W,\bm\beta_0)\bm X$ are uncorrelated.
	\end{enumerate}
\end{lemma}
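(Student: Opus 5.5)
The plan is to prove the four parts in order. Part (i) is a direct conditional-moment calculation, and parts (ii)--(iv) follow from it by the tower property and elementary matrix algebra.

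\emph{Part (i).} Write $v=\bm X^\top\bm\beta_0$, so that under correct specification $Y-v=e$ on $\{d=1\}$. Since $d(1-d)=0$, the cross term in $u^2$ vanishes and
\[u^2(\bm W,\bm\beta_0)=d\,e^2+(1-d)\lambda^2(-v).\]
Conditionally on $\bm X$, I will evaluate $\mathbb E[d\,e^2|\bm X]=\int_{-v}^\infty s^2\phi(s)\,\mathrm ds$ by integrating by parts, using $s\phi(s)=-\phi'(s)$. This gives $\int_a^\infty s^2\phi=a\phi(a)+1-\Phi(a)$, and with $a=-v$ it equals $\Phi(v)-v\phi(v)$. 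The second term has conditional mean $\Phi(-v)\lambda^2(-v)$ by \eqref{eqn:d-mean}. On the other side, expanding $\ell(\bm X)$ with $\dot\lambda(x)=-x\lambda(x)-\lambda^2(x)$ at $x=-v$ gives
\[\ell(\bm X)=\Phi(v)-v\,\Phi(-v)\lambda(-v)+\Phi(-v)\lambda^2(-v).\]
Since $\Phi(-v)\lambda(-v)=\phi(v)$, the two expressions agree. The only real risk in the whole lemma is a sign slip in these truncated-moment formulas, so I will check them carefully.

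\emph{Part (ii).} All moments below are finite because (A3) gives $\mathbb E[u^2\|\bm X\|^2]<\infty$ and $\mathbb E\|\bm X\|^4<\infty$. I will expand $\Sigma(t_1)\Sigma(t_2)$ into four terms and condition on $\bm X$ in each, applying (i):
\[\mathbb E[u^2\bm X I(\bm X^\top\bm\beta_0\le t)]=\mathbb E[\ell(\bm X)\bm X I(\bm X^\top\bm\beta_0\le t)]=G(t,\bm\beta_0),\qquad \mathbb E[u^2\bm X\bm X^\top]=H(\bm\beta_0).\]
The expectation of the product then equals $\mathbb E[u^2 I_1 I_2]$, plus two cross terms each equal to $-G_1^\top H^{-1}G_2$, plus the quadratic term $G_1^\top H^{-1}HH^{-1}G_2=G_1^\top H^{-1}G_2$. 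Here $G_j=G(t_j,\bm\beta_0)$ and $I_j=I(\bm X^\top\bm\beta_0\le t_j)$. The net correction is $-G_1^\top H^{-1}G_2$, which is (ii).

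\emph{Parts (iii) and (iv).} Setting $t_1=t_2=t$ in (ii), the difference between the two sides of (iii) is $G^\top H^{-1}G$. By (A2), $H(\bm\beta_0)=\mathbb E[\ell\bm X\bm X^\top]$ is positive definite, so $H^{-1}$ is positive definite as well. Hence $G^\top H^{-1}G\ge0$, with equality if and only if $G(t,\bm\beta_0)=0$. If $G\equiv0$, the kernels coincide trivially. Conversely, if the kernels coincide for all $(t_1,t_2)$, then taking $t_1=t_2$ forces $G(t,\bm\beta_0)=0$ for every $t$. For (iv), the same two identities from (i) give
\[\mathbb E[\Sigma(t)u\bm X^\top]=G^\top(t,\bm\beta_0)-G^\top(t,\bm\beta_0)H^{-1}H=0.\]
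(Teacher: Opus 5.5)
Your proposal is correct and follows essentially the same route as the paper: the decomposition $u^2=d\,e^2+(1-d)\lambda^2(-v)$, the truncated moment $\Phi(v)-v\phi(v)$ (you obtain it by integration by parts, the paper by differentiating $\Phi(\eta)-\eta\phi(\eta)$), the Riccati expansion of $\ell(\bm X)$, and the four-term expansion of $\Sigma(t_1)\Sigma(t_2)$ using $\mathbb E[u^2\bm X I]=G$ and $\mathbb E[u^2\bm X\bm X^\top]=H$. Your explicit appeal to (A2) for the positive definiteness of $H(\bm\beta_0)$ in part (iii) makes precise a step that the paper leaves implicit.
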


\noindent{\bf Proof of Lemma A.\ref{lem:varid}} Write $\eta=\bm X^\top\bm\beta_0$ and $\lambda=\lambda(-\eta)$, $\dot\lambda=\dot\lambda(-\eta)$. Since $d(1-d)=0$,
\[
u^2(\bm W,\bm\beta_0)=d(Y-\eta)^2+(1-d)\lambda^2.
\]
On $\{d=1\}$ we have $Y-\eta=e$ with $e>-\eta$, and the identity $\frac{d}{d\eta}[\Phi(\eta)-\eta\phi(\eta)]=\eta^2\phi(\eta)$ gives $\mathbb E[e^2\ind{e>-\eta}]=\Phi(\eta)-\eta\phi(\eta)$. Hence
\begin{align*}
\mathbb E[u^2|\bm X]&=\Phi(\eta)-\eta\phi(\eta)+\Phi(-\eta)\lambda(-\eta)^2
=\Phi(\eta)-\eta\phi(\eta)+\frac{\phi(\eta)^2}{\Phi(-\eta)}.
\end{align*}
On the other hand, the Riccati identity $\dot\lambda(-\eta)=\eta\lambda(-\eta)-\lambda(-\eta)^2$ yields
$\Phi(-\eta)\dot\lambda(-\eta)=\eta\phi(\eta)-\phi(\eta)^2/\Phi(-\eta)$, so that
$\ell(\bm X)=\Phi(\eta)-\Phi(-\eta)\dot\lambda(-\eta)=\Phi(\eta)-\eta\phi(\eta)+\phi(\eta)^2/\Phi(-\eta)$, which proves (i).
For (ii), expand $\Sigma(t_1)\Sigma(t_2)=u^2[\ind{\cdot}-G^\top(t_1)H^{-1}\bm X][\ind{\cdot}-G^\top(t_2)H^{-1}\bm X]$ and use (i) together with
\begin{align*}
\mathbb E[u^2\ind{\bm X^\top\bm\beta_0\le t}\bm X]
&=\mathbb E[\ell(\bm X)\bm X\ind{\bm X^\top\bm\beta_0\le t}]=G(t,\bm\beta_0),\\
\mathbb E[u^2\bm X\bm X^\top]&=\mathbb E[\ell(\bm X)\bm X\bm X^\top]=H(\bm\beta_0),
\end{align*}
which follow from the two equivalent population forms of $G$ and $H$ under  correct specification (Lemma A.\ref{lemma.score.A1}); the two cross terms coincide because $G^\top(t_1)H^{-1}G(t_2)$ is a scalar, and the quadratic term cancels against one of them. Part (iii) follows by taking $t_1=t_2=t$ and noting that $G^\top H^{-1}G\ge0$ with equality if and only if $G=0$; the ``for all $(t_1,t_2)$'' statement follows by taking $t_1=t_2$. Part (iv) is
$\mathbb E[u^2\ind{\cdot}\bm X^\top]-G^\top(t)H^{-1}\mathbb E[u^2\bm X\bm X^\top]=G^\top(t)-G^\top(t)=0$. $\square$

\begin{lemma}\label{lem:uniform-continuity}
	Assume that $\bm X_1$ has a component whose conditional density given the remaining components is bounded by $\bar f<\infty$, that $\mathbb E\|\bm X_1\|<\infty$, and that the coefficient of this component at $\bm\beta^\dagger\in\mathcal B$ is nonzero, say $\beta^\dagger_1\neq0$. Then there exist $\varepsilon>0$ and $C<\infty$ such that
	\[
	\sup_{\|\bm\beta-\bm\beta^\dagger\|\le\varepsilon}\ \sup_{t\in\mathbb R}\ \mathbb P\bigl(|\bm X_1^\top\bm\beta-t|\le\gamma\bigr)\le C\gamma
	\qquad\text{for all }\gamma>0.
	\]
	Moreover, for all $\bm\beta_1,\bm\beta_2$ with $\|\bm\beta_j-\bm\beta^\dagger\|\le\varepsilon$ and all $t\in\mathbb R$,
	\[\mathbb P\Bigl(\bigl\{\bm X_1^\top\bm\beta_1\le t<\bm X_1^\top\bm\beta_2\bigr\}\cup\bigl\{\bm X_1^\top\bm\beta_2\le t<\bm X_1^\top\bm\beta_1\bigr\}\Bigr)\le C\,\|\bm\beta_1-\bm\beta_2\|.\]
\end{lemma}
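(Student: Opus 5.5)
The plan is to isolate the distinguished component and condition on the others. Write $\bm X_1=(X_{11},\bm X_{1,-1}^\top)^\top$ with $X_{11}$ the component whose conditional density $f(\cdot\mid\bm X_{1,-1})$ is bounded by $\bar f$, and likewise $\bm\beta=(\beta_1,\bm\beta_{-1}^\top)^\top$. Choose $\varepsilon=|\beta_1^\dagger|/2$. Then every $\bm\beta$ with $\|\bm\beta-\bm\beta^\dagger\|\le\varepsilon$ satisfies $|\beta_1|\ge|\beta_1^\dagger|/2>0$. Everything reduces to one-dimensional statements about $X_{11}$ given $\bm X_{1,-1}$, followed by integrating out $\bm X_{1,-1}$.

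\emph{First bound.} On $\{|\bm X_1^\top\bm\beta-t|\le\gamma\}$, solving for $X_{11}$ gives
\[
X_{11}\in\Bigl[\bigl(t-\bm X_{1,-1}^\top\bm\beta_{-1}-\gamma\bigr)/\beta_1,\ \bigl(t-\bm X_{1,-1}^\top\bm\beta_{-1}+\gamma\bigr)/\beta_1\Bigr]
\]
(with the endpoints swapped if $\beta_1<0$). This is an interval of length $2\gamma/|\beta_1|$ that depends only on $\bm X_{1,-1}$. Its conditional probability is therefore at most $2\bar f\gamma/|\beta_1|\le 4\bar f\gamma/|\beta_1^\dagger|$. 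Taking expectations over $\bm X_{1,-1}$ gives the first claim with $C=4\bar f/|\beta_1^\dagger|$, uniformly in $t$ and in $\bm\beta$ in the ball.

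\emph{Second bound.} On the symmetric-difference event, $t$ lies between $\bm X_1^\top\bm\beta_1$ and $\bm X_1^\top\bm\beta_2$. Hence $|\bm X_1^\top\bm\beta_1-t|\le|\bm X_1^\top(\bm\beta_1-\bm\beta_2)|\le\|\bm X_1\|\,\|\bm\beta_1-\bm\beta_2\|$. The random radius $\|\bm X_1\|\Delta$, with $\Delta=\|\bm\beta_1-\bm\beta_2\|$, involves $X_{11}$ itself, so the conditioning argument cannot be applied directly.

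This is the main obstacle. The way around it is to split the difference into its two parts:
\[
\bm X_1^\top(\bm\beta_1-\bm\beta_2)=X_{11}(\beta_{11}-\beta_{21})+\bm X_{1,-1}^\top(\bm\beta_{1,-1}-\bm\beta_{2,-1}).
\]

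\emph{Case $\beta_{11}=\beta_{21}$.} The radius is bounded by $\|\bm X_{1,-1}\|\Delta$, which is measurable with respect to the conditioning variables. The conditional argument gives the bound $\mathbb E\bigl[2\bar f\|\bm X_{1,-1}\|\Delta/|\beta_{11}|\bigr]\le C\Delta$, using $\mathbb E\|\bm X_1\|<\infty$.

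\emph{General case.} Given $\bm X_{1,-1}$, the event says that $X_{11}$ lies between the two points $s_j=(t-\bm X_{1,-1}^\top\bm\beta_{j,-1})/\beta_{j1}$, $j=1,2$. So its conditional probability is at most $\bar f|s_1-s_2|$. A direct computation gives
\[
|s_1-s_2|\le\frac{\|\bm X_{1,-1}\|\Delta}{|\beta_{11}|}+\bigl|t-\bm X_{1,-1}^\top\bm\beta_{2,-1}\bigr|\,\frac{|\beta_{11}-\beta_{21}|}{|\beta_{11}\beta_{21}|}.
\]
The second term grows with $|t|$, so uniformity in $t$ is the delicate point. I would handle it by a truncation. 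If $|t-\bm X_{1,-1}^\top\bm\beta_{2,-1}|>M$, note that the event also forces $|X_{11}|\ge\bigl(|t-\bm X_{1,-1}^\top\bm\beta_{2,-1}|-\|\bm X_{1,-1}\|\Delta\bigr)/|\beta_{21}|$. Markov's inequality, together with $|X_{11}|\,|\beta_{11}-\beta_{21}|\le\|\bm X_1\|\Delta$, then bounds that contribution by a multiple of $\Delta\,\mathbb E\|\bm X_1\|$. Equivalently, one can directly bound the conditional probability by
\[
\bar f\cdot\frac{2\Delta\bigl(|X_{11}|+\|\bm X_{1,-1}\|\bigr)}{|\beta_{\min}|},
\]
evaluated at the location where the crossing occurs, and integrate using $\mathbb E\|\bm X_1\|<\infty$.

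Collecting the two contributions gives $C\,\|\bm\beta_1-\bm\beta_2\|$ with $C$ depending only on $\bar f$, $|\beta_1^\dagger|$ and $\mathbb E\|\bm X_1\|$. Finally, enlarge $C$ if necessary so that a single constant serves both displays.
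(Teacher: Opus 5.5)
Your first bound, and the second bound in the case $\beta_{11}=\beta_{21}$, are correct and are exactly the paper's conditioning argument. The gap is the general case $\beta_{11}\neq\beta_{21}$, precisely at the uniformity in $t$ that you flag.

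\textbf{Where the general case fails.} Markov's inequality bounds the contribution of $\{|t-\bm X_{1,-1}^\top\bm\beta_{2,-1}|>M\}$ by a multiple of $\mathbb E\|\bm X_1\|/M$, not of $\Delta\,\mathbb E\|\bm X_1\|$. The inequality $|X_{11}|\,|\beta_{11}-\beta_{21}|\le\|\bm X_1\|\Delta$ does nothing to turn $1/M$ into $\Delta$. Making this piece $O(\Delta)$ requires $M\gtrsim\Delta^{-1}$, and then the complementary piece $\bar f\,M\Delta/|\beta_{11}\beta_{21}|$ is of order one. The best choice, $M=\Delta^{-1/2}$, yields only $C\Delta^{1/2}$.

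Your ``equivalent'' bound is the same error in another form. Given $\bm X_{1,-1}$, the interval containing $X_{11}$ is determined by $(t,\bm X_{1,-1})$, so a conditional-probability bound cannot contain the random $|X_{11}|$. Replacing $|X_{11}|$ by the crossing location $|s_2|=|t-\bm X_{1,-1}^\top\bm\beta_{2,-1}|/|\beta_{21}|$ gives a quantity whose expectation grows linearly in $|t|$, so $\mathbb E\|\bm X_1\|<\infty$ no longer applies.

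\textbf{The second display is false as stated.} No repair is possible under the stated hypotheses. Take $p=1$, $\beta^\dagger=1$, and let $X_1$ have density $1$ on $[0,2/3]\cup\bigcup_{k\ge1}[2^k-4^{-k},2^k]$ and $0$ elsewhere. Then $\bar f=1$ and $\mathbb E|X_1|<\infty$. Set $\bm\beta_1=1$, $\bm\beta_2=2^k/(2^k-4^{-k})$ and $t=2^k$.
\begin{itemize}
\item The symmetric-difference event is $\{2^k-4^{-k}<X_1\le2^k\}$, with probability $4^{-k}$.
\item Meanwhile $\|\bm\beta_1-\bm\beta_2\|=4^{-k}/(2^k-4^{-k})$, so the ratio is $2^k-4^{-k}$, which is unbounded.
\item Bumps of width $2^{-mk}$ give any finite moment, so (A3) does not rescue the claim. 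Adding an intercept does not either.
\end{itemize}

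The paper's own proof contains exactly the hole you spotted. It bounds the conditional probability given $\bm X_{-1}$ by $2\bar f\|\bm X_1\|\,\|\bm\beta_1-\bm\beta_2\|/|\beta_1|$, which depends on the very component being integrated out.

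\textbf{What your ingredients do prove.}
\begin{itemize}
\item The uniform H\"older bound $C\|\bm\beta_1-\bm\beta_2\|^{1/2}$, via your truncation with $M=\Delta^{-1/2}$. This suffices wherever the appendix only needs a modulus tending to zero (continuity into $L^2$, equicontinuity), but not for its Lipschitz-based rate claims.
\item The bound $C(1+|t|)\|\bm\beta_1-\bm\beta_2\|$, directly from your estimate of $|s_1-s_2|$.
\item The Lipschitz bound itself, under an extra tail condition on the conditional density, e.g.\ $\sup_s(1+|s|)f(s\mid\bm x_{-1})\le K$.
\end{itemize}
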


\noindent{\bf Proof of Lemma A.\ref{lem:uniform-continuity}} Relabel the components so that $\beta^\dagger_1\neq0$ and write $\bm X_1=(X_1,\bm X_{-1})$ with $\bm\beta=(\beta_1,\bm\beta_{-1})$. Choose $\varepsilon\le|\beta^\dagger_1|/2$, so that $|\beta_1|\ge|\beta^\dagger_1|/2$ for all $\|\bm\beta-\bm\beta^\dagger\|\le\varepsilon$. Conditioning on $\bm X_{-1}$ and using the bounded conditional density,
\[
\mathbb P\bigl(|\bm X_1^\top\bm\beta-t|\le\gamma\bigr)
=\mathbb E\Bigl[\mathbb P\bigl(|X_1\beta_1+\bm X_{-1}^\top\bm\beta_{-1}-t|\le\gamma\,\big|\,\bm X_{-1}\bigr)\Bigr]
\le \frac{2\gamma}{|\beta_1|}\,\bar f\wedge1
\le \frac{4\bar f}{|\beta^\dagger_1|}\,\gamma\wedge1,
\]
uniformly in $t$ and in $\|\bm\beta-\bm\beta^\dagger\|\le\varepsilon$. The same conditioning argument gives the symmetric-difference bound: on that event the index lies within distance $|\bm X_1^\top(\bm\beta_1-\bm\beta_2)|\le\|\bm X_1\|\,\|\bm\beta_1-\bm\beta_2\|$ of the threshold, so its conditional probability given $\bm X_{-1}$ is at most $2\bar f\|\bm X_1\|\,\|\bm\beta_1-\bm\beta_2\|/|\beta_1|$; taking expectations and using $|\beta_1|\ge|\beta^\dagger_1|/2$ and $\mathbb E\|\bm X_1\|<\infty$ gives the claim with a constant proportional to $\bar f\,\mathbb E\|\bm X_1\|/|\beta^\dagger_1|$. $\square$

\begin{lemma}\label{lem:scale-calc}
	Suppose the data are generated by the correctly specified Tobit model $Y^*=\bm X^\top\bm\beta_0+\sigma e$ with $e\sim N(0,1)$, $\sigma>0$, and let $\eta=\bm X^\top\bm\beta_0$. Then
	\[
	\mathbb E[u(\bm W,\bm\beta_0)|\bm X]=\sigma\,\phi\!\Bigl(\frac{\eta}{\sigma}\Bigr)
	-\Phi\!\Bigl(-\frac{\eta}{\sigma}\Bigr)\frac{\phi(\eta)}{\Phi(-\eta)},
	\]
	which reduces to $(\sigma-1)\phi(\eta)$ when $\eta=0$.
\end{lemma}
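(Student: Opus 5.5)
The plan is a direct conditional computation. Fix $\bm X$ and write $\eta=\bm X^\top\bm\beta_0$. Under the stated data-generating process, $Y^*=\eta+\sigma e$ with $e\sim N(0,1)$ independent of $\bm X$. Hence
\[
d=I(e>-\eta/\sigma),\qquad \mathbb E[d|\bm X]=\Phi(\eta/\sigma),\qquad 1-\mathbb E[d|\bm X]=\Phi(-\eta/\sigma).
\]
The score $u(\bm W,\bm\beta_0)=d(Y-\eta)-(1-d)\lambda(-\eta)$ is the $\sigma=1$ mark. Its censored-branch weight $\lambda(-\eta)$ is evaluated at the unscaled index and is therefore a deterministic function of $\bm X$. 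I split the conditional expectation into the uncensored and censored parts, exactly as in the verification of \eqref{eqn:SpecTestP} in Section 2.1, but now keeping track of $\sigma$.

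For the uncensored part, on $\{d=1\}$ we have $Y-\eta=\sigma e$, so
\[
\mathbb E[d(Y-\eta)|\bm X]=\sigma\int_{-\eta/\sigma}^\infty s\phi(s)\,\text{d}s=\sigma\phi(-\eta/\sigma)=\sigma\phi(\eta/\sigma).
\]
This uses $\int_a^\infty s\phi(s)\,\text{d}s=\phi(a)$, since $\phi'(s)=-s\phi(s)$, together with the symmetry of $\phi$.

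For the censored part, $\lambda(-\eta)=\phi(-\eta)/\Phi(-\eta)=\phi(\eta)/\Phi(-\eta)$ is $\bm X$-measurable. Therefore
\[
\mathbb E[(1-d)\lambda(-\eta)|\bm X]=\Phi(-\eta/\sigma)\,\frac{\phi(\eta)}{\Phi(-\eta)}.
\]
Subtracting the censored part from the uncensored part yields the displayed formula. The integrability needed to condition is supplied by (A3).

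For the special case, set $\eta=0$. Then $\phi(0/\sigma)=\phi(0)$ and $\Phi(0)/\Phi(0)=1$, so the expression becomes $\sigma\phi(0)-\phi(0)=(\sigma-1)\phi(0)=(\sigma-1)\phi(\eta)$. I would also note the consistency check that $\sigma=1$ gives zero, recovering the null computation of Section 2.1.

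There is no real obstacle here. The only point requiring care is that the truncation point in the uncensored integral is $-\eta/\sigma$ while the Mills-ratio weight uses $-\eta$; these mismatched scalings are exactly what produce the nonzero deviation.
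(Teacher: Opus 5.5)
Your proposal is correct and follows the paper's own proof: split $u$ into the uncensored part $\mathbb E[\sigma e\,I(e>-\eta/\sigma)\,|\,\bm X]=\sigma\phi(\eta/\sigma)$ and the censored part $\Phi(-\eta/\sigma)\lambda(-\eta)$, using $\lambda(-\eta)=\phi(\eta)/\Phi(-\eta)$, then set $\eta=0$. Appealing to (A3) is unnecessary, since the conditional computation is explicit once $e$ is independent of $\bm X$, but it does no harm.
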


\noindent{\bf Proof of Lemma A.\ref{lem:scale-calc}} Here $d=\ind{e>-\eta/\sigma}$ and, on $\{d=1\}$, $Y-\eta=\sigma e$. Therefore
\begin{align*}
\mathbb E[u(\bm W,\bm\beta_0)|\bm X]
&=\mathbb E\bigl[\sigma e\,\ind{e>-\eta/\sigma}\,\big|\,\bm X\bigr]-\Phi(-\eta/\sigma)\lambda(-\eta)\\
&=\sigma\phi\!\Bigl(\frac{\eta}{\sigma}\Bigr)-\Phi\!\Bigl(-\frac{\eta}{\sigma}\Bigr)\frac{\phi(\eta)}{\Phi(-\eta)},
\end{align*}
using $\mathbb E[e\ind{e>a}]=\phi(a)$ and $\lambda(-\eta)=\phi(\eta)/\Phi(-\eta)$. At $\eta=0$ this equals $\sigma\phi(0)-\phi(0)=(\sigma-1)\phi(0)$. $\square$

\begin{lemma}\label{lem:twopar-id}
	Under correct specification with parameter $\bm\theta_0=(\bm\beta_0^\top,\sigma_0)^\top$ and the moment conditions of the two-parameter version, with $z=(Y-\bm X^\top\bm\beta_0)/\sigma_0$ and $z'=-\bm X^\top\bm\beta_0/\sigma_0$:
	\begin{enumerate}[label=(\roman*),leftmargin=2.2em]
		\item $\mathbb E[m^2(\bm W,\bm\theta_0)|\bm X]=\mathbb E[d-(1-d)\dot\lambda(z')|\bm X]$ a.s.;
		\item $\mathbb E[m(\bm W,\bm\theta_0)\bm s(\bm W,\bm\theta_0)^\top|\bm X]=-\mathbb E[\partial_{\bm\theta}m(\bm W,\bm\theta_0)|\bm X]^\top$ a.s.;
		\item $\mathbb E[\Sigma(t_1)\Sigma(t_2)]=\mathbb E[m^2(\bm W,\bm\theta_0)I(\bm X^\top\bm\beta_0\le t_1)I(\bm X^\top\bm\beta_0\le t_2)]-\mathbb G^\top(t_1,\bm\theta_0)H^{-1}(\bm\theta_0)\mathbb G(t_2,\bm\theta_0)$;
		\item $\mathbb E[\Sigma(t)^2]\le\mathbb E[m^2(\bm W,\bm\theta_0)I(\bm X^\top\bm\beta_0\le t)]$.
	\end{enumerate}
\end{lemma}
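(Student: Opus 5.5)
The plan is to derive all four identities from the conditional likelihood structure of the two-parameter Tobit model, exactly as Lemma A.\ref{lem:varid} did in the known-scale case. The one new ingredient is a generalized information equality for the mark $m$.

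Fix $\bm X$ and write $z'=-\bm X^\top\bm\beta_0/\sigma_0$. Under correct specification, $d=I(e>z')$ with $e\sim N(0,1)$, and on $\{d=1\}$ we have $z=e$.

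For (i), I would compute $\mathbb E[m^2\mid\bm X]$ directly. Since $d(1-d)=0$,
\[m^2=d z^2+(1-d)\lambda^2(z').\]
Using $\mathbb E[e^2 I(e>z')]=\Phi(-z')+z'\phi(z')$ and $\mathbb P(d=0\mid\bm X)=\Phi(z')$, this gives
\[\mathbb E[m^2\mid\bm X]=\Phi(-z')+z'\phi(z')+\phi(z')^2/\Phi(z').\]
The Riccati identity $\dot\lambda(z')=-z'\lambda(z')-\lambda^2(z')$ gives
\[-\Phi(z')\dot\lambda(z')=z'\phi(z')+\phi(z')^2/\Phi(z').\]
Hence the right-hand side equals $\mathbb E[d-(1-d)\dot\lambda(z')\mid\bm X]$. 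This is the computation of Lemma A.\ref{lem:varid}(i) with $\eta$ replaced by $-z'$.

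For (ii), I would avoid block-by-block algebra and use a score argument. Let $f_{\bm\theta}(y\mid\bm X)$ be the conditional law of $Y$ given $\bm X$ under $\bm\theta$: a point mass $\Phi(-\bm X^\top\bm\beta/\sigma)$ at $0$ plus the density $\sigma^{-1}\phi((y-\bm X^\top\bm\beta)/\sigma)$ on $(0,\infty)$. Then $\bm s=\partial_{\bm\theta}\log f_{\bm\theta}$. The computation in Section 2.1, carried out at a general $\bm\theta$, shows
\[\int m(\bm W,\bm\theta)\,f_{\bm\theta}(dy\mid\bm X)=0\quad\text{for every }\bm\theta\text{ near }\bm\theta_0.\]
Differentiating this identity in $\bm\theta$ under the integral sign, which (A5) and $\sigma\ge\sigma_{\min}$ justify by dominated convergence, gives
\[\mathbb E[\partial_{\bm\theta}m\mid\bm X]+\mathbb E[m\,\bm s\mid\bm X]=0,\]
which is (ii). As a consistency check, the $\bm\beta$-block of (ii) reads $\mathbb E[m^2\mid\bm X]\bm X/\sigma_0=\bm X\,\mathbb E[d-(1-d)\dot\lambda\mid\bm X]/\sigma_0$, which agrees with (i). The same differentiation applied to $\int \bm s\, f_{\bm\theta}=0$ gives the information equality
\[\mathbb E[\bm s\bm s^\top]=\mathbb E[-\partial_{\bm\theta}\bm s]=H(\bm\theta_0),\]
which is needed below.

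For (iii), I would expand $\Sigma(t_1)\Sigma(t_2)$, writing $I_j=I(\bm X^\top\bm\beta_0\le t_j)$, $\mathbb G_j=\mathbb G(t_j,\bm\theta_0)$ and $H=H(\bm\theta_0)$. By the tower property and (ii),
\[\mathbb E[m\,I_j\,\bm s]=\mathbb E\bigl[I_j\,\mathbb E[m\bm s\mid\bm X]\bigr]=\mathbb E[-\partial_{\bm\theta}m\,I_j]=\mathbb G_j.\]
Each cross term therefore equals $-\mathbb G_1^\top H^{-1}\mathbb G_2$, because it is a scalar and $H$ is symmetric. The quadratic term is $\mathbb G_1^\top H^{-1}\mathbb E[\bm s\bm s^\top]H^{-1}\mathbb G_2=\mathbb G_1^\top H^{-1}\mathbb G_2$ by the information equality. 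Summing the three terms gives the stated kernel. Part (iv) then follows from (iii) with $t_1=t_2=t$, since $H^{-1}$ is positive definite and so $\mathbb G^\top H^{-1}\mathbb G\ge0$.

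The main obstacle is rigor in (ii): justifying differentiation under the integral for the mixed discrete–continuous conditional law. The point mass at $0$ depends on $\bm\theta$ through $\Phi(-\bm X^\top\bm\beta/\sigma)$, so both pieces must be differentiated. I would handle this by writing the integral explicitly as the sum of the censored term and the uncensored integral over $(0,\infty)$. The uncensored integrand has derivatives dominated by polynomials in $|z|$ and $\|\bm X\|$, which are integrable by $\mathbb E|z|^4<\infty$, (A3) and compactness of $\Theta$. If that route becomes messy, the fallback is to verify (ii) by direct Gaussian-integral computation: the $\sigma$-component needs $\mathbb E[z^3 I(z>z')]$, and the identities are closed form.
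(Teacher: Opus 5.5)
Your proposal is correct. For (i), (iii) and (iv) it coincides with the paper's argument: the truncated-Gaussian second moment plus the Riccati identity for (i), then the expansion of $\Sigma(t_1)\Sigma(t_2)$ via the tower property, (ii), and $\mathbb E[\bm s\bm s^\top]=H(\bm\theta_0)$ for (iii)--(iv).

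The genuine difference is in (ii). The paper proves it block by block. The $\bm\beta$-block is read off from (i), since $u_\beta=\bm X m/\sigma_0$ and $\partial_{\bm\beta}m=-\bm X\{d-(1-d)\dot\lambda(z')\}/\sigma_0$. The $\sigma$-block is checked by direct computation: $\mathbb E[m\,u_\sigma|\bm X]$ and $-\mathbb E[\partial_\sigma m|\bm X]$ are both shown to equal $\sigma_0^{-1}\{\phi(\eta)(1+\eta^2)-\eta\phi(\eta)^2/\Phi(-\eta)\}$ with $\eta=-z'$, which is where $\mathbb E[e^3I(e>z')]=(z'^2+2)\phi(z')$ enters. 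The Bartlett identity is simply cited.

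You instead differentiate $\mathbb E_{\bm\theta}[m(\bm W,\bm\theta)|\bm X]\equiv0$ along the parametric family. This gives (ii) as a generalized information equality, and the same device gives the Bartlett identity.

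What each route buys:
\begin{itemize}
\item Your route explains why the identity holds and avoids the third-moment computation. It also works for any mark that is conditionally centered under every member of the family. Since it does not use (i), it even returns (i) as the $\bm\beta$-block on $\{\bm X\neq\bm 0\}$.
\item The paper's route is fully elementary and needs no interchange of limits.
\end{itemize}

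The interchange you flag as the main obstacle is in fact easy, and it needs neither (A5) nor any unconditional moment condition. For fixed $\bm X=\bm x$, take a neighbourhood of $\bm\theta_0$ with $\sigma$ bounded away from $0$ and $\infty$. On it, the uncensored integrand and its $\bm\theta$-derivatives are dominated uniformly by a polynomial in $y$ times a Gaussian kernel. The censored contribution $-\lambda(z')\Phi(z')$ is an explicit smooth function of $\bm\theta$. Dominated convergence therefore applies pointwise in $\bm x$.
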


\noindent{\bf Proof of Lemma A.\ref{lem:twopar-id}} Let $\eta=\bm X^\top\bm\beta_0/\sigma_0$ and $\lambda=\lambda(-\eta)$, so that $z'=-\eta$. Since $m=dz-(1-d)\lambda$ and $d(1-d)=0$, the computation of Lemma A.\ref{lem:varid}(i) gives
\[
\mathbb E[m^2|\bm X]=\Phi(\eta)-\eta\phi(\eta)+\frac{\phi(\eta)^2}{\Phi(-\eta)}
=\Phi(\eta)-\Phi(-\eta)\dot\lambda(-\eta)
=\mathbb E[d-(1-d)\dot\lambda(z')|\bm X],
\]
which is (i). For the $\beta$-block of (ii),
\[
\mathbb E[m\,u_\beta|\bm X]=\frac{\bm X}{\sigma_0}\mathbb E[m^2|\bm X]
=-\mathbb E[\partial_\beta m|\bm X],
\qquad
\partial_\beta m=-\frac{\bm X}{\sigma_0}\{d-(1-d)\dot\lambda(z')\},
\]
by (i). For the $\sigma$-block, $\partial_\sigma m=\frac1{\sigma_0}\{-dz+(1-d)z'\dot\lambda(z')\}$, and a direct computation using $\dot\lambda(-\eta)=\eta\lambda-\lambda^2$ gives both $\mathbb E[m\,u_\sigma|\bm X]$ and $-\mathbb E[\partial_\sigma m|\bm X]$ equal to
\[
\frac1{\sigma_0}\Bigl\{\phi(\eta)(1+\eta^2)-\eta\,\frac{\phi(\eta)^2}{\Phi(-\eta)}\Bigr\},
\]
which proves (ii). Part (iii) follows from (ii), the Bartlett identity $\mathbb E[\bm s(\bm W,\bm\theta_0)\bm s(\bm W,\bm\theta_0)^\top]=H(\bm\theta_0)$ and $\mathbb E[m(\bm W,\bm\theta_0)\bm s(\bm W,\bm\theta_0)^\top\ind{\bm X^\top\bm\beta_0\le t}]=\mathbb G^\top(t,\bm\theta_0)$, using the same expansion as in Lemma A.\ref{lem:varid}(ii). Part (iv) is the case $t_1=t_2$. $\square$

\noindent{\bf Proof of Theorem \ref{Thm.weakconvergence.stat}} 
{\em Step 1 (uniform representation).} By Lemma A.\ref{lemma.score.A2}, the feasible process admits the uniform decomposition
\begin{align*}
	U_n(t,\hat{\bm\beta}) = U_n(t,\bm\beta_0) - G^\top(t,\bm\beta_0)\sqrt{n}(\hat{\bm\beta}-\bm\beta_0) + o_p(1),
\end{align*}
uniformly in $t\in\mathbb R$, where the proof of Lemma A.\ref{lemma.score.A2} shows that the smooth (mean value) part contributes $-G^\top(t,\bm\beta_0)\sqrt{n}(\hat{\bm\beta}-\bm\beta_0)$ and the indicator-jump part is $o_p(1)$ uniformly in $t$ by the centered stochastic equicontinuity of Lemma A.\ref{lemma.donsker} together with $\mathbb E[u(\bm W,\bm\beta_0)|\bm X]=0$, which centers the relevant class. Substituting the influence expansion of Assumption (A4), $\sqrt{n}(\hat{\bm\beta}-\bm\beta_0)=H^{-1}(\bm\beta_0)\frac1{\sqrt n}\sum_{i=1}^n u(\bm W_i,\bm\beta_0)\bm X_i+o_p(1)$, we obtain the first-order representation
\begin{align*}
	U_n(t,\hat{\bm\beta}) &=  \frac{1}{\sqrt{n}} \sum_{i=1}^n u(\bm{W}_i,\bm{\beta}_0)\left[ I(\bm{X}_i^\top \bm{\beta}_0 \leq t) 
	-	G^\top(t,\bm{\beta}_0) H^{-1}(\bm{\beta}_0) \bm{X}_i \right]  +o_p(1)\\[2pt]
	&\overset{\triangle}{=} U_{n0}(t,\bm\beta_0) + o_p(1),
\end{align*}
uniformly in $t\in\mathbb R$; the terms $\pm G^\top(t,\bm\beta_0)\sqrt{n}(\hat{\bm\beta}-\bm\beta_0)$ cancel exactly.

{\em Step 2 (weak convergence).} Under $H_0$ (the conditional moment restriction \eqref{eqn:SpecTestP}), $\mathbb E[u(\bm W,\bm\beta_0)|\bm X]=0$ and hence
\begin{align*}
\mathbb E[\Sigma(t)] = \mathbb E[u(\bm W,\bm\beta_0)I(\bm X^\top\bm\beta_0\le t)] - G^\top(t,\bm\beta_0)H^{-1}(\bm\beta_0)\mathbb E[u(\bm W,\bm\beta_0)\bm X] = 0.
\end{align*}
The class $\{\Sigma(t):t\in\mathbb R\}$ decomposes as
\begin{align*}
\Sigma(t) = u(\bm W,\bm\beta_0)I(\bm X^\top\bm\beta_0\le t) - \left(H^{-1}(\bm\beta_0)G(t,\bm\beta_0)\right)^\top u(\bm W,\bm\beta_0)\bm X.
\end{align*}
The first summand is a member of the Donsker class of Lemma A.\ref{lemma.donsker} (with $\bm\beta$ fixed at $\bm\beta_0$); the second is a bounded continuous functional of the fixed square-integrable random vector $u(\bm W,\bm\beta_0)\bm X$ indexed by $t$, since $t\mapsto G(t,\bm\beta_0)$ is continuous and bounded under (A1)--(A3). Hence the class $\{\Sigma(t)\}$ is $\mathbb P$-Donsker with square-integrable envelope, and by the Donsker theorem \citep[Theorem 2.5.6]{Vaart1996Weak}, $U_{n0}(t,\bm\beta_0)\Rightarrow U_\infty(t,\bm\beta_0)$ in $\ell^\infty(\mathbb R)$, where $U_\infty$ is the centered Gaussian process with covariance function $\mathbb E[\Sigma(t_1)\Sigma(t_2)]$ (the finite-dimensional convergence follows from the multivariate CLT). Combining Steps 1-2 gives $U_n(t,\hat{\bm\beta})\Rightarrow U_\infty(t,\bm\beta_0)$ in $\ell^\infty(\mathbb R)$.  Finally, $\mathbb E[\Sigma(t_1)-\Sigma(t_2)]^2\le C|t_1-t_2|$ for a finite constant $C$ under (A1)--(A3), so the limiting Gaussian process admits a continuous version by the Kolmogorov--Chentsov criterion (the Gaussian moments upgrade the second-moment bound to the required power bound).

\noindent{\bf Proof of Corollary \ref{coro.weakconv.stat}}
	Since $t\mapsto U_n(t,\hat{\bm\beta})$ is a right-continuous step function with jumps exactly at the sample points $t=\bm X_j^\top\hat{\bm\beta}$, the maximum in the definition of $KS_n$ coincides exactly with the supremum of $|U_n(t,\hat{\bm\beta})|$ over $t\in\mathbb R$ for every $n$: $KS_n$ is the value of the continuous functional $f\mapsto\sup_{t\in\mathbb R}|f(t)|$ at $U_n(\cdot,\hat{\bm\beta})$. The weak convergence of the empirical process and the continuous mapping theorem therefore give the convergence in distribution of $KS_n$ to $\sup_{t\in\mathbb R}|U_\infty(t,\bm\beta_0)|$, the supremum being finite almost surely by the tightness of the limiting process.

\noindent{\bf Proof of Theorem \ref{thm.consistency.stat}} Under $H_1$, $\hat{\bm\beta}\overset{p}{\to}\bm\beta^*$. Write 
$$\bar U_n(t)=\frac1{\sqrt n}U_n(t,\hat{\bm\beta})=\frac1n\sum_{i=1}^n u(\bm W_i,\hat{\bm\beta})I(\bm X_i^\top\hat{\bm\beta}\le t).$$ 
By the uniform law of large numbers for the Glivenko--Cantelli class $\{u(\bm W,\bm\beta)I(\bm X^\top\bm\beta\le t):\bm\beta\in\mathcal B,\ t\in\mathbb R\}$ of Lemma A.\ref{lemma.donsker},
\[\sup_{\bm\beta\in\mathcal B,\,t\in\mathbb R}\left|\frac1n\sum_{i=1}^n u(\bm W_i,\bm\beta)I(\bm X_i^\top\bm\beta\le t)-\mathbb E[u(\bm W,\bm\beta)I(\bm X^\top\bm\beta\le t)]\right|=o_p(1),\]
and, since $\hat{\bm\beta}\overset{p}{\to}\bm\beta^*$ and the function $\bm\beta\mapsto\mathbb E[u(\bm W,\bm\beta)I(\bm X^\top\bm\beta\le t)]$ is continuous, uniformly in $t$,
\begin{align*}
\frac{1}{n} \sum_{i=1}^n u(\bm{W}_i,\hat{\bm\beta}) I(\bm{X}_i^\top \hat{\bm\beta} \leq t) \overset{p}{\to} \mathbb E[u(\bm W,\bm\beta^*)I(\bm X^\top\bm\beta^*\le t)] = \mu(t,\bm\beta^*).
\end{align*}
No PEU correction term appears in this limit: since the estimator is $\sqrt n$-consistent and the mean map $\bm\beta\mapsto\mathbb E[u(\bm W,\bm\beta)I(\bm X^\top\bm\beta\le t)]$ is continuous in $\bm\beta$ uniformly in $t$ (its smooth part is Lipschitz and its indicator-jump part is H\"older of order $1/2$ by (A1)), while the centered part is stochastically equicontinuous,  we have $\sup_{t\in\mathbb R}|\bar U_n(t,\hat{\bm\beta})-\bar U_n(t,\bm\beta^*)|=o_p(1)$, so the estimation error is annihilated by the $n^{-1/2}$ normalization; equivalently, the term $-G^\top(t,\bm\beta^*)H^{-1}(\bm\beta^*)\mathbb E[u(\bm W,\bm\beta^*)\bm X]$  vanishes identically because the pseudo-true value satisfies $\mathbb E[u(\bm W,\bm\beta^*)\bm X]=0$. Since $KS_n/\sqrt n=\max_{1\le j\le n}|\bar U_n(\bm X_j^\top\hat{\bm\beta})|$, the uniform convergence of $\bar U_n$ to $\mu(\cdot,\bm\beta^*)$ and the weak convergence of the empirical distribution of $\{\bm X_i^\top\hat{\bm\beta}\}$ to the (continuous) law of $\bm X^\top\bm\beta^*$ give
\begin{align*}
\frac{KS_n}{\sqrt n} \overset{p}{\to} \sup_{t\in\mathbb R}|\mu(t,\bm\beta^*)|.
\end{align*}
Under $H_1$, $\mu(\cdot,\bm\beta^*)$ need not be nonzero: because $H_1$ is the negation of the full conditional moment restriction, an alternative may have $\mathbb E[u(\bm W,\bm\beta^*)|\bm X^\top\bm\beta^*]=0$ a.s.\ while $H_0$ fails at every parameter value; in that case the drift vanishes, and the test has no power. If, on the other hand, the index restriction fails at $\bm\beta^*$, that is, $\mu(\cdot,\bm\beta^*)\not\equiv0$, then $c:=\sup_{t\in\mathbb R}|\mu(t,\bm\beta^*)|>0$, and $KS_n\to\infty$ in probability. Consequently, for any fixed critical value $c_\tau$ (in particular, the quantiles of $\sup_{t\in\mathbb R}|U_\infty(t,\bm\beta_0)|$, which are finite by the tightness of the limiting process), $\lim_{n\to\infty}\mathbb P(KS_n>c_\tau)=1$, and the test is consistent against every fixed alternative in $H_1$ for which the index projection of the score moment at the pseudo-true value is nonzero.

\noindent{\bf Proof of Theorem \ref{thm.localpower.stat}} Under $H_{1n}$, define the conditionally centered mark 
$$\tilde u(\bm W,\bm\beta_0)=u(\bm W,\bm\beta_0)-\frac{a(\bm X^\top\bm\beta_0)}{\sqrt n}.$$
Then $\mathbb E[\tilde u(\bm W,\bm\beta_0)|\bm X]=0$ by the definition of $H_{1n}$. The decomposition of Lemma A.\ref{lemma.score.A2} remains valid under $H_{1n}$: the indicator-jump part is handled by the centered stochastic equicontinuity (applied to the conditionally centered mark $\tilde u$ introduced above) together with the fact that the remaining mean term $\mathbb E[a(\bm X^\top\bm\beta_0)(I_{\bm\beta_1,t_1}-I_{\bm\beta_2,t_2})]$ is $o(1)$ as the two indices coalesce, and $\hat{\bm\beta}-\bm\beta_0=O_p(n^{-1/2})$ for the maximum likelihood estimator under local alternatives, so uniformly in $t$,
\begin{align*}
U_n(t,\hat{\bm\beta}) = U_n(t,\bm\beta_0) - G^\top(t,\bm\beta_0)\sqrt n(\hat{\bm\beta}-\bm\beta_0) + o_p(1).
\end{align*}
Moreover,
\begin{align*}
U_n(t,\bm\beta_0)=\frac1{\sqrt n}\sum_{i=1}^n\tilde u(\bm W_i,\bm\beta_0)I(\bm X_i^\top\bm\beta_0\le t)+\frac1n\sum_{i=1}^n a(\bm X_i^\top\bm\beta_0)I(\bm X_i^\top\bm\beta_0\le t),
\end{align*}
and under $H_{1n}$ the influence expansion (A4) holds with
\begin{align*}
\sqrt n(\hat{\bm\beta}-\bm\beta_0) = H^{-1}(\bm\beta_0)\left[\frac1{\sqrt n}\sum_{i=1}^n\tilde u(\bm W_i,\bm\beta_0)\bm X_i + \mathbb E[a(\bm X^\top\bm\beta_0)\bm X]\right] + o_p(1),
\end{align*}
since $\frac1n\sum_i a(\bm X_i^\top\bm\beta_0)\bm X_i\overset{p}{\to}\mathbb E[a(\bm X^\top\bm\beta_0)\bm X]$. Substituting and collecting terms,
\begin{align*}
U_n(t,\hat{\bm\beta}) &= \frac1{\sqrt n}\sum_{i=1}^n \tilde u(\bm W_i,\bm\beta_0)\Bigl[I(\bm X_i^\top\bm\beta_0\le t)\\[2pt]
&\qquad\qquad - G^\top(t,\bm\beta_0)H^{-1}(\bm\beta_0)\bm X_i\Bigr] + D(t,\bm\beta_0) + o_p(1),
\end{align*}
uniformly in $t$, where
\begin{align*}
D(t,\bm\beta_0) &= \mathbb E[a(\bm X^\top\bm\beta_0)I(\bm X^\top\bm\beta_0\le t)]\\[2pt]
&\quad - G^\top(t,\bm\beta_0)H^{-1}(\bm\beta_0)\mathbb E[a(\bm X^\top\bm\beta_0)\bm X].
\end{align*}
The centered part converges weakly to $U_\infty(t,\bm\beta_0)$ by the Donsker argument of Theorem \ref{Thm.weakconvergence.stat} (with $u$ replaced by the conditionally centered mark $\tilde u$; since $H_{1n}$ is a contiguous perturbation of the null model, $\mathbb E[\tilde u^2(\bm W,\bm\beta_0)|\bm X]=\mathbb E[u^2(\bm W,\bm\beta_0)|\bm X]-n^{-1}a(\bm X^\top\bm\beta_0)^2$ differs from its null value by $O(n^{-1})$, so the covariance kernel converges to $\mathbb E[\Sigma(t_1)\Sigma(t_2)]$ by dominated convergence under $\mathbb E a^2<\infty$), and the drift converges uniformly by the law of large numbers for the class $\{a(\bm X^\top\bm\beta_0)I(\bm X^\top\bm\beta_0\le t)\}$, which is Glivenko--Cantelli by the continuity and envelope conditions imposed on $a$ below \eqref{eqn:local.alt}. We note that the mean-value coefficient of the decomposition, evaluated under $H_{1n}$, differs from $-G(t,\bm\beta_0)$ by $O(n^{-1/2})$: the local sequence is a contiguous $O(n^{-1/2})$-perturbation of the null model, so the conditional law of $(d,Y)$ given $\bm X$ (and hence the weight $\mathbb E[d|\bm X]$) is perturbed at most at this order; this discrepancy contributes $O_p(n^{-1/2})$ after multiplication by $\sqrt n(\hat{\bm\beta}-\bm\beta_0)=O_p(1)$ and therefore does not alter the drift $D(t,\bm\beta_0)$. Hence $U_n(t,\hat{\bm\beta})\Rightarrow U_\infty(t,\bm\beta_0)+D(t,\bm\beta_0)$ in $\ell^\infty(\mathbb R)$, and $KS_n\overset{\mathcal D}{\to}\sup_{t\in\mathbb R}|U_\infty(t,\bm\beta_0)+D(t,\bm\beta_0)|$ by the continuous mapping theorem.

\noindent{\bf Proof of Theorem \ref{thm.bootstrap.stat}}  Write
\begin{align*}
U_n^{\ast}(t,\hat{\bm\beta}) &= U_{n0}^{\ast}(t) + \mathcal R_n^*(t),\qquad
U_{n0}^{\ast}(t) = \frac{1}{\sqrt{n}}\sum_{i=1}^n V_i\, \Sigma_i(t),\\[2pt]
\Sigma_i(t) &= u(\bm W_i,\bm\beta_0) \left[I(\bm X_i^\top \bm\beta_0 \leq t) - G^\top(t,\bm\beta_0)H^{-1}(\bm\beta_0)\bm X_i\right],
\end{align*}
where $\mathcal R_n^*(t)$ collects the estimation and discretization errors.

{\em Step 1 ($\mathcal R_n^*$ is negligible, uniformly in $t$, conditionally on the data).} Conditional on the data, $\{V_i\}$ are i.i.d.\ with zero mean, so the multiplier sums below are $O_p(1)$ uniformly in $t$ by the conditional multiplier CLT \citep[Theorem 2.9.6]{Vaart1996Weak} applied to the Donsker classes of Lemma A.\ref{lemma.donsker}. We decompose $\mathcal R_n^*(t)$ as follows.
(a) {\em Correction coefficients:} the differences $G_n(t,\hat{\bm\beta})-G(t,\bm\beta_0)$ and $H_n^{-1}(\hat{\bm\beta})-H^{-1}(\bm\beta_0)$ are $o_p(1)$ uniformly in $t$ by Lemma A.\ref{lemma.score.A1}, and $\frac1{\sqrt n}\sum_i V_i u(\bm W_i,\bm\beta_0)\bm X_i=O_p(1)$; hence their contribution is $O_p(1)\cdot o_p(1)=o_p(1)$.
(b) {\em Smooth part in $\bm\beta$:} by the mean value theorem, $u(\bm W_i,\hat{\bm\beta})-u(\bm W_i,\bm\beta_0)=\partial_{\bm\beta}u(\bm W_i,\bar{\bm\beta}_i)^\top(\hat{\bm\beta}-\bm\beta_0)$ with mean-value points $\bar{\bm\beta}_i$ on the segments, and the corresponding multiplier sum (with the estimated correction coefficients) is $O_p(1)$ uniformly in $t$ (the class in Lemma A.\ref{lemma.donsker} with $u$ replaced by $\partial_{\bm\beta}u$), so this part is $O_p(1)\cdot O_p(n^{-1/2})=o_p(1)$.
(c) {\em Indicator jump:} $\frac1{\sqrt n}\sum_i V_i u(\bm W_i,\bm\beta_0)[I(\bm X_i^\top\hat{\bm\beta}\le t)-I(\bm X_i^\top\bm\beta_0\le t)]$ is $o_p(1)$ uniformly in $t$ by the conditional stochastic equicontinuity of the multiplier process on the Donsker class of Lemma A.\ref{lemma.donsker} (the argument of Lemma A.\ref{lemma.score.A2}, Step 2, applied conditionally on the data). Hence $\sup_t|\mathcal R_n^*(t)|=o_p(1)$ in conditional probability.

{\em Step 2 (conditional multiplier CLT).} Since $\{V_i\}$ satisfy (A7) and the class $\{\Sigma(t):t\in\mathbb R\}$ is $\mathbb P$-Donsker with square-integrable envelope (proof of Theorem \ref{Thm.weakconvergence.stat}, Step 2), the conditional multiplier CLT \citep[Theorem 2.9.6]{Vaart1996Weak} yields, conditional on the data, $U_{n0}^{\ast}\Rightarrow U_\infty(t,\bm\beta_0)$ in $\ell^\infty(\mathbb R)$, in probability.

{\em Step 3 (conclusion).} Combining Steps 1-2 and the continuous mapping theorem, $KS_n^{\ast}\overset{\mathcal D}{\to}\sup_{t\in\mathbb R}|U_\infty(t,\bm\beta_0)|$ conditional on the data, in probability, which is the limiting law of $KS_n$ by Corollary \ref{coro.weakconv.stat}; hence the bootstrap is valid. Under $H_{1n}$ the same argument applies verbatim: conditional on the data the bootstrap summands have zero conditional mean (as $\mathbb E[V_i]=0$), so the drift $D(t,\bm\beta_0)$ is not reproduced by the bootstrap, and the $n^{-1/2}a(\bm X_i^\top\bm\beta_0)$ component of $u(\bm W_i,\bm\beta_0)$ contributes only $n^{-1/2}\cdot O_p(1)=o_p(1)$; the bootstrap therefore still converges to the null limit $U_\infty(t,\bm\beta_0)$.

{\em Fixed alternative.} Under $H_1$, let $\bm\beta^*=\mathrm{plim}\,\hat{\bm\beta}$ and, for $t\in\mathbb R$, set $\Sigma_i^*(t)=u(\bm W_i,\bm\beta^*)[I(\bm X_i^\top\bm\beta^*\le t)-G^\top(t,\bm\beta^*)H^{-1}(\bm\beta^*)\bm X_i]$, with $G(\cdot,\bm\beta^*)$ and $H(\bm\beta^*)$ as in the statement. Replacing $(\bm\beta_0,G,H)$ by $(\bm\beta^*,G,H)$ in parts (a)-(c) of Step 1 gives
\[\sup_{t\in\mathbb R}\left|U_n^{\ast}(t,\hat{\bm\beta})-\frac1{\sqrt n}\sum_{i=1}^n V_i\,\Sigma_i^*(t)\right|=o_p(1)\]
in conditional probability. Indeed, $G_n(t,\hat{\bm\beta})\overset{p}{\to}G(t,\bm\beta^*)$ and $H_n(\hat{\bm\beta})\overset{p}{\to}H(\bm\beta^*)$ uniformly in $t$ by the Glivenko--Cantelli property of the class $\{[d-(1-d)\dot\lambda(-\bm X^\top\bm\beta)]\bm X I(\bm X^\top\bm\beta\le t)\}$ of Lemma A.\ref{lemma.donsker} together with $\hat{\bm\beta}\overset{p}{\to}\bm\beta^*$; here the limits are the random-weight forms, because the identity $\mathbb E[d|\bm X]=\Phi(\bm X^\top\bm\beta)$ used for the equivalent expressions in Lemma A.\ref{lemma.score.A1} holds only under  correct specification, and the remaining terms are handled as in parts (b)-(c) using $\hat{\bm\beta}-\bm\beta^*=O_p(n^{-1/2})$ and the Donsker property of the classes with $\partial_{\bm\beta}u$ and the indicators. Since the class $\{\Sigma^*(t;\cdot):t\in\mathbb R\}$ is Donsker by the bracketing argument of Lemma A.\ref{lemma.donsker} and
\[\frac1n\sum_{i=1}^n\Sigma_i^*(t_1)\Sigma_i^*(t_2)\overset{p}{\to}\mathbb E[\Sigma^*(t_1;\bm\beta^*)\Sigma^*(t_2;\bm\beta^*)],\]
the conditional multiplier CLT \citep[Theorem 2.9.6]{Vaart1996Weak} yields $U_n^{\ast}(t,\hat{\bm\beta})\Rightarrow U_\infty^{\ast}(t,\bm\beta^*)$ in $\ell^\infty(\mathbb R)$, in conditional probability. The continuous mapping theorem gives the limit of $KS_n^{\ast}$, and $KS_n\to\infty$ in probability under alternatives in $H_1$ with $\mu(\cdot,\bm\beta^*)\not\equiv0$ by Theorem \ref{thm.consistency.stat}; hence the bootstrap $p$-value converges to zero in probability and the test is consistent against every such alternative. $\square$

\noindent{\bf Proof of Theorem \ref{Thm.weakconvergence.twopar}} The proof is identical in structure to that of Theorem \ref{Thm.weakconvergence.stat} with $u(\bm W,\bm\beta)$, $\bm X$, and $(G,H)$ replaced by $m(\bm W,\bm\theta)$, the joint score $\bm s$, and $(\mathbb G,H)$. We spell out the two steps that differ. 

{\em Step 1 (uniform representation).} Writing 
$$U_n(t,\hat{\bm\theta})=\frac1{\sqrt n}\sum_i m(\bm W_i,\hat{\bm\theta})I(\bm X_i^\top\hat{\bm\beta}\le t),$$ 
the mean-value and equicontinuity arguments of Lemma A.\ref{lemma.score.A2} (applied to the mark $m$, which is Lipschitz in $\bm\theta$ with square-integrable constant under the assumed moment conditions and whose indicator-jump part is centered because $\mathbb E[m(\bm W,\bm\theta_0)|\bm X]=0$ under correct specification) give 
$$U_n(t,\hat{\bm\theta})=U_n(t,\bm\theta_0)-\mathbb G^\top(t,\bm\theta_0)\sqrt n(\hat{\bm\theta}-\bm\theta_0)+o_p(1)$$ uniformly in $t$. The score expansion 
$$\sqrt n(\hat{\bm\theta}-\bm\theta_0)=H^{-1}(\bm\theta_0)\frac1{\sqrt n}\sum_i\bm s(\bm W_i,\bm\theta_0)+o_p(1)$$ 
is the standard maximum-likelihood expansion for $(\bm\beta,\sigma)$ (with the information matrix $H$ of \eqref{eqn:two-param-H}). Combining the two expansions and canceling the PEU terms gives the first-order representation of the feasible process, uniformly in $t$:
\[U_n(t,\hat{\bm\theta}) = \frac1{\sqrt n}\sum_{i=1}^n\left[m(\bm W_i,\bm\theta_0)I(\bm X_i^\top\bm\beta_0\le t) - \mathbb G^\top(t,\bm\theta_0)H^{-1}(\bm\theta_0)\bm s(\bm W_i,\bm\theta_0)\right] + o_p(1).\]

{\em Step 2 (weak convergence).} Under correct specification $\mathbb E[\bm s(\bm W,\bm\theta_0)|\bm X]=0$ a.s., hence $\mathbb E[\Sigma(t)]=\mathbb E[mI]-\mathbb G^\top H^{-1}\mathbb E[\bm s]=0$. The class $\{\Sigma(t):t\in\mathbb R\}$ is $\mathbb P$-Donsker with square-integrable envelope by the bracketing argument of Lemma A.\ref{lemma.donsker} (the mark and the score are Lipschitz in $\bm\theta$, and the indicator class is VC). The Donsker theorem yields the weak convergence. $\square$

\noindent{\bf Proofs of Theorems \ref{thm.twopar.consistency}--\ref{thm.twopar.bootstrap}} These follow from the proofs of Theorems \ref{thm.consistency.stat}, \ref{thm.localpower.stat}, and \ref{thm.bootstrap.stat} with the substitutions above. Three points deserve mention. First, in the consistency theorem the drift is the mark moment $\mu(t,\bm\theta^*)=\mathbb E[m(\bm W,\bm\theta^*)I(\bm X^\top\bm\beta^*\le t)]$, exactly as in the proof of Theorem \ref{thm.consistency.stat}:  the estimation error is annihilated by the $n^{-1/2}$ normalization (by the same uniform-continuity and centered-equicontinuity argument as in the proof of Theorem \ref{thm.consistency.stat}), and no PEU correction survives; the uniform law of large numbers for the class $\{m(\bm W,\bm\theta)I(\bm X^\top\bm\beta\le t)\}$, together with $\hat{\bm\theta}\overset{p}{\to}\bm\theta^*$, gives the result. Second, the drift in the local-power theorem is obtained as follows. Under the local alternative write $m(\bm W_i,\bm\theta_0)=\tilde m_i+n^{-1/2}a(\bm X_i^\top\bm\beta_0)$ and $\bm s(\bm W_i,\bm\theta_0)=\tilde{\bm s}_i+n^{-1/2}\bm q(\bm X_i)$ with $\mathbb E[\tilde m_i|\bm X_i]=0$ and $\mathbb E[\tilde{\bm s}_i|\bm X_i]=0$. The influence expansion gives
\[\sqrt n(\hat{\bm\theta}-\bm\theta_0)=H^{-1}(\bm\theta_0)\Bigl[\frac1{\sqrt n}\sum_i\tilde{\bm s}_i+\mathbb E[\bm q(\bm X)]\Bigr]+o_p(1),\]
and substituting it into the uniform decomposition, the terms involving $n^{-1/2}\sum_i\tilde{\bm s}_i$ are absorbed by the first-order representation exactly as in the null case, leaving
\[U_n(t,\hat{\bm\theta})=\frac1{\sqrt n}\sum_i\Bigl[\tilde m_i I(\bm X_i^\top\bm\beta_0\le t)-\mathbb G^\top(t,\bm\theta_0)H^{-1}(\bm\theta_0)\tilde{\bm s}_i\Bigr]+D(t,\bm\theta_0)+o_p(1)\]
uniformly in $t$, where $D(t,\bm\theta_0)=\mathbb E[a(\bm X^\top\bm\beta_0)I(\bm X^\top\bm\beta_0\le t)]-\mathbb G^\top(t,\bm\theta_0)H^{-1}(\bm\theta_0)\mathbb E[\bm q(\bm X)]$. The centered part converges weakly to $U_\infty$, and the deterministic drift $D$ is not reproduced by the multiplier bootstrap (whose summands have conditional mean zero), exactly as in the one-parameter case. Third, the bootstrap proof of Theorem \ref{thm.bootstrap.stat} applies with the class of functions $m(\bm W,\bm\theta)I(\bm X^\top\bm\beta\le t)-\mathbb G^\top(t,\bm\theta)H^{-1}(\bm\theta)\bm s(\bm W,\bm\theta)$ in place of $\{\Sigma(t)\}$, which is Donsker by the bracketing argument of Lemma A.\ref{lemma.donsker} (the mark and the score are Lipschitz in $\bm\theta$ with square-integrable Lipschitz constants under the assumed moments). The fixed-alternative statement of the bootstrap theorem follows from the corresponding paragraph of the proof of Theorem \ref{thm.bootstrap.stat} with $(\bm\beta^*,G,H)$ replaced by $(\bm\theta^*,\mathbb G,H)$. As in the consistency case, under $H_1^{2p}$ the limits of the sample versions of $\mathbb G$ and $H$ are the random-weight forms evaluated at $\bm\theta^*$. $\square$



\end{spacing}

\end{document}